\def\ci{\perp\!\!\!\perp}
\newtheorem{mydef}{Definition}
\newtheorem{mythm}{Theorem}
\newtheorem{mylemma}{Lemma}
\theoremstyle{definition}
\DeclareMathOperator*{\argmin}{argmin} 
\title{Detecting causal associations in large nonlinear time series datasets
}
\author[1,2*]{Jakob Runge}
\author[2,3,4]{Peer Nowack}
\author[5]{Marlene Kretschmer}
\author[4]{Seth Flaxman}
\author[6,7]{Dino Sejdinovic}
\affil[1]{German Aerospace Center, Institute of Data Science, 07745 Jena, Germany}
\affil[2]{Grantham Institute, Imperial College, London SW7 2AZ, United Kingdom}
\affil[3]{Department of Physics, Blackett Laboratory, Imperial College, London SW7 2AZ, United Kingdom}
\affil[4]{Data Science Institute, Imperial College, London SW7 2AZ, United Kingdom}
\affil[5]{Potsdam Institute for Climate Impact Research, 14473 Potsdam, Germany}
\affil[6]{The Alan Turing Institute for Data Science, London NW1 3DB, United Kingdom}
\affil[7]{Department of Statistics, University of Oxford, Oxford OX1 3LB, United Kingdom}
\affil[*]{jakob.runge@dlr.de}
\begin{abstract}
Identifying causal relationships from observational time series data is a key problem in disciplines such as climate science or neuroscience, where experiments are often not possible. Data-driven causal inference is challenging since datasets are often high-dimensional and nonlinear with limited sample sizes. Here we introduce a novel method that flexibly combines linear or nonlinear conditional independence tests with a causal discovery algorithm that allows to reconstruct causal networks from large-scale time series datasets. We validate the method on a well-established climatic teleconnection connecting the tropical Pacific with extra-tropical temperatures and using large-scale synthetic datasets mimicking the typical properties of real data. The experiments demonstrate that our method outperforms alternative techniques in detection power from small to large-scale datasets and opens up entirely new possibilities to discover causal networks from time series across a range of research fields.
\end{abstract}
\begin{document}

\flushbottom
\maketitle
\thispagestyle{empty}

\section*{Introduction}  
\begin{figure*}[t]  
\centering
\includegraphics[width=.7\linewidth]{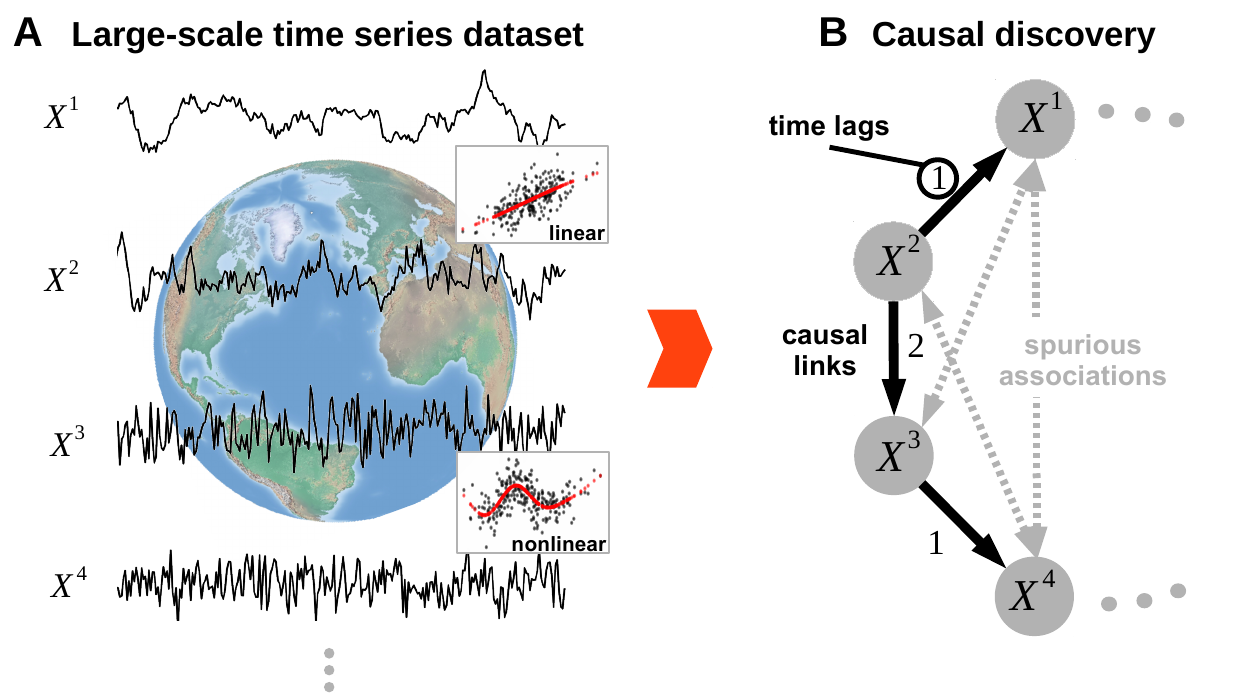}%
\caption{
Causal discovery problem.
Consider a large-scale time series dataset (panel \textbf{A}) from a complex system such as the Earth system of which we try to reconstruct the underlying \emph{causal} dependencies (panel \textbf{B}), accounting for linear and nonlinear dependencies and including their time lags (link labels). Pairwise correlations yield spurious associations due to common drivers (e.g., $X^1 \leftarrow X^2 \rightarrow X^3$) or transitive indirect paths (e.g., $X^2 \rightarrow X^3 \rightarrow X^4$). Causal discovery aims to unveil such spurious associations leading to reconstructed causal networks that are, therefore, much sparser than correlation networks.
}
\label{fig:problem_setting}
\end{figure*}

How do major climate modes such as the El Ni\~no Southern Oscillation (ENSO) or the North Atlantic Oscillation couple to remote regions via global teleconnections? Through which pathways do different brain regions interact? 
Identifying causal association networks of multiple variables is a key challenge in the analysis of such complex dynamical systems, especially since here interventional real experiments, the gold standard of scientific discovery, are often expensive, unethical, or practically impossible. In climate research, model simulations can help to discover causal mechanisms, but confidence in the results is limited due to the realism of the modeled physical processes, which is often not given \cite{Stocker2013}.
Therefore, there is an urgent need to reconstruct causal association networks from observational time series which has become more attractive since recent decades have seen an explosion in the availability of computational resources, cheap data storage, and automated observational data capture of many forms (satellite data, station-based observations, field site measurements \cite{Baldocchi2008}), as well as climate model output \cite{Stocker2013}. 

In a typical scenario in climate science a researcher has an hypothesis on the causal influence between two climatological processes given observed time series data. For example, she may be interested in the influence of ENSO on temperatures over North America. Suppose the time series show a clear correlation, suggesting a relationship between the two processes. In order to exclude other possible hypotheses that may explain such a correlation, she will then include other relevant variables. In the highly interconnected climate system there are many possible drivers she could test, quickly leading to high-dimensional causal discovery problems.

The goal in time series causal discovery from complex dynamical systems is to reliably reconstruct causal links including their time lags since, e.g., climatic teleconnections typically take days to months. The challenge lies in typically high-dimensional and strongly interdependent datasets comprising dozens to hundreds of variables where correlations in some cases arise due to direct causal effects, but also due to a plethora of other reasons, including autocorrelation within each time series, indirect links, and common drivers (Fig.~\ref{fig:problem_setting}). Ideally, a causal discovery method detects as many true causal relationships as possible (high detection power) and controls the number of false positives (incorrect link detections).

A major current approach in Earth data analysis\cite{Mosedale2006,Attanasio2012,Papagiannopoulou2017,McGraw2018}, but also in neuroscience\cite{Bullmore2009,Seth2015}, is to test time-lagged causal associations using autoregressive models in the framework of Granger causality \cite{Granger1969,Barnett2015}. If implemented using standard regression techniques, the high-dimensionality of typical datasets leads to very low detection power (``curse of dimensionality'') since sample sizes are often only on the order of few hundreds (e.g., for a monthly time resolution with 30 years of satellite data). This shortcoming leads to a dilemma that has limited applications of Granger causality mostly to bivariate analyses that cannot, however, account for indirect links and common drivers.

There are methods that can cope with high-dimensionality such as regularized regression techniques \cite{Tibshirani1996,Zou2006}, but mainly in the context of prediction and not causal discovery where assessing the significance of causal links is more important. An exception is Lasso regression\cite{Tibshirani1996} which also allows to discover \emph{active variables}.
Another approach with some recent applications in climate research\cite{Ebert-Uphoff2012b,Runge2014a,Kretschmer2016} are algorithms aimed specifically at causal discovery \cite{Spirtes1991,Pearl2000,Spirtes2000}, which remove redundant or irrelevant variables utilizing iterative independence and conditional independence testing.
However, both regularized regression and recent implementations of causal discovery algorithms do not deal well with the strong interdependencies due to the spatio-temporal nature of the variables as we show here. In particular, controlling false positives at a desired level is difficult for such methods \cite{Bach2008,Lockhart2014,Taylor2015}, and becomes even more challenging for nonlinear estimators. 
In summary, these problems lead to brittle causal network reconstructions and a more reliable methodology is required.

We present a causal discovery method suitable for moderately large time series datasets on the order of tens to hundreds of variables featuring linear as well as nonlinear, time-delayed dependencies given sample sizes of a few hundreds or more. Through analytical results and extensive numerical experiments we demonstrate that the proposed method has advantages over the current state-of-the-art approaches in dealing with high-dimensional interdependent time series datasets yielding reliable false positive control and higher detection power without significantly increasing computational demand. Our approach enables causal analyses among much more variables opening up new possibilities to more credibly reconstruct causal networks from time series in Earth system science, neuroscience, and many other fields.

\section*{Causal discovery}  
\begin{figure*}[t]  
\centering
\includegraphics[width=.7\linewidth]{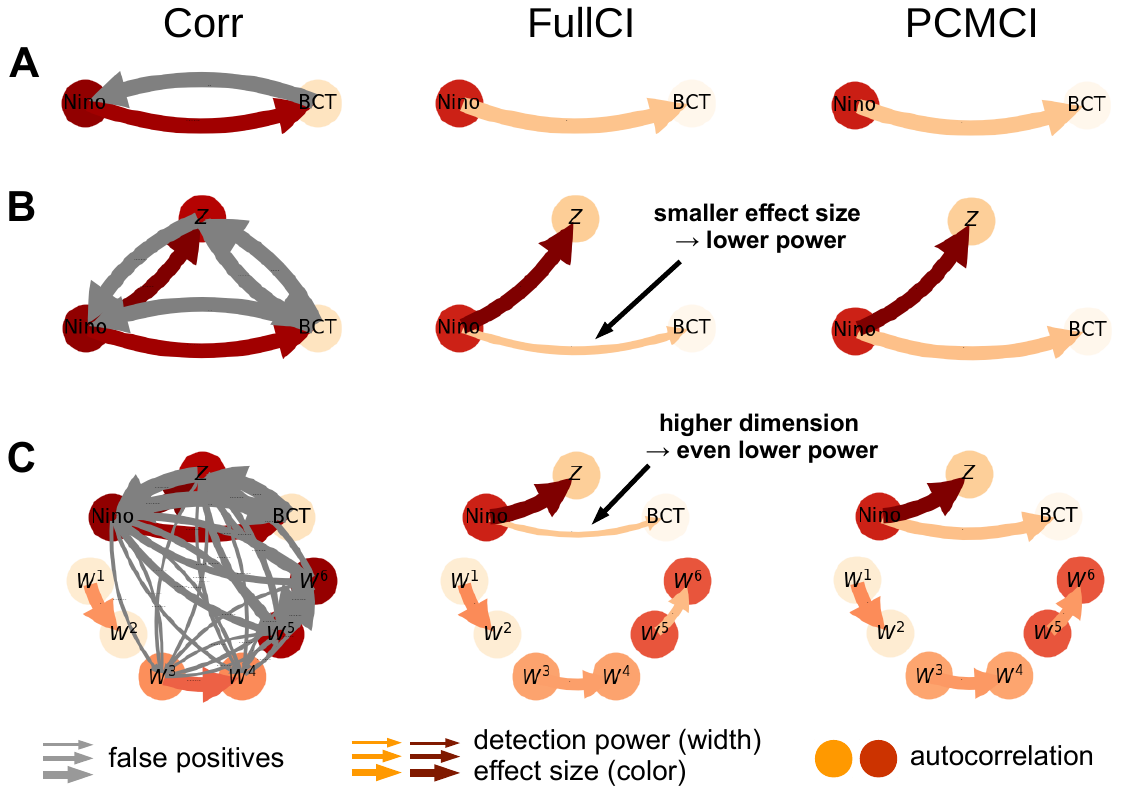}%
\caption{
Motivational climate example. Correlation, FullCI partial correlation, and PCMCI partial correlation between the monthly climate index Nino (3.4 region) \cite{Rayner2003} and land air temperature over British Columbia \cite{Jones2012} (\textbf{A}) for 1979--2017 ($T=468$ months), as well as artificial variables ($Z$ and $W^i$ in \textbf{B},\textbf{C}).
Node colors depict autocorrelation strength, edge colors the partial correlation effect size, and edge widths the detection rate estimated from $500$ realizations of the artificial variables $Z$ and $W^i$ at a significance level of 5\%. The maximum lag is $\tau_{\max}=6$. 
Correlation does not allow for a causal interpretation leading to spurious correlations BCT$\to$Nino (\textbf{A}). FullCI identifies the correct direction, but looses power due to smaller effect size and higher dimensionality if more variables are added (\textbf{B},\textbf{C}). PCMCI avoids conditioning on irrelevant variables leading to larger effect size, lower dimensionality, and, hence, more detection power. See Supplementary Fig.~\ref{fig:example_map} for more information.
%
}
\label{fig:method_motivation}
\end{figure*}
\subsection*{Motivating example from climate science} 
In the following, we illustrate the causal discovery problem on a well-known tropical-extratropical long-range teleconnection. We work out two main factors which lead the common autoregressive modeling approach to have low detection power: reduced effect size due to conditioning on irrelevant variables and high-dimensionality. 

Given a finite time series sample, every causal discovery method has to balance the trade-off between too many false positives (incorrect link detections) and too few true positives (correct link detections). A causality method ideally controls false positives at a pre-defined significance level (typically 5\%) and maximizes detection power.
The power of a method to detect a causal link depends on the available sample size, the significance level, the dimensionality of the problem (e.g., the number of coefficients in an autoregressive model), and effect size, which is the magnitude of the effect \emph{as measured by the test measure} (e.g., the partial correlation coefficient). Since sample size and significance level are usually fixed in the present context, a method's power can only be improved by reducing the dimensionality or increasing the effect size (or both).

Consider a typical causal discovery scenario in climate research (Fig.~\ref{fig:method_motivation}). We wish to test whether the observational data supports the hypothesis that tropical Pacific surface temperatures, as represented by the monthly Nino 3.4 index (see map and region in Supplementary Fig.~\ref{fig:example_map}, further referred to as Nino)  \cite{Rayner2003,Nowack2017}, causally affect extratropical land air temperatures over British Columbia (BCT) 1979--2017 ($T=468$ months). We chose this example since it is well-established and physically understood that atmospheric wave trains induced by increased sea-surface temperatures over the tropical Pacific can affect North American temperatures, but not the other way around\cite{Ropelewski1986,Shabbar1996,McGraw2018}. Thus, the ground truth here is ${\rm Nino}\to{\rm BCT}$ on the (intra-)seasonal time scale allowing us to validate causality methods.

We start with a time-lagged correlation analysis and find that both variables are correlated in both directions, that is, for both positive and negative lags (Fig.~\ref{fig:method_motivation}A, see Supplementary Fig.~\ref{fig:example_map} for lag functions), suggesting also an influence from BCT on Nino. The correlation ${\rm Nino}\to{\rm BCT}$ has an effect size of $\approx 0.3$ ($p<10^{-4}$) at a lag of two months. In the networks in Fig.~\ref{fig:method_motivation} the link colors denote effect sizes (grey links are spurious) and the node colors denote the autocorrelation strength.

Clearly, lagged correlation cannot be used to infer causal directionality, and not even the correct time lag of a coupling \cite{Runge2014a}. Hence, we now move to causal methods.
To test ${\rm Nino}\to{\rm BCT}$, the most straightforward approach then is to fit a linear autoregressive model of BCT on past lags of itself as well as Nino and test whether and which past coefficients of Nino are significantly different from zero. This is equivalent to a lag-specific version of Granger causality, but one can phrase this problem also more generally as testing for conditional independence between ${\rm Nino}_{t-\tau}$ and ${\rm BCT}_t$ conditional on (or controlling for) the common past $\mathbf{X}^-_t=({\rm Nino}_{t-1},\,{\rm BCT}_{t-1},\,\ldots)$, denoted ${\rm Nino}_{t-\tau} \ci {\rm BCT}_{t} ~|~ \mathbf{X}^-_{t}\setminus \{{\rm Nino}_{t-\tau}\}$, up to a maximum time lag $\tau_{\max}$. We call this approach full conditional independence testing (FullCI) and illustrate it in a linear partial correlation implementation for this example, that is, we test $\rho({\rm Nino}_{t-\tau}, {\rm BCT}_t|\mathbf{X}^-_t)\neq 0$, which is the effect size for FullCI, for different lags $\tau$.

Using a maximum time lag $\tau_{\max}=6$ months, we find a significant FullCI partial correlation for ${\rm Nino}\to{\rm BCT}$ at lag $2$ of $0.1$ ($p=0.037$) (Fig.~\ref{fig:method_motivation}A) and no significant association in the other direction. That is, the effect size of FullCI is strongly reduced compared to the correlation when taking into account the past. However, as mentioned before, such a bivariate analysis can usually not be interpreted causally, because other processes might explain the relationship. To further test our hypothesis, we then include another variable $Z$ that may explain the association between Nino and BCT (Fig.~\ref{fig:method_motivation}B). Here we generate $Z$ artificially for illustration purposes and define $Z_t=2 \cdot {\rm Nino}_{t-1}+\eta^Z_t$ for independent standard normal noise $\eta$. Thus, Nino drives $Z$ with lag $1$, but $Z$ has no causal effect on BCT, which we assume \emph{a priori} unknown. Here we simulated different realizations of $Z$ to measure detection power and false positive rates. Now the correlation would be even more misguiding a causal interpretation since we observe spurious links between all variables (Fig.~\ref{fig:method_motivation}B). The FullCI partial correlation, now with $\mathbf{X}^-_t$ including the past of all three processes, is significant only for ${\rm Nino}_{t-2}\to {\rm BCT}_t$, with an effect size of $0.09$. At a 5\% significant level this link is only detected in 53\% of the realizations (true positive rate).

What happened here? As mentioned above, detection power depends on dimensionality and effect size. Additionally conditioning on the variable $Z$ (including its past) slightly increases dimensionality of the conditional independence test, but this only partly explains the low detection power. For example, if $Z$ is constructed in such a way that it is independent of Nino, the partial correlation is $0.1$ again and power 85\%. The more important factor is that, since ${\rm Nino}_{t-1}\to Z_t$, $Z$ contains information about Nino. Because $Z$ is part of the conditioning set $\mathbf{X}^-_t$, it now `explains away' some part of the partial correlation $\rho({\rm Nino}_{t-2}, {\rm BCT}_t|\mathbf{X}^-_t)$, thereby leading to an effect size that is just $0.01$ smaller which already strongly reduces power. 

Suppose we got one of the realizations of $Z$ for which the link ${\rm Nino}_{t-2}\to {\rm BCT}_t$ is still significant. To further illustrate this phenomenon and the effect of including more variables on detection power, we now include six more variables $W^i$, that are, unknown to us, all independent of Nino, BCT, and $Z$, but coupled between each other in the following way (Fig.~\ref{fig:method_motivation}C): $W^{i}_t=a^i W^i_{t-1}+c W^{i-1}_{t-2} + \eta^i_t$ for $i=2,4,6$ and $W^{i}_t=a^i W^{i}_{t-1}+ \eta^{i}_t$ for $i=1,3,5$, all with the same coupling coefficient $c=0.15$ and $a^{1,2}=0.1$, $a^{3,4}=0.5$, and $a^{5,6}=0.9$. Now the FullCI effect size for ${\rm Nino}_{t-2}\to {\rm BCT}_t$ is still $0.09$, but the detection power is even lower than before and decreases to only 40\% due to the higher dimensionality. Thus, the true causal link ${\rm Nino}_{t-2}\to {\rm BCT}_t$ is likely to be overlooked. 

Effect size is also affected by autocorrelation effects of the included variables: The variable pairs $(W^i,W^{i-1})$ differ in their autocorrelation (as visualized by their node color in Fig.~\ref{fig:method_motivation}C) and even though the coupling coefficient $c$ is the same for each pair, their partial correlations are $0.15,\,0.13,\,0.11$ (from lower to higher autocorrelation). Similar to the above case, conditioning on other lags explains away information leading to a smaller effect size and lower power for the variable pairs $(W^i, W^{i-1})$ the higher their autocorrelation is. Conversely, we here observe more spurious correlations for higher autocorrelations (Fig.~\ref{fig:method_motivation}C left panel).

This example, thus, illustrates the dilemma we started with: To strengthen the credibility of causal interpretations we need to include more variables that might explain a spurious relationship, but these lead to lower power to detect true causal links due to higher dimensionality and possibly lower effect size.  

\subsection*{PCMCI approach} 
\begin{figure*}[t]  
\centering
\includegraphics[width=1.\linewidth]{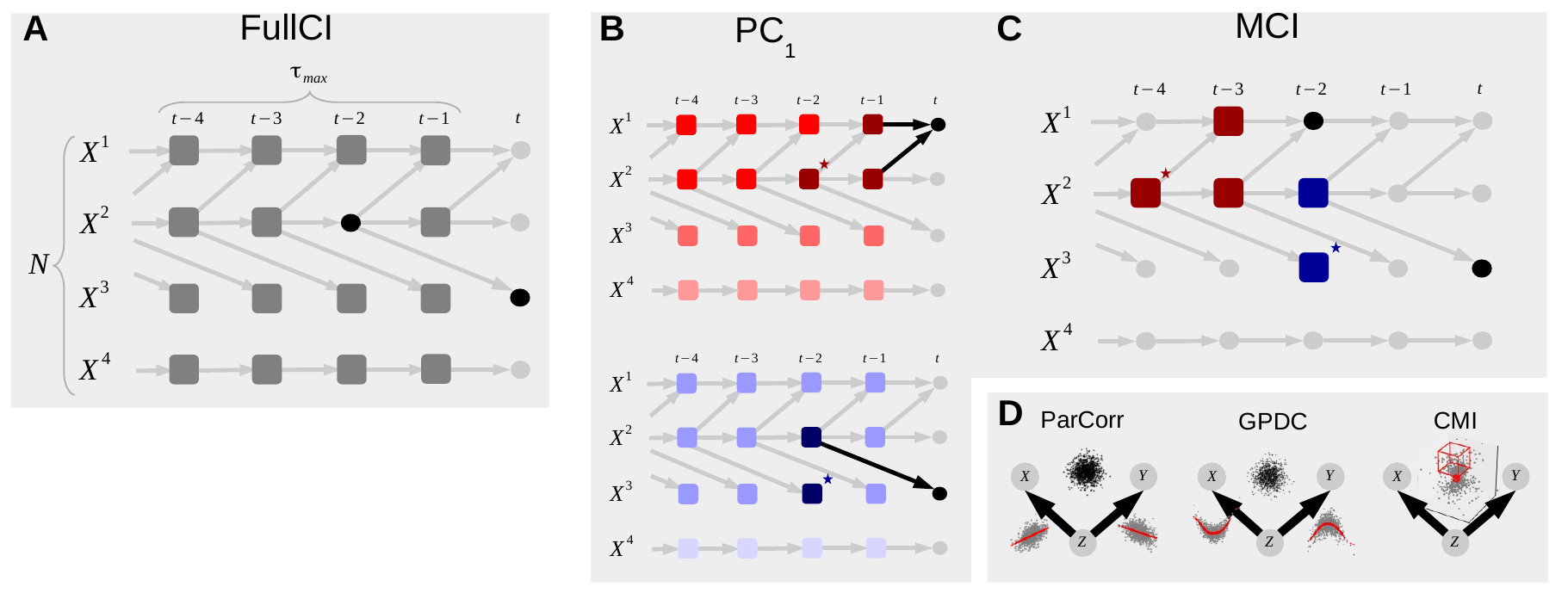}%
\caption{
Proposed causal discovery method.
\textbf{(A)} \textit{Time series graph} \cite{Eichler2011,Runge2012prl} representing the time-lagged causal dependency structure underlying the data. FullCI tests the presence of a causal link by $X^i_{t-\tau} \cancel{\ci} X^j_t ~|~ \mathbf{X}^-_t \setminus \{X^i_{t-\tau}\}$ where $\ci$ denotes (conditional) independence and $\mathbf{X}^-_t \setminus \{X^i_{t-\tau}\}$ the past of all $N$ variables up to a maximum time lag $\tau_{\max}$ excluding $X^i_{t-\tau}$ (grey boxes).  
\textbf{(B)} Illustration of PC$_1$ condition selection algorithm for the variables $X^1$ (top) and $X^3$ (bottom): The algorithm starts by initializing the preliminary parents $\widehat{\mathcal{P}}(X^j_t) = \mathbf{X}^-_t$. In the first iteration, variables without even an unconditional association (e.g., uncorrelated) are removed from $\widehat{\mathcal{P}}(X^j_t)$ (lightest shade of red and blue, respectively). In the second iteration, variables that become independent conditional on the driver in $\widehat{\mathcal{P}}(X^j_t)$ with largest association in the previous step are removed. In the third iteration variables are removed that are independent conditionally on the \emph{two} strongest drivers and so on until there are no more conditions to test in $\widehat{\mathcal{P}}(X^j_t)$. In this way PC$_1$ adaptively converges to typically only few relevant conditions (dark red/blue) that include the causal parents $\mathcal{P}$ with high probability and potentially some false positives (marked with an asterisk).
\textbf{(C)} These low-dimensional conditions are then used in the MCI conditional independence test: For testing $X^1_{t-2} \to X^3_t$ the conditions $\widehat{\mathcal{P}}(X^3_t)$ (blue boxes) are sufficient to establish conditional independence, while the additional conditions on the parents $\widehat{\mathcal{P}}(X^1_{t-2})$ (red boxes) account for autocorrelation and make MCI an estimator of causal strength. 
(\textbf{D})~Both the PC$_1$ and the MCI step can be flexibly combined with linear (ParCorr) or nonlinear (GPDC and CMI) independence tests (see Supplementary Sect.~\ref{sec:ci_tests} and Tab.~\ref{tab:CI_overview}). ParCorr assumes linear additive noise models and GPDC only additivity. The grey scatter plots illustrate regressions of $X,Y$ on $Z$ and the black scatter plots the residuals. The red cubes in CMI illustrate the data-adaptive model-free $k$-nearest neighbor test\cite{Runge2018a} which does not require additivity.
}
\label{fig:mci_schematic}
\end{figure*}
The previous example has shown the need for an automated procedure that better identifies the typically few relevant variables to condition on. We now introduce such a causal discovery method that helps to overcome the above dilemma and more reliably estimates causal networks from time series data. 
While the networks depicted in Fig.~\ref{fig:problem_setting}B, and Fig.~\ref{fig:method_motivation} are easier to visualize, they do not fully represent the spatio-temporal dependency structure underlying complex dynamical systems which is more comprehensively grasped in a \textit{time series graph} \cite{Eichler2011,Runge2012prl} (Fig.~\ref{fig:mci_schematic}). The nodes in a time series graph represent the variables at different lag-times and a causal link $X^i_{t-\tau} \to X^j_t$ exists if $X^i_{t-\tau}$ is \emph{not} conditionally independent of $X^j_t$ given the past of all variables, formally defined by $X^i_{t-\tau} ~\cancel{\ci}~ X^j_t | \mathbf{X}^-_t\setminus \{X^i_{t-\tau}\}$ with $\cancel{\ci}$ denoting the absence of a (conditional) independence, the vertical bar $|$ meaning ``conditional on'', and $\mathbf{X}^-_t \setminus \{X^i_{t-\tau}\}$ denoting the past of all $N$ variables up to a maximum time lag $\tau_{\max}$ excluding $X^i_{t-\tau}$ (grey boxes in Fig.~\ref{fig:mci_schematic}A). FullCI directly tests the link-defining conditional independence.
Recall that in Fig.~\ref{fig:mci_schematic}A the high dimensionality of including $N \tau_{\max}-1$ conditions on the one hand, and the reduced effect size due to conditioning on $X^1_{t-1}$ and $X^2_{t-1}$ (similar to $Z$ in Fig.~\ref{fig:method_motivation}), on the other, leads to a potentially drastically reduced detection power of FullCI.

Causal discovery theory\cite{Pearl2000,Spirtes2000} tells us that the \emph{parents} $\mathcal{P}$ of a variable $X^j_t$ (in Fig.~\ref{fig:mci_schematic}B represented as nodes with black arrows) are a sufficient conditioning set that allows to establish conditional independence (\emph{Causal Markov property}\cite{Spirtes2000}). Thus, in contrast to conditioning on the whole past of all processes as in FullCI, conditioning only on the set of parents of a variabel $X^j_t$ suffices to identify spurious links. Markov discovery algorithms\cite{Spirtes2000,Aliferis2010} such as the PC algorithm (named after its inventors) \cite{Spirtes1991} allow to detect these parents and can be flexibly implemented with different kinds of conditional independence tests which can handle nonlinear dependencies and variables that are discrete or continuous, and univariate or multivariate.
However, as shown in our numerical experiments, the PC algorithm cannot be directly used for the time series case, in particular since autocorrelation can lead to high false positive rates (Fig.~\ref{fig:highdim_parcorr}).

Our proposed approach is also based on the conditional independence framework, but adapts it to the highly interdependent time series case. The method, which we name PCMCI, consists of two stages: (1)~PC$_1$ condition selection (Fig.~\ref{fig:mci_schematic}B, Algorithm~\ref{algo:pcs1}) to identify relevant conditions $\widehat{\mathcal{P}}(X^j_t)$ for all included time series variables $j\in \{1,\ldots,N\}$ and (2)~the \emph{momentary conditional independence} (MCI) test (Fig.~\ref{fig:mci_schematic}C, Algorithm~\ref{algo:mit}) to test whether $X^i_{t-\tau} \to X^j_{t}$ with 
\begin{align} \label{eq:mit_test}
\text{MCI:}~~~~X^i_{t-\tau} ~&\ci~ X^j_{t} ~|~ \widehat{\mathcal{P}}(X^j_t)\setminus \{X^i_{t-\tau}\},\,\widehat{\mathcal{P}}(X^i_{t-\tau})\,.
\end{align}
Thus, MCI conditions on both the parents of $X^j_{t}$ and $X^i_{t-\tau}$. These two stages serve the following purposes: PC$_1$ is a Markov set discovery algorithm based on the PC algorithm that removes irrelevant conditions for each of the $N$ variables by iterative independence testing (illustrated by shades of red and blue in Fig.~\ref{fig:mci_schematic}B). A liberal significance level $\alpha$ in the tests lets PC$_1$ adaptively converge to typically only few relevant conditions (dark red/blue) that include the causal parents $\mathcal{P}$ with high probability, but will also include some false positives (marked with an asterisk). 
The MCI test (Fig.~\ref{fig:mci_schematic}C) then addresses false positive control for the highly-interdependent time series case: As an example, for testing $X^1_{t-2} \to X^3_t$, the conditions $\widehat{\mathcal{P}}(X^3_t)$ (blue boxes in Fig.~\ref{fig:mci_schematic}B) are sufficient to establish conditional independence (Markov property), that is, to identify indirect and common cause links. On the other hand, the additional condition on the parents $\widehat{\mathcal{P}}(X^1_{t-2})$ (red boxes) accounts for autocorrelation leading to correctly controlled false positive rates at the expected level as further discussed below. The main free parameter of PCMCI is the significance level $\alpha$ in PC$_1$ which can be chosen based on model-selection criteria such as the Akaike Information Criterion (AIC) or cross-validation. Further technical details can be found in Supplementary Sect.~\ref{sec:causal_discovery_SI}.


\subsection*{Linear and nonlinear implementations} 
Both the PC$_1$ and the MCI step can be flexibly combined with any kind of conditional independence test. Here we present results for linear partial correlation (ParCorr) and nonlinear (GPDC and CMI) independence tests (Fig.~\ref{fig:mci_schematic}D). GPDC is based on Gaussian process regression\cite{Rasmussen2006} and a \emph{distance correlation}\cite{Szekely2007} test on the residuals which is suitable for a large class of nonlinear dependencies with additive noise. CMI is a fully non-parametric test based on a $k$-nearest neighbor estimator of conditional mutual information that accommodates almost any type of dependency\cite{Runge2018a}. The drawback of greater generality for GPDC or CMI, however, is lower power for linear relationships in the presence of small sample sizes. These conditional independence tests are further discussed in Supplementary Sect.~\ref{sec:ci_tests} and Tab.~\ref{tab:CI_overview}.

\section*{Results} 
\begin{figure*}[ht]  
\centering
\includegraphics[width=.5\linewidth]{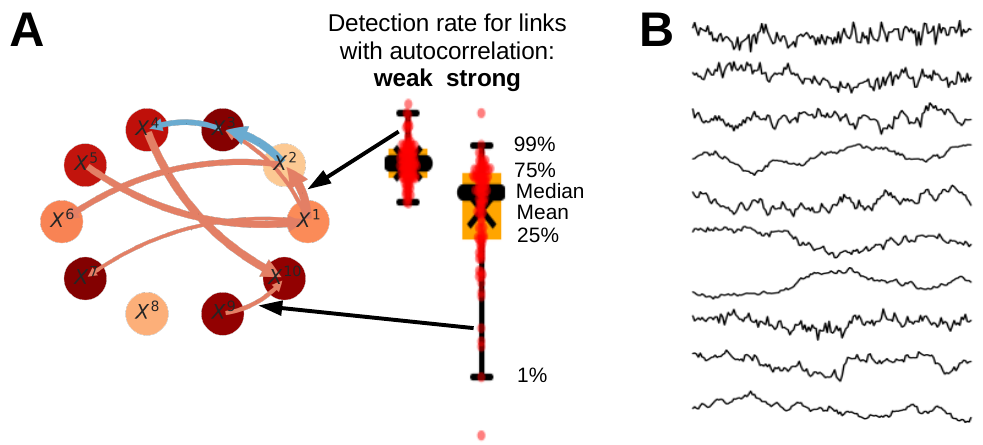}
\includegraphics[width=1.\linewidth]{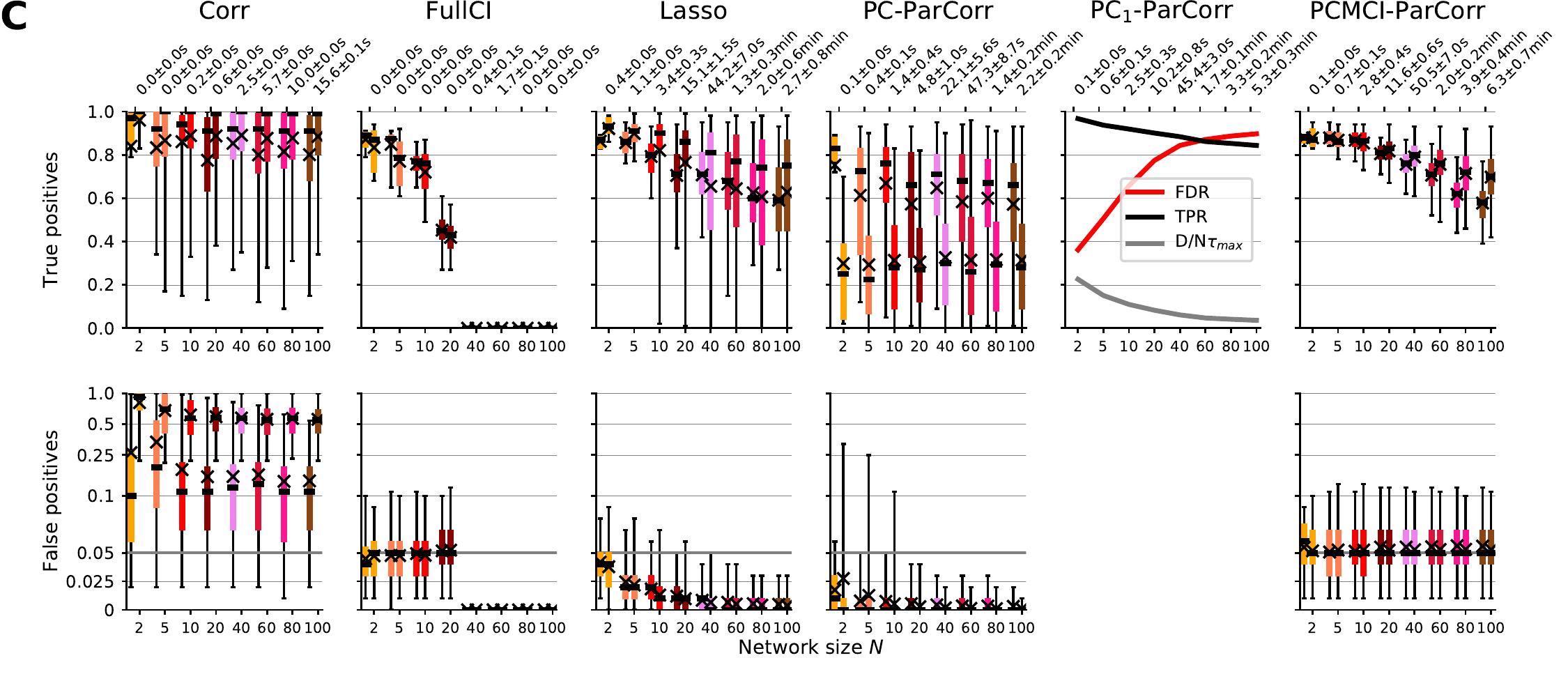}
\caption{Numerical experiments for models with linear dependencies.
(\textbf{A}) The full model setup is described in Sect.~\ref{sec:algo_model_description} and Tab.~\ref{tab:experiments}. In total 20 coupling topologies for each network size $N$ were randomly created, where all cross-link coefficients are fixed while the variables have different autocorrelations. Shown is an example network for $N=10$ with node colors denoting autocorrelation strength and the arrow colors the (positive or negative) coefficient strength. The arrow width here illustrates the detection rate of a particular method. 
As illustrated here, the boxplots in the figures below show the distribution of detection rates across individual links with the left (right) boxplot depicting links between weakly (strongly) autocorrelated variable pairs defined by the average autocorrelation of both variables being smaller or larger than $0.7$.
(\textbf{B}) Example time series realization of a model depicting partially highly autocorrelated variables. Each method's performance is assessed on 100 such realizations for each random network model. 
(\textbf{C}) Performance of different methods for models with linear relationships with time series length $T=150$, Tab.~\ref{tab:methods} provides implementation details.
The bottom row shows boxplot pairs (for weakly and strongly autocorrelated variables) of the distribution of false positives and the upper row the distributions of true positives for different network sizes $N$ along the $x$-axis in each plot. Average runtime and its standard deviation are given on top. The second last column gives the dimensionality (grey line, as a fraction of $N \tau_{\max}$), true positive rate (TPR, black line) and false discovery rate (FDR, red line) of the PC$_1$ condition-selection step.
}
\label{fig:highdim_parcorr}
\end{figure*}
\subsection*{Theoretical properties of PCMCI}
We briefly discuss several advantageous properties of PCMCI that are explained in more detail in Supplementary Sect.~\ref{sec:mit}: Consistency, generally larger effect size than FullCI, and interpretability as causal strength.
Consistency implies that PCMCI provably estimates the true graph in the limit of infinite sample size under standard assumptions of causal discovery. In Supplementary Sect.~\ref{sec:mit} we also elaborate on why MCI controls false positives correctly even for highly autocorrelated variables. Theoretical power levels for finite samples would require strong assumptions\cite{Robins2003,Kalisch2008} or are mostly impossible, especially for nonlinear associations. 
Due to the condition-selection step, MCI typically has a much lower conditioning dimensionality than FullCI. Further, avoiding conditioning on irrelevant variables also can be shown to yield a larger effect size than FullCI. Both of these factors lead to typically much higher detection power than FullCI as we show in our numerical experiments. 
Finally, MCI can be related to a hypothetical experimental setting in line with causal effect theory \cite{Pearl2000} and estimates a well-interpretable notion of \emph{causal strength}: MCI quantifies the causal effect of a hypothetical perturbation in $X^i_{t-\tau}$ on $X^j_t$ \cite{Pompe2011,Runge2015}. Thus, the value of the MCI statistic (e.g., partial correlation or CMI) allows to rank causal links in large-scale studies in a meaningful way as demonstrated in our numerical experiments.

\subsection*{PCMCI on real climate example}
Returning to the climate example (right panels in Fig.~\ref{fig:method_motivation}), PCMCI efficiently estimates the true causal relationships with high power in all three cases.
The condition-selection algorithm PC$_1$ identifies only the relevant conditions and finds, in particular, that $Z$ is not a parent of BCT. The MCI conditional independence test for the link ${\rm Nino}_{t-2}\to {\rm BCT}_t$ then has the same partial correlation effect size $\approx 0.10$ ($p=0.036$ in case A) in all three cases (Fig.~\ref{fig:method_motivation}A--C). The detection power is $>80$\% even for the high-dimensional case in Fig.~\ref{fig:method_motivation}C. Furthermore, PCMCI correctly estimates the causal effect strength $\approx 0.14$ among the pairs $(W^i,W^{i-1})$ resulting in similar detection power irrespective of varying autocorrelations in different $W^i$ time series.



\subsection*{Model setup for high-dimensional synthetic data experiments}
Following our illustrative climate example, we evaluate and compare the performance of our approach together with other common causal methods more systematically in numerical experiments. To validate and compare causal discovery methods, we ideally would have large-scale real datasets with known underlying ground truth of causal dependencies (e.g., derived from experiments). Since such datasets typically do not exist on a large scale in climate research (as well as many other fields), we validate the method with synthetic data that mimics the properties of real data, but where the true underlying relationships are known.

Here we model four of the major challenges of time series from complex systems such as the Earth: High-dimensionality, time lagged causal dependencies, autocorrelation, and potentially strong nonlinearity. 
Fig.~\ref{fig:highdim_parcorr}A gives an example model for $N=10$ variables and Fig.~\ref{fig:highdim_parcorr}B shows a time series realization illustrating some strongly autocorrelated variables. We create a number of models with different random network topologies of $N=2,\ldots,100$ time series variables with each network having $L=N$ linear or nonlinear causal dependencies (except for the bivariate case $N=2$ with $L=1$). From each of these models, we generate $100$ time series datasets (each of length $T$) to assess true and false positive rates of individual causal links in a model with the different causal methods. As illustrated in Fig.~\ref{fig:highdim_parcorr}A the boxplots in the following figures show the distribution of these individual link false and true positive rates across the large variety of random networks, for each network size $N$ differentiated between weakly and strongly autocorrelated pairs of variables in the left and right boxplot, respectively (defined by the average autocorrelation of both variables being smaller or larger than $0.7$).
The full model setup is detailed in Supplementary Sect.~\ref{sec:algo_model_description}, Tab.~\ref{tab:experiments} lists the experimental setups, and Tab.~\ref{tab:methods} gives details on the compared methods. 

\subsection*{Experiments with linear relationships}
In Fig.~\ref{fig:highdim_parcorr}C we first investigate the performance of linear causal discovery methods on numerical experiments with linear causal links. The setup has a sample length of $T=150$ observations and all cross-links have the same coupling coefficient and, hence, the same causal effect strength. Next to correlation (Corr) and FullCI (here implemented with an efficient vector-autoregressive model estimator), we compare PCMCI with the original PC algorithm as a standalone method and Lasso regression as the most widely used representative of regularized high-dimensional regression techniques that can be used for causal variable selection. Table~\ref{tab:methods} gives an overview over the compared methods, implementation details for alternative methods are given in Supplementary Sect.~\ref{sec:alternative_methods}. The maximum time lag is $\tau_{\max}=5$ for all methods.

Correlation is obviously inadequate for causal discovery with very high false positive rates (first column in Fig.~\ref{fig:highdim_parcorr}C). But even detection rates for true links vary widely with some links with under 20\% true positives, despite the equal coefficients for all causal links. This counterintuitive result is further investigated in Fig.~\ref{fig:algo_results_powerscaling}.
In contrast, all causal methods control false positives well around or below the chosen 5\% significance level with Lasso and the PC algorithm overcontrolling at lower than expected rates. An exception here are some highly autocorrelated links which are not correctly controlled with the PC algorithm (whiskers extending to 25\% false positives in Fig.~\ref{fig:highdim_parcorr}C) since it does not appropriately deal with the time series case. 

While FullCI has a detection power of around 80\% for $N=5$, this rate drops to 40\% for $N=20$ and FullCI cannot be applied anymore for larger $N$ when the dimensionality is larger than the sample size ($N \tau_{\max}>T=150$). But also for $N=5$ some links between strongly autocorrelated variables have a detection rate of just 60\%. 
Lasso has higher detection power than FullCI on average and the PC algorithm interestingly displays not much difference in detection power between $N=5$ and $N=100$, but the rates are lower than for Lasso on average, and higher autocorrelation also here has a detrimental effect. 

PCMCI robustly shows high detection power even for network sizes with dimensions exceeding the sample size and displays almost the same power for links with the same causal effect, regardless of whether autocorrelations are weak or strong, up to $N=20$. The second last column in  Fig.~\ref{fig:highdim_parcorr}C depicts the detection rates for the condition-selection step PC$_1$: Until $N=100$ still more than 80\% of the true parents are detected (black line). As mentioned, PC$_1$ is tuned to high power and for $N=100$ more than 80\% of the selected conditions are false positives (red line), but still the number of conditions is only a small fraction (grey line) of the conditions used for FullCI.

Runtime depends on implementation details, but all methods are in the same order of magnitude except for FullCI. Theoretically, in the worst case PCMCI scales polynomially with $N$ and $\tau_{\max}$. Our numerical experiments show that for smaller networks, PCMCI is faster than Lasso. Most of the time of PCMCI is spent on the condition-selection step, mainly because of the hyperparameter optimization of $\alpha$ via AIC. Fixing $\alpha$ is much faster and still gives good results (Figs.~\ref{fig:algo_par_corr_allvsmit_SI},\ref{fig:algo_par_corr_allvsmit_SI_300}), but may not always control false positives. 
The runtime of the standalone PC algorithm strongly depends on the number of conditioning sets tested. In theory \emph{all} combinations of conditioning sets are tested which makes PC extremely slow and leads to a highly varying runtime, but here we limited the number of combinations (see Supplementary Sect.~\ref{sec:alternative_methods}).

Summarizing, our key result here is that PCMCI has high power even for network sizes with dimensions, given by $N \tau_{\max}$, exceeding the sample size. Average power levels (marked by `x' in Fig.~\ref{fig:highdim_parcorr}C) are higher than FullCI and PC for all considered network sizes. PCMCI has similar \emph{average} power levels compared to Lasso, but an important difference is the \emph{worst-case} performance: Even for small networks ($N=10$), a significant part of the links is constantly overlooked with Lasso, while for PCMCI 99\% of the links have a detection power greater than 70\%.

\begin{figure*}[t]  
\centering
\includegraphics[width=.54\linewidth]{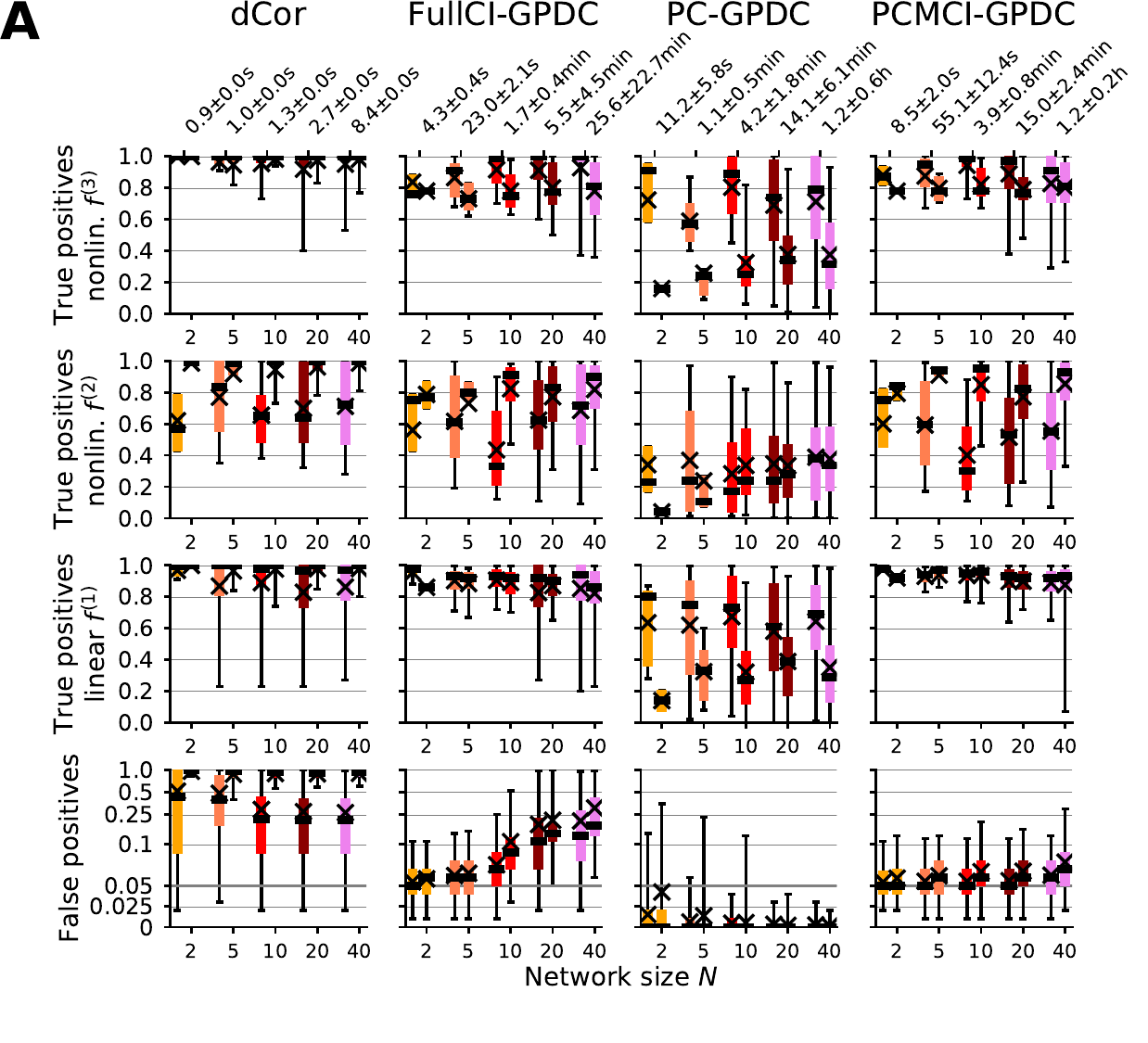}
\includegraphics[width=.44\linewidth]{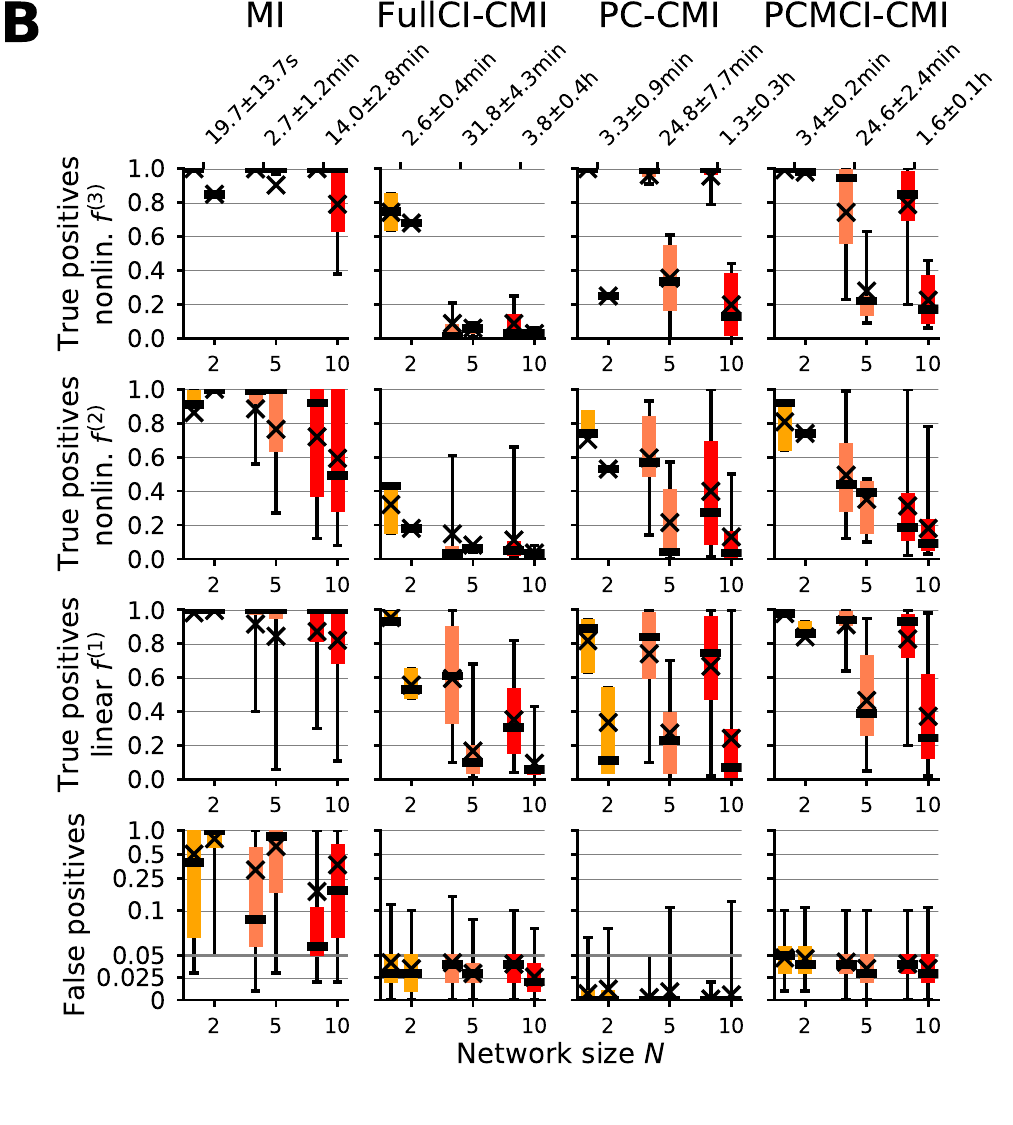}%
\caption{Numerical experiments for nonlinear models. We differentiate between linear and two types of nonlinear links (upper three rows). See Tab.~\ref{tab:experiments} for model setups and Supplementary Sect.~\ref{sec:ci_tests} and Tab.~\ref{tab:CI_overview} for a description of the nonlinear conditional independence tests.
(\textbf{A}) Results for GPDC implementation with $T=250$ where DCor denotes distance correlation.
(\textbf{B}) Results for CMI implementation with $T=500$ where MI denotes mutual information.
In Supplementary Figs.~\ref{fig:algo_results_samplesize_SI_gpdc},\ref{fig:algo_results_samplesize_SI_cmi} we investigate further sample sizes.
}
\label{fig:highdim_nonlin}
\end{figure*}

\subsection*{Experiments with nonlinear relationships}
Figure~\ref{fig:highdim_nonlin} displays results for nonlinear models where we differentiate between linear and two types of nonlinear links (upper three rows). In essence, here we find that PCMCI's ability to avoid high-dimensionality is even more crucial not only for detection power, but also to control false positives correctly.  

In Fig.~\ref{fig:highdim_nonlin}A, FullCI, PC, and PCMCI are all implemented with the GPDC conditional independence test and dCor denotes the distance correlation as the nonlinear analog to correlation (see Tab.~\ref{tab:CI_overview}). Distance correlation alone detects nonlinear links, but does not account for indirect or common driver effects leading to high false positives, especially for strong autocorrelation. FullCI here works well only up to $N=5$, but cannot control false positives anymore for $N\geq 10$ since the GPDC test does not work well in high dimensions. PC overcontrols false positives again (except for strong autocorrelation) and has the lowest power levels among all methods. PCMCI has the highest power levels which only slightly decrease for larger networks. Here we find that for nonlinear links weakly and strongly autocorrelated links also have different power levels, unlike for linear links (further discussed in Supplementary Sect.~\ref{sec:mit}). False positives are mostly controlled correctly, but there is a slight inflation of false positives for larger networks, again because even with condition-selection the dimensionality increases for larger networks and GPDC does not work well in high-dimensions.
For GPDC, runtime for PC and PCMCI is larger than for FullCI.

Figure~\ref{fig:highdim_nonlin}B depicts results for the fully non-parametric implementation with CMI. Then FullCI has the slowest runtime and almost no power, especially for nonlinear links, while PCMCI controls false positives correctly and has a higher power than PC except for some types of nonlinear links.

\begin{figure}[t]
\centering
\includegraphics[width=.7\linewidth]{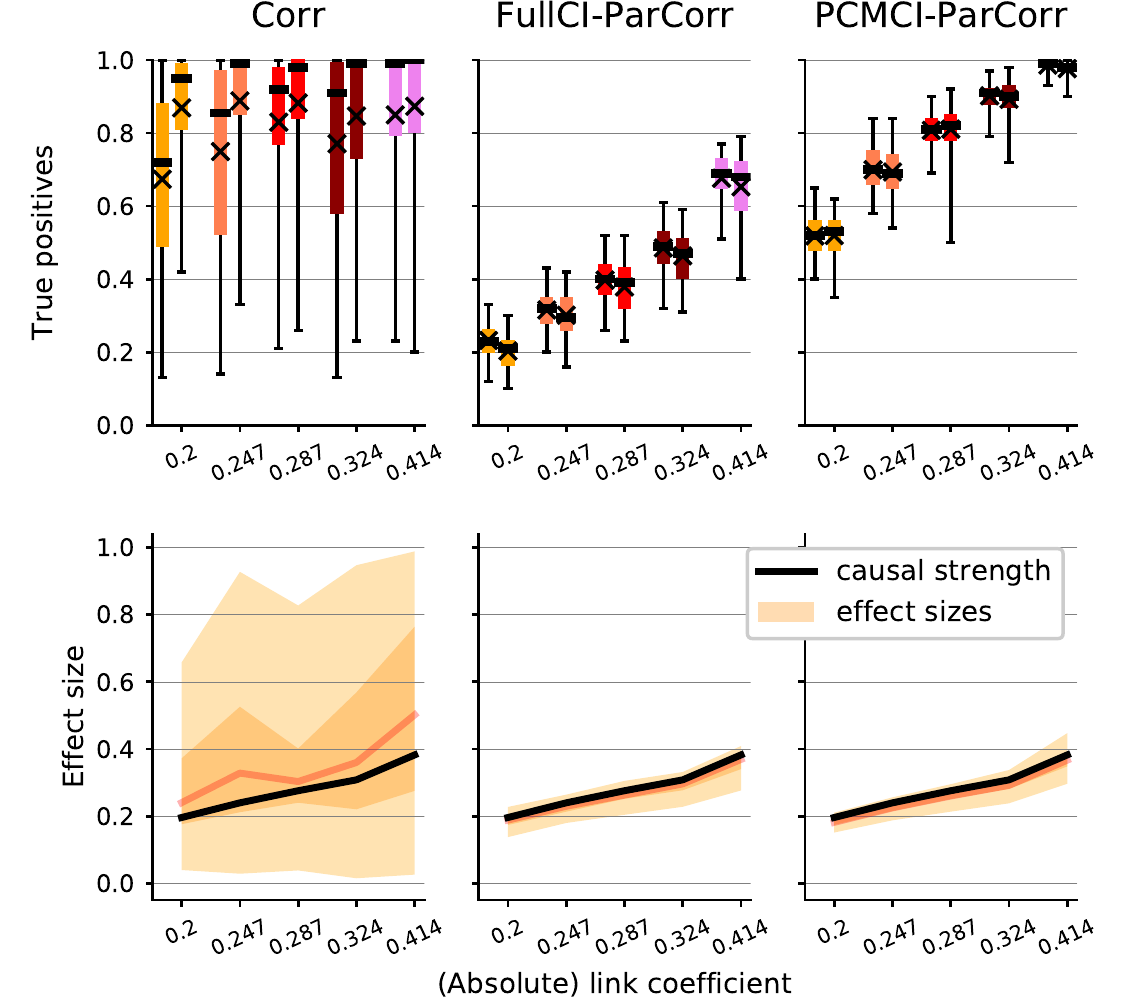}
\caption{Numerical experiments for causal strength.
Relation between detection power (top row) and effect size (bottom row) as given by correlation (Corr), FullCI partial correlation, and MCI partial correlation for different link coefficient strengths $c$ along the $x$-axis in each plot. In the bottom row the orange shades give the 1\%, 25\%, 75\%, and 99\% quantiles as well as the median of the respective (partial) correlations of all links (mean over 100 realizations for each link), and the black line denotes the (standardized) causal effect strength $|c|/\sqrt{1+c^2}$ which is the same for all links in a model.
}
\label{fig:algo_results_powerscaling}
\end{figure}

\subsection*{Causal strength experiments}
Finally, in Fig.~\ref{fig:algo_results_powerscaling} we more systematically investigate the close relationship between detection power and causal effect size illustrated in the climate example. In particular, we show that PCMCI has higher effect size than FullCI and often even higher effect size than correlation and, thus, more power.

Different from the model setup before, we now fix a network size of $N=20$ time series variables and vary the link coefficients $c$ ($x$-axis in Fig.~\ref{fig:algo_results_powerscaling}). The (standardized) causal effect is $|c|/\sqrt{1+c^2}$ (black line).
The bottom row of Fig.~\ref{fig:algo_results_powerscaling} depicts the distribution of effect sizes, that is, correlation, FullCI partial correlation, and MCI partial correlation across the different links for various random network topologies. Correlation values for links with the same causal effect span the whole range from zero to high correlation values indicating that correlation is rather unrelated to causal effect strength. Some correlation values are much smaller and even tend to zero which provides evidence for the observation that the detection power of correlation (or the other unconditional measures dCor and MI) can, counter-intuitively, even be lower than that of FullCI or PCMCI. The distribution of FullCI values is much narrower, but tends to be smaller than causal strength. In Supplementary Sect.~\ref{sec:mit} we provide proofs that MCI is larger than FullCI and that MCI is an estimator of causal strength as confirmed by our numerical experiments. In sum, larger effect size and lower dimensionality lead to more detection power of PCMCI compared to FullCI. 

\subsection*{Further experiments}
In the Supplementary Material we investigate some further methodological variants and show that our results are robust also for larger sample sizes (Supplementary Figs.~\ref{fig:algo_par_corr_allvsmit_SI},\ref{fig:algo_par_corr_allvsmit_SI_300},\ref{fig:algo_results_samplesize_SI},\ref{fig:algo_results_samplesize_SI_gpdc},\ref{fig:algo_results_samplesize_SI_cmi}), higher network coupling densities (Figs.~\ref{fig:algo_results_degree_SI},\ref{fig:algo_results_degree_300_SI}), and observational noise (Fig.~\ref{fig:algo_results_noise_SI}).
All methods display a similar sensitivity to observational noise with levels up to 25\% of the dynamical noise standard deviation having only minor effects. For levels of the same order as the dynamical noise we observe a stronger degradation with also the false positives not being correctly controlled anymore since common drivers are essentially not well detected anymore. See ref.~\cite{Runge2018b} for a discussion on observational noise.


\section*{Discussion and conclusion}  

Causal discovery on large-scale time series datasets is plagued by a dilemma: Including more variables makes an analysis more  credible regarding a causal interpretation, but if the added variables are irrelevant, that is, not explanatory for the observed relationships, they not only increase dimensionality but may also lead to smaller effect sizes. Both of these factors lead to lower power and increase the risk that important true causal links are overlooked. Furthermore, some nonlinear tests do not even control false positives anymore in high dimensions.

Our method circumvents this problem by a condition-selection step to remove irrelevant variables and a conditional independence test designed for highly interdependent time series. The former improves power levels for large-scale causal discovery analyses, while the latter also yields more power than classical techniques in analyses involving only few variables implying an improved `causal signal-to-noise ratio'. At the same time the MCI test demonstrates correctly controlled false positive rates even for highly autocorrelated time series data.
Furthermore, MCI can be interpreted as a measure of causal strength, allowing to rank causal links in exploratory studies on large datasets with many time series in a meaningful way. Such rankings can help to identify the strongest inferred causal links, which may be of main interest in some domain contexts.


PCMCI allows accommodating a large variety of conditional independence tests adapted to different types of data (see Sect.~\ref{sec:ci_tests}), for example, discrete or continuous time series. Networks can also be reconstructed with multivariate variables as nodes in the graph. This flexibility can help to represent causal associations on different aggregation levels and opens up a way to study causal networks on multiple interdependent layers\cite{Boccaletti2014}. 

Our method focuses on time-lagged dependencies, where there is no ambiguity in terms of cause-effect directionality. Recently, a growing body of literature addresses the inference of causality without relying on time-lags \cite{Spirtes2016,Peters2018} which could help to determine causal directionality for contemporaneous links.

For a causal interpretation, our approach rests on the standard assumptions \cite{Spirtes2000} of \emph{Causal Sufficiency}, implying that all common drivers are observed, and the \emph{Causal Markov Condition}, stating that once we know the direct causes of a variable, all other variables in the past become irrelevant for prediction, among other, more technical, assumptions. See ref.~\cite{Runge2018b} for an overview over causal discovery in time series.
Causal Sufficiency implies that the term `causal', as we use it here, has to be understood relative to the set of included variables. Non-included variables can still be the cause of a link in a non-experimental analysis. But the causal links inferred from the available observational data can then yield new hypotheses to be rejected or confirmed by further data analyses involving more variables (as illustrated in our climate example) or model simulations.
On the other hand, the finding of non-causality, that is, the absence of a causal link, relies on weaker assumptions\cite{Runge2018b}. Given that the observed data faithfully represents the underlying process and that potential nonlinearities are powerfully enough captured by the dependence measure, the absence of evidence for a statistical relationship makes it unlikely that a linking physical mechanism in fact exists. Such findings of non-causality are, therefore, more robust.


Growing data availability promises an unprecedented opportunity for novel insights through causal discovery across all disciplines of science. But current causality methods have been mostly limited to bivariate analyses that make a causal interpretation less credible. Our novel method enables multivariate causal analyses opening up new possibilities to more credibly reconstruct causal networks. 

\section*{Materials and Methods}
The main text describes the novel method introduced, some further material can be found in the Supplement. Software to reproduce the examples and numerical experiment results is available online under 

\texttt{https://github.com/jakobrunge/tigramite} 

including a comprehensive documentation of the method.




\section*{Acknowledgments}
We thank G. Balasis, D. Coumou, J. Donges, F. Fr\"ohlich, J. Haigh, J. Heitzig, J. Kurths, M. Mengel, M. Reichstein, C.-F. Schleussner, E. van Sebille, K. Zhang, and J. Zscheischler for helpful discussions and comments. The authors thank C. Linstead for help with high-performance computing. 

\section*{Funding}
J.R. received funding from a postdoctoral award by the James S. McDonnell Foundation. The work was supported by the European Regional Development Fund (ERDF), the German Federal Ministry of Education and Research and the Land Brandenburg for supporting this project by providing resources on the high-performance computer system at the Potsdam Institute for Climate Impact Research.
We declare no conflicts of interest.

\section*{Author contributions}
J.R. designed the method, analyzed the data, and prepared the manuscript. D.S. contributed to mathematical formulation, M.K. contributed to the climate analysis. All authors discussed the results and contributed to editing the manuscript.











\section*{Supplementary Materials}
Further material that can be found in the Supplement.


























\clearpage

\onecolumn
\renewcommand\theequation{S\arabic{equation}}
\setcounter{equation}{0}
\renewcommand\thefigure{S\arabic{figure}}    
\setcounter{figure}{0}   
\renewcommand\thesection{S\arabic{section}}    
\setcounter{section}{0} 
\renewcommand\thealgorithm{S\arabic{algorithm}}   
\setcounter{table}{0} 
\renewcommand\thetable{S\arabic{table}}    
\setcounter{algorithm}{0} 

{\Huge \textbf{Supplementary Material: Detecting causal associations in large nonlinear time series datasets}}
\\
{Jakob Runge, Peer Nowack, Marlene Kretschmer, Seth Flaxman, and Dino Sejdinovic}\\
\flushbottom

\tableofcontents
\thispagestyle{empty}

\clearpage
\section{Causal discovery} 
\label{sec:causal_discovery_SI}

In this section we describe the proposed causal discovery method PCMCI as well as alternative techniques in more detail. 

\subsection{Causal discovery method (PCMCI)} \label{sec:pcmci}
In our framework, the dependency structure of a set of time series variables is represented in a graphical model \cite{lauritzen1996graphical}. While the process graph depicted in Fig.~\ref{fig:problem_setting}B is easier to visualize, it does not fully represent the spatio-temporal dependency structure underlying complex dynamical systems which is more comprehensively grasped in a time series graph \cite{Eichler2011} as shown in Fig.~\ref{fig:mci_schematic}. If, for example, graphical models are estimated without taking into account lagged variables, associations can easily be confounded by the influence of common drivers at past times.


\begin{mydef}[Definition of time series graph] \label{eq:def_graph}
Let $\mathbf{X}$ be a multivariate discrete-time stochastic process and $\mathcal{G}=(V\times \mathbb{Z},\,E)$ the associated time series graph. The set of nodes in that graph consists of the set of components $V$ at each time $t\in \mathbb{Z}$. The edges $E$ of the graph are defined as follows: 
Variables $X^i_{t-\tau}$ and $X^j_t$ are connected by a lag-specific \emph{directed link} ``$X^i_{t-\tau} \to X^j_t$'' in $\mathcal{G}$ pointing forward in time if and only if $\tau>0$ and 
\begin{align}  
  X^i_{t-\tau}~~ \cancel{\ci}~~ X^j_t~~ |~~ \mathbf{X}_t^-\setminus \{X_{t-\tau}\},
\end{align}
where `$\cancel{\ci}$' denotes the absence of a conditional independence and $\mathbf{X}^-_{t}\setminus \{ X^i_{t-\tau}\}=(\mathbf{X}_{t-1},\,\mathbf{X}_{t-2},\,\ldots)\setminus \{ X^i_{t-\tau}\}$ the past of the multivariate process excluding $X^i_{t-\tau}$.
\end{mydef}
\noindent
That is, the graph is actually infinite, but in practice estimated up to some maximum time lag $\tau_{\max}$.
Throughout this work we assume stationarity, the links are repeated for every $t'<t$ if a link exists at time $t$. 
The parents of a node $X^j_t$ are defined as
\begin{align}
\mathcal{P}(X^j_t) &= \{X^k_{t-\tau}:~ X^k\in \mathbf{X},~\tau>0,~X^k_{t-\tau}\to X^j_t\}\,, 
\end{align}
In summary, the parents $\mathcal{P}(X^j_t)$ for all variables $X^j \in \mathbf{X}$ define the graph $\mathcal{G}$. Contemporaneous dependencies for $\tau=0$ can be defined in different ways \cite{Runge2015}. Here they are left undirected, but other techniques \cite{Peters2018,Spirtes2016} could be applied to determine causal directionality for contemporaneous links. 

Our causal discovery technique to estimate the time series graph is based on a two-step procedure:
\begin{enumerate}
\item Condition-selection: Obtain an estimate $\widehat{\mathcal{P}}(X^j_t)$ of (a superset of) the parents $\mathcal{P}(X^j_t)$ for all variables $X^j_t \in \mathbf{X}_t=(X^1_t,\,X^2_t,\ldots,X^N_t)$ with Algorithm~\ref{algo:pcs1}.
\item Use these parents as conditions in the MCI causal discovery Algorithm~\ref{algo:mit}, which tests \emph{all} variable pairs $(X^i_{t-\tau},X^j_{t})$ with $i,j\in \{1,\ldots,N\}$ and time-delay $\tau \in\{1,\ldots,\tau_{\max}\}$ and establishes a link, that is, $X^i_{t-\tau} ~\to~ X^j_{t} \in \mathcal{G}$, if and only if
\begin{align} \label{eq:mit_test_SI}
\text{MCI:}~~X^i_{t-\tau} ~&\cancel{\ci}~ X^j_{t} ~|~ \widehat{\mathcal{P}}(X^j_t)\setminus \{X^i_{t-\tau}\},\,\widehat{\mathcal{P}}_{p_X}(X^i_{t-\tau})\,,
\end{align}
where $\widehat{\mathcal{P}}_{p_X}(X^i_{t-\tau})\subseteq \widehat{\mathcal{P}}(X^i_{t-\tau})$ denotes the $p_X$ strongest parents according to the sorting in Algorithm~\ref{algo:pcs1}. This parameter is just an optional choice. One can also restrict the maximum number of parents used for $\widehat{\mathcal{P}}(X^j_t)$, but here we impose no restrictions. For $\tau=0$ one can also consider undirected contemporaneous links \cite{Runge2015}.
\end{enumerate}
Both the conditional independence tests in the condition-selection step and the MCI test can be implemented with different test statistics as detailed in Sect.~\ref{sec:ci_tests} and Tab.~\ref{tab:CI_overview}.

Algorithm~\ref{algo:pcs1} in the first step is a variant of the skeleton-discovery part of the PC algorithm \cite{Spirtes1991} in its more robust modification called PC-stable \cite{Colombo2014} and adapted to time series. The algorithm then is as follows: For every variable $X^j_t\in \mathbf{X}_t$ the algorithm starts by initializing the preliminary parents $\widehat{\mathcal{P}}(X^j_t) = (\mathbf{X}_{t-1},\,\mathbf{X}_{t-2},\,\ldots,\,\mathbf{X}_{t-\tau_{\max}})$. Then we start with $p=0$ and iteratively remove a variable $X^i_{t-\tau}$ from $\widehat{\mathcal{P}}(X^j_t)$ if the null hypothesis
\begin{align} \label{eq:pc_test}
\text{PC:}~~~X^i_{t-\tau} ~\ci~ X^j_{t} ~|~\mathcal{S} ~~~\text{for any $\mathcal{S}$ with $|\mathcal{S}|=p$}
\end{align}
is accepted at a significance threshold $\alpha$, where $\mathcal{S}$ are different combinations of subsets of $\widehat{\mathcal{P}}(X^j_t)\setminus \{X^i_{t-\tau}\}$ with cardinality $p$ up to a maximum number of combinations $q_{\max}$. In the first step ($p=0$), $\mathcal{S}$ is empty and we, thus, test unconditional dependencies. In each next step, the cardinality is increased $p\to p+1$ and Eq.~\ref{eq:pc_test} is tested again for different combinations of $\mathcal{S}$. The algorithm converges for a link $X^i_{t-\tau} \to X^j_{t}$ once $\mathcal{S}=\widehat{\mathcal{P}}(X^j_t)\setminus \{X^i_{t-\tau}\}$ and the null hypothesis $X^i_{t-\tau} ~\ci~ X^j_{t} ~|~\widehat{\mathcal{P}}(X^j_t)\setminus \{X^i_{t-\tau}\}$ is rejected (if the null hypothesis is accepted, the link is removed). Our fast variant PC$_1$ is obtained by restricting the maximum number of combinations $q_{\max}$ per iteration to $q_{\max}=1$. Further, we sort $\widehat{\mathcal{P}}(X^j_t)$ after every iteration according to the test statistic value (ParCorr, GPDC or CMI) and pick $\mathcal{S}$ in lexicographic order, that is, for $q_{\max}=1$, only the conditions with highest association.  Other causal variable-selection algorithms employ similar heuristics \cite{Aliferis2010,Aliferis2010b,Sun2015}. 
The MCI step is inspired by the information-theoretic measure \textit{momentary information transfer} introduced in \cite{Pompe2011,Runge2012b}.

\begin{table}[b]
\centering
\caption{Parameters of PCMCI method.}
\begin{tabular}{c|c|c}
Parameter &  Description & Recommended values \\
\midrule
 $\alpha$  &  Significance threshold in pre-selection Algorithm~\ref{algo:pcs1}   & \makecell{ParCorr: by AIC\\ GPDC, CMI: $0.2$} \\
 $p_{X}$  &  Max. number of parents of $X^i_{t-\tau}$ in Algorithm~\ref{algo:mit}    &  \makecell{unrestricted (but can be as low as 1, see discussion)} \\
$\tau_{\max}$  &  Max. time delay in Algorithms~\ref{algo:pcs1},\ref{algo:mit}  & last lag with significant unconditional association \\
\bottomrule
\end{tabular}
\label{tab:parameters}
\end{table}

As listed in Tab.~\ref{tab:parameters}, the free parameters of this method (in addition to free parameters of the conditional independence test statistic) are the maximum time delay $\tau_{\max}$, the significance threshold $\alpha$, and the maximum number $p_X$ of conditions of the driver variable in Algorithm~\ref{algo:mit}. We abbreviate different parameter choices by PC$_1^{\alpha}$+MCI$_{p_X}$, if not clear from the context. In Fig.~\ref{fig:algo_results_pcthres_SI} we show the overall performance of PCMCI for different choices of $\alpha$.
In the present implementation we do not take into account the reduced degrees of freedom in the MCI test due to the condition-selection step. We have not found any sign of inflated false positives in our numerical experiments. However, in order to ensure a more conservative false positive control, one could split the dataset and conduct the condition-selection on a separate part of the dataset than that used for the MCI tests. This would, however, clearly lead to a decrease in power and carry the problem of choosing the split in time series.

\paragraph{Choice of $\tau_{\max}$} The maximum time delay depends on the application and should be chosen according to the maximum physical time lag expected in the complex system. In practice we recommend a rather large choice that includes peaks in the lagged cross-correlation function (or a more general measure corresponding to the chosen independence test), because a too large choice of $\tau_{\max}$ merely leads to longer runtimes of PCMCI, but not to an increased estimation dimension as for FullCI.


\paragraph{Choice of $\alpha$}
$\alpha$ should not be seen as a significance test level in PC$_1$ since the iterative hypothesis tests do not allow for a precise assessment of uncertainties. $\alpha$ here rather takes the role of a regularization parameter as in model-selection techniques. The conditioning sets $\widehat{\mathcal{P}}$ estimated with PC$_1$ should include the true parents and at the same time be small in cardinality to reduce the estimation dimension of the MCI test and improve its power. But the first demand is typically more important.  In Fig.~\ref{fig:algo_results_pcthres_SI} we investigate the performance of PCMCI implemented with ParCorr, GPDC, and CMI for different $\alpha$. 
Too small values of $\alpha$ result in many true links not being included in the condition set for the MCI tests and, hence, increase false positives. Too high levels of $\alpha$ lead to high dimensionality of the condition set which reduces detection power and increases the runtime.
Note that for a threshold $\alpha=1$ in Algorithm~\ref{algo:pcs1}, no parents are removed and all $N \tau_{\max}$ variables would be selected as conditions. Then the MCI test becomes a FullCI test. 
As in any variable-selection method \cite{Aliferis2010,Aliferis2010b}, $\alpha$ can be optimized using cross-validation or based on scores such as BIC or AIC. For all ParCorr experiments (except for the ones labeled with PC$_1^{\alpha}$+MCI$_{p_X}$), we optimized $\alpha$ with AIC as a selection criterion. More precisely, for each $X^j_t$ we ran PC$_1$ separately for each $\alpha \in \{0.1, 0.2, 0.3, 0.4\}$ yielding different conditioning sets $\widehat{\mathcal{P}}^{\alpha}(X^j_t)$. Then we fit a linear model for each $\alpha$
\begin{align}
X^j_t &= \widehat{\mathcal{P}}^{\alpha}(X^j_t) \beta\,,
\end{align}
yielding the residual sum of squares (RSS) and select $\alpha$ according to Akaike's Information criterion modulo constants
\begin{align}
\alpha^* &= {\argmin}_\alpha n \log(RSS^{\alpha}) + 2|\widehat{\mathcal{P}}^{\alpha}(X^j_t)|\,,
\end{align}
where $n$ is the sample size (typically the time series length $T$ minus a cutoff due to $\tau_{\max}$) and $|\cdot|$ denotes cardinality. For GPDC one can similarly select $\alpha$ based on the log marginal likelihood of the fitted Gaussian process, while for CMI one can use cross-validation based on nearest-neighbor predictions for different $\widehat{\mathcal{P}}^{\alpha}(X^j_t)$. But since GPDC and CMI are already quite computationally demanding, we picked $\alpha=0.2$ in all experiments based on our findings in Fig.~\ref{fig:algo_results_pcthres_SI}.
In the lower panels of Figs.~\ref{fig:algo_par_corr_allvsmit_SI},\ref{fig:algo_par_corr_allvsmit_SI_300},\ref{fig:algo_results_degree_SI},\ref{fig:algo_results_degree_300_SI} we analyzed $\alpha=0.2$ also for ParCorr for all numerical experiments and found that this option also gave good results for sparse networks and runs even faster than Lasso. For very dense networks, we, however, found inflated false positives. Unfortunately, we have no finite sample consistency results for choosing $\alpha$.

\paragraph{Choice of $p_X$}
While the parents $\widehat{\mathcal{P}}(X^j_t)$ are sufficient to assess conditional independence, the additional conditions $\widehat{\mathcal{P}}_{p_X}(X^i_{t-\tau})\subseteq \widehat{\mathcal{P}}(X^i_{t-\tau})$  are used to account for autocorrelation and make the MCI test statistic a measure of causal strength as analyzed in Sect.~\ref{sec:mit}. To limit high dimensionality, one can strongly restrict the number of conditions $\widehat{\mathcal{P}}_{p_X}(X^i_{t-\tau})$ with the free parameter $p_X$. 
To avoid having another free parameter, we kept $p_X$ unrestricted in most experiments, but in our tests we found that a small value $p_X=1$ or $p_X=3$ already suffices to reduce inflated false positives due to strong autocorrelation and estimate causal strength, resulting in a confined scaling of power with link strength. The reason is that typically the largest driver will be the autodependency and conditioning out its influence already diminishes the effect of strong autocorrelations. 

In the lower panels of Figs.~\ref{fig:algo_par_corr_allvsmit_SI},\ref{fig:algo_par_corr_allvsmit_SI_300},\ref{fig:algo_results_degree_SI},\ref{fig:algo_results_degree_300_SI} we found that PCMCI with $p_X=3$ is slightly faster and yields more power. It also leads to a smaller power difference between weakly and strongly autocorrelated links than with MCI$_{all}$. In theory, a too small $p_X$ does not guarantee a well-calibrated test (see Sect.~\ref{sec:mci_props}), but in practice it seems like a sensible tradeoff.

\paragraph{False discovery rate control}
PCMCI can also be combined with \textit{false discovery rate} controls, e.g., using the Hochberg-Benjamini approach \cite{Benjamini1995}. This approach controls the expected number of false discoveries by adjusting the $p$-values resulting from the MCI step for the whole time series graph. More precisely, we obtain the $q$-values as
\begin{align}
q &= \min( p\frac{m}{r} , 1)\,,
\end{align}
where $p$ is the original p-value, $r$ is the rank of the original p-value when p-values are sorted in ascending order, and $m$ is the number of computed p-values in total, that is, $m=N^2\tau_{\max}$ to adjust only directed links for $\tau>0$ and correspondingly if also contemporaneous links for $\tau=0$ are taken into account. 
The false discovery approach controls the expected proportion of discoveries (rejected null hypotheses) that were false. In our numerical experiments we did not control the false discovery rate since we are interested in the individual link performances.  

\subsection{Alternative methods} \label{sec:alternative_methods}

Here we define alternative causal and non-causal methods, Tab.~\ref{tab:methods} gives an overview over the methods compared in the numerical experiments.

\subsubsection{FullCI} \label{sec:fullci}
The most straightforward way to test the existence of causal links as defined in Def.~\eqref{eq:def_graph} is to directly test 
\begin{align} \label{def:fullci}
\text{FullCI:}~~~~X^i_{t-\tau} ~&\ci~ X^j_{t} ~|~ \mathbf{X}^-_{t}\setminus \{X^i_{t-\tau}\}\,,
\end{align}
where $\mathbf{X}=(X^1,\,X^2,\ldots)$ denotes the multivariate process and $\mathbf{X}^-_{t}=(\mathbf{X}_{t-1},\,\mathbf{X}_{t-2},\,\ldots)$ its past. In practice, the past is truncated at a maximum time lag $\tau_{\max}$. This test can be implemented with any of the conditional independence tests defined in Sect.~\ref{sec:ci_tests}

In its original formulation, Granger causality between time series variables $X$ and $Y$ is based on fitting a linear or nonlinear model including possible confounders to $Y$ and a causal link $X\to Y$ is assessed by quantifying whether the inclusion of the past of variable $X$ in the model significantly reduces the prediction error about $Y$ \cite{Granger1969}. Our formulation can be interpreted as a general, lag-specific version.
For linear models we implement FullCI by fitting a vector-autoregressive (VAR) model using the \texttt{statsmodels} package which also allows to obtain the corresponding p-values.


\subsubsection{Lasso} \label{sec:lasso}

In the linear case, FullCI can be phrased as testing for nonzero coefficients in a VAR model. If implemented using standard ordinary least-square regression, the problem becomes ill-defined if the number of coefficients exceeds the number of samples. One way to address this problem are \emph{regularized} high-dimensional regression techniques. In particular, Lasso\cite{Tibshirani1996}  is a common regression method that can also be used for variable selection. Lasso regression is known to be inconsistent in some scenarios which is overcome by the \emph{adaptive Lasso}\cite{Zou2006} which yields a consistent estimator by utilizing an adaptively weighted penalty. Here we implemented an adaptive Lasso that consists of computing several Lasso regressions with iterative feature reweighting, see Algorithm~\ref{algo:stabs_lasso}. After several iterations the active set of variables is determined as the non-zero coefficients. Then all zero-coefficients are assigned a p-value of one, while the p-values for the active set of variables is determined by an OLS regression including only the active variables.
To select the optimal regularization parameter we used a time-series based cross-validation scheme. Some tests based on AIC-selected hyperparameters did not yield good results.

\subsubsection{PC algorithm} \label{sec:pcalgo}
The original PC algorithm was formulated for general random variables without assuming a time order. It consists of several phases where first, in the skeleton-discovery phase, an undirected graphical model \cite{lauritzen1996graphical} is estimated whose links are then oriented using a set of rules \cite{Spirtes1991,Spirtes2000}.
We implement the skeleton-discovery phase of the PC algorithm as given in Algorithm~\ref{algo:pcs1}, but in contrast to our fast variant, the original version does not restrict the number of condition combinations $q_{\max}$ to test. Here we use a large choice of $q_{\max}=10$. In contrast to FullCI or PCMCI, it is not straightforward to assess the confidence of causal links with the PC algorithm because links are iteratively removed. Here we use a $p$-value assessment for causal links according to ref.~\cite{Tsamardinos2008a}, 
\begin{align} 
p(X^i_{t-\tau} \to X^j_{t}) &= \max_{\{\mathcal{S}\}} p \left( X^i_{t- \tau}  \ci X^j_{t}|\mathcal{S} \right)\,,
\end{align}
that is, the maximum of all $p$-values from the conditional independence tests for different condition sets $\mathcal{S}$ in Eq.~\ref{eq:pc_test} defines the aggregated $p$-value of a causal link. This causal discovery method we term PC in the numerical experiments. 
We found that the $p$-values tend to be over-conservative for the most part with outliers for strong autocorrelations. FPR levels cannot be reliably controlled just below a desired threshold.

\subsubsection{PCMCI variants} \label{sec:ity}
In the MCI step of the PCMCI method, one can also test
\begin{align} \label{eq:ity_test}
\text{MCI$_0$:}~~~~X^i_{t-\tau} ~&\ci~ X^j_{t} ~|~ \mathcal{P}(X^j_t)\setminus \{X^i_{t-\tau}\}
\end{align}
for all links, that is, the MCI test without the condition on the parents $\mathcal{P}(X^i_{t-\tau})$, or, equivalently, $p_X=0$, denoted PC$_1$+MCI$_0$. Note that for $X^i_{t-\tau}\in\mathcal{P}(X^j_t)$ this is the test in the last step of the PC algorithm, but here also links that were removed in Algorithm~\ref{algo:pcs1} are tested again. 
In our numerical experiments (Figs.~\ref{fig:algo_par_corr_allvsmit_SI},\ref{fig:algo_par_corr_allvsmit_SI_300},\ref{fig:algo_results_degree_SI},\ref{fig:algo_results_degree_300_SI}) we found that PC$_1$+MCI$_0$ has inflated false positives.

We also test a variant, called PC$_1$+MCI$_0$pw, where all time series are pre-whitenend prior to running PC$_1$+MCI$_0$, that is, we preprocessed all $N$ time series by estimating the univariate lag-1 autocorrelation coefficients $\hat{a}_i=\rho(X^i_{t-1};X^i_t)$ and regressing out the AR(1) autocorrelation part of the signals:
\begin{align}
\tilde{X}^i_t = X^i_t - \hat{a}_i X^i_{t-1} ~~\text{$\forall t$ and $i=1,\ldots,N$} \,. 
\end{align}
Then the PCMCI test is applied to these residuals $\tilde{\mathbf{X}}$. Our numerical results in Figs.~\ref{fig:algo_par_corr_allvsmit_SI},\ref{fig:algo_par_corr_allvsmit_SI_300},\ref{fig:algo_results_degree_SI},\ref{fig:algo_results_degree_300_SI} show that this approach also fails to control false positives.

For conditional mutual information as a test statistic, MCI$_0$ was called \textit{information transfer to Y} (ITY) in ref.~\cite{Runge2012b}. For a link $X\to Y$, it quantifies the unique information in $X$ entering a process $Y$ without excluding the information in the past of $X$. ITY and the information-theoretic version of MCI called \emph{momentary information transfer} (MIT) are part of a whole suite of measures to quantify causal information flow in complex systems \cite{Runge2015}. Another measure called ITX quantifies the unique information emanating from a process as an information-theoretic analog to Sims causality, and several other measures quantify mediation on information pathways as discussed in ref.~\cite{Runge2015}.

\subsubsection{BivCI} \label{sec:biv}
In the numerical experiments, we also evaluate a bivariate conditional independence test (BivCI),
\begin{align} \label{eq:biv_test}
\text{BivCI:}~~X^i_{t-\tau} ~&\ci~ X^j_{t} ~|~ X^{j-}_{t}\,.
\end{align}
where $X^{j-}_{t}= (X^{j}_{t-1},\,X^{j}_{t-2},\,\ldots,\,X^{j}_{t-\tau_{\max}})$ denotes the past of $X^j_t$. This corresponds to a pairwise lag-specific version of Granger causality or Transfer Entropy\cite{Schreiber2000b}. Thus, this test removes autocorrelation to some extent, but does not exclude common drivers or indirect dependencies. Also autodependencies induced by common drivers are not removed. As analyzed in Figs.~\ref{fig:algo_par_corr_allvsmit_SI},\ref{fig:algo_par_corr_allvsmit_SI_300},\ref{fig:algo_results_degree_SI},\ref{fig:algo_results_degree_300_SI} BivCI reduces autocorrelation effects to some extent, but false positives are not correctly controlled, as expected.


\subsubsection{Unconditional pairwise measures} \label{sec:corr}
We also investigate the unconditional pairwise measures $I(X_{t-\tau};Y_t)$. In the ParCorr implementation this is simply the Pearson correlation coefficient (Corr), in the GPDC implementation the distance correlation (dCor), and for CMI the mutual information (MI).


\section{Conditional independence tests} \label{sec:ci_tests}

Similarly to the PC algorithm, the PCMCI framework (Algorithm~\ref{algo:pcs1} and \ref{algo:mit}) that we propose can be used in conjunction with any conditional independence test -- these will typically be based on estimating different dependence measures with associated test statistics. Here we implement three tests: partial correlation (ParCorr), a nonlinear two-step conditional independence test we term GPDC, and a fully non-parametric test based on conditional mutual information\cite{Runge2018a}. Table~\ref{tab:CI_overview} gives an overview over the tests.

\subsection{ParCorr}  \label{sec:parcorr}
Partial correlation for testing $X \ci Y ~|~ \mathbf{Z}$ here  is estimated (Tab.~\ref{tab:CI_overview}) in a two-stage procedure with a multivariate regression of $X$ and $Y$ on $\mathbf{Z}$ followed by a correlation test on the residuals.
Its advantages are fast computation and that the distribution under the null hypothesis of conditional independence is known analytically, but it is applicable only to the multivariate Gaussian case which can only capture linear dependencies.

\subsection{GPDC} \label{sec:gpdc}
GPDC also belongs to the class of residual-based conditional independence tests. Instead of a linear regression, here the first step is a Gaussian process (GP) regression \cite{Rasmussen2006} and the second step a test for the (unconditional) independence of the uniformized residuals with the \emph{distance correlation coefficient} \cite{Szekely2007}. GP regression is a widely used Bayesian nonparametric regression approach.
Distance correlation\cite{Szekely2007} is a measure of dependence between two random variables and is zero if and only if the variables are independent. Thus, distance correlation measures both linear and nonlinear association. See Tab.~\ref{tab:CI_overview} for details. Note that the underlying assumption of GPDC is that of an additive functional dependency, see ref.~\cite{Zhang2017b} for a more general, but still residual-based, test.

The estimator for distance correlation is defined as 
\begin{align}
\operatorname{dCor}(X,Y) = \frac{\operatorname{dCov}(X,Y)}{\sqrt{\operatorname{dVar}(X)\,\operatorname{dVar}(Y)}},
\end{align}
where the distance covariance $\operatorname{dCov}$ and variance $\operatorname{dVar}$ are computed by 
\begin{align}
\operatorname{dCov}^2(X,Y) = \frac 1 {n^2} \sum_{j = 1}^n \sum_{k = 1}^n A_{j, k} \, B_{j, k}\,,
\end{align}
where $A_{j, k} \, B_{j, k}$ are the doubly-centered distance matrices of $X$ and $Y$ (see ref.~\cite{Szekely2007}), respectively, and $n$ is the sample size (in the present context the length of the time series minus cutoffs due to $\tau_{\max}$).
Distance correlation is implemented in \texttt{Tigramite} based on the code in the original \texttt{dcov.test} function in the energy package for the R-language.
The prior transformation of $X$ and $Y$ to uniform marginals allows to pre-compute the distribution for each sample size $n$ (implemented in \texttt{Tigramite}) and critical values under the independence null hypothesis -- thereby avoiding computationally expensive permutation tests for each test of $X~\ci~Y~|~\mathbf{Z}$. Note that since we are not aware of a way to account for the GP-step in assessing the degrees of freedom for the subsequent distance correlation, we simply used the sample size $n$ which did not seem to inflate false positives in our experiments.

Figure~\ref{fig:mci_schematic}D illustrates a quadratic relationship $X=Z^2+\eta^X$ and $Y=-Z^2+\eta^Y$ for $Z\sim \mathcal{N}(0, 1)$ where $X ~\ci Y~|~Z$ can be identified with GPDC, but not with ParCorr.

\subsection{CMI} \label{sec:cmiknn}
ParCorr and GPDC, like any two-step procedure for conditional independence testing, which includes a regression followed by an unconditional test on the regression residuals, has an underlying assumption of additive noise and a functional dependence on the conditioning variables. In the presence of dependencies which cannot be represented even in a nonlinear functional form, regression will not be able to remove these dependencies on the conditioning variables. Figure~\ref{fig:mci_schematic}D for CMI illustrates a multiplicative case with $X=Z\eta^X$ and $Y=-Z \eta^Y$ for $Z\sim \mathcal{N}(0, 1)$ where both ParCorr and GPDC fail to establish $X ~\ci Y~|~Z$. Then the two-step procedure should be replaced with fully non-parametric techniques to measuring and testing conditional dependence.

Here a fully non-parametric test\cite{Runge2018a} for continuous data based on conditional mutual information combined with a local permutation scheme is implemented. The conditional mutual information (CMI) is zero if and only if $X \ci Y | \mathbf{Z}$. 
From the nearest-neighbor entropy estimator by Kozachenko et al.\cite{kozachenko1987sample}, Kraskov et al.\cite{Kraskov2004a} developed an estimator for mutual information that was generalized to CMI\cite{FrenzelPompe2007,Vejmelka2008}:
\begin{align} \label{eq:cmi_knn_est}
     \widehat{I}_{ XY|Z} &=   \psi (k) + \frac{1}{n} \sum_{i=1}^n \left[ \psi(k^{z}_i) - \psi(k^{xz}_i) - \psi(k^{yz}_i) \right]
\end{align}
with the Digamma function as the logarithmic derivative of the Gamma function $\psi(x)=\frac{d }{d x} \ln \Gamma(x)$ and sample length $n$. The only free parameter $k$ is the number of nearest neighbors in the joint space of $\mathcal{X}\otimes \mathcal{Y}\otimes \mathcal{Z}$ which defines the local length scale (in maximum norm) $\epsilon_i$ around each sample point $i$. Then $k^{xz}_i$, $k^{yz}_i$ and $k^z_i$ are computed by counting the number of points with distance strictly smaller than $\epsilon_i$ (including the reference point $i$) in the subspace $\mathcal{X}\otimes \mathcal{Z}$ to get $k^{xz}_i$, in the subspace $\mathcal{Y}\otimes \mathcal{Z}$ to get $k^{yz}_i$, and in the subspace $\mathcal{Z}$ to get $k^z_i$.
The decisive advantage of this estimator compared to fixed global bandwidth approaches is its local \emph{data-adaptiveness} (Fig.~\ref{fig:mci_schematic}D): The hypercubes around each sample point are smaller where more samples are available. As opposed to GP regression, this feature allows to detect also highly non-smooth dependencies. Unfortunately, few theoretical results are available for the complex mutual information estimator. While the Kozachenko-Leonenko estimator is asymptotically unbiased and consistent \cite{kozachenko1987sample,Leonenko2008a}, the variance and finite sample convergence rates are unknown. Hence, the null distribution in the CMI test relies on a local permutation test that is also based on nearest neighbors and data-adaptive. Here we set $k_{\rm CMI}=50$, and the nearest neighbors in the local permutation scheme, $k_{\rm perm}=5$, as well as $B=500$ as the number of surrogates for the null distribution. These choices are based on the findings in ref.~\cite{Runge2018a}.
CMI is implemented in \texttt{Tigramite}.
Alternative conditional independence tests are, e.g., kernel conditional independence tests \cite{Fukumizu2008,Zhang2012,Strobl2017}. PCMCI-CMI can be considered a \emph{doubly adaptive} causal discovery method: PCMCI adapts the condition-selection locally to the causal network and CMI adapts the conditional independence estimation locally to the sample density.

\subsection{Discrete data}
The previous tests were developed for continuously-valued data. For discrete data, the software package \texttt{Tigramite} implements a CMI test based on discrete entropy estimation, called CMIsymb. This test directly estimates CMI based on
\begin{align} 
\widehat{I}(X;Y | Z ) &=\sum_{x,y,z}  \widehat{p}(x,y,z) \log \frac{ \widehat{p}(x,y,z) \widehat{p}(z)}{\widehat{p}(x,z)\widehat{p}(y,z)}\,,
\end{align}
where the discrete densities are estimated from symbol frequencies. For this test no analytical results for the null distribution are available and a permutation test is recommended.

\subsection{Multivariate extensions}
We note that we focus on the case of univariate time series $X$ and $Y$ as is typical in causal discovery applications (while the conditioning variable $\mathbf{Z}$ can and mostly will be multivariate). However, our method can be readily extended to multivariate time series $X$ and $Y$, as well as those taking values in structured or non-Euclidean domains for suitable conditional independence tests. CMI, for example, readily extends to multivariate $X$ and $Y$. Also a two-step procedure involving vector-valued regression and kernel conditional independence tests \cite{Fukumizu2008,Zhang2012,Sejdinovic2013} can be used. In complex dynamical systems, multivariate variables can, for example, result from delay-embeddings to better unfold the dynamics of nonlinear systems \cite{Kantz2003}. Or, one can create multivariate variables from aggregating time series of multiple subprocesses to represent a dynamical process on a higher layer \cite{Chung2018}. Causal networks reconstructed on such different layers then lend themselves to analysis with measures from the growing field of multilayer networks \cite{Boccaletti2014}.


\section{Properties of PCMCI} \label{sec:mit}
In this section, we provide more formal definitions and proofs for the properties of PCMCI stated in the main article.

\subsection{Computational complexity}
PCMCI has polynomial worst case complexity in the number of variables $N$ and $\tau_{\max}$. The computational complexity of the first step of PCMCI (Algorithm~\ref{algo:pcs1}) strongly depends on the network structure and the parameter $\alpha$. The sparser the causal dependencies, the faster the convergence. In the worst case where the network is completely connected (which is rather pathological), the computational complexity of the PC$_1$ condition-selection step for $N$ variables amounts to 
\begin{align}
N \sum_{p=0}^{N \tau_{\max}-1} N \tau_{\max} = N^3 \tau_{\max}^2
\end{align}
conditional independence tests with iteratively increasing cardinality. The MCI step (Algorithm~\ref{algo:mit}) further involves $N^2 \tau_{\max}$ tests (for $\tau>0$) of a maximal dimensionality of $2+|\widehat{\mathcal{P}}(X^j_t)|+|\widehat{\mathcal{P}}(X^i_{t-\tau})|$. Hence the worst case total computational complexity in the number of variables is polynomial and given by $N^3 \tau_{\max}^2+N^2 \tau_{\max}$ if $\alpha$-optimization is not taken into account. The computational time will then depend on how the conditional independence test scales with this dimensionality and the time series length $T$. In the numerical experiments, we analyze runtimes of the different techniques for different network sizes $N$ and time series lengths $T$. 

In \texttt{Tigramite} we implement a \texttt{recycle\_residuals}-option that can be used for ParCorr and GPDC to save already computed residuals in memory to be reused in later tests of PC$_1$ or MCI.

\subsection{Consistency}
We here give a proof of the consistency of PCMCI for the population version of PCMCI, that is, PCMCI estimates the true graph in the limit of infinite sample size where there are no errors in the conditional independence tests. The proof relies on the three standard assumptions in causal discovery\cite{Spirtes2000}: (1)~\emph{Causal Sufficiency} implying that there exist no other unobserved variables that directly or indirectly influence any other pair of our set of observed variables, (2)~the \emph{Causal Markov Condition} implying that $Y_t$ is independent of $\mathbf{X}^-_t\setminus \mathcal{P}_{Y_t}$ given its parents $\mathcal{P}_{Y_t}$, and (3)~\emph{Faithfulness} which guarantees that the graph entails \emph{all} conditional independence relations that are implied by the Markov condition. Faithfulness implies that if two variables are independent conditionally on a set $\mathcal{S}$, then they are also \emph{separated} by $\mathcal{S}$ in the graph. See ref.~\cite{Runge2015} for definitions of separation in time series graphs. As part of the Causal Markov Condition in the time series graph context, we also assume \emph{no-contemporaneous causal effects} which excludes causal effects at $\tau=0$. Only then the lagged parents are sufficient for the Causal Markov condition. Last, we also assume stationarity here for all the time series considered so that dependencies (or lack of them) remain unchanged across time. With these assumptions we can prove the consistency of PCMCI. 
\begin{figure*}[tbhp]
\centering
\includegraphics[width=.3\linewidth]{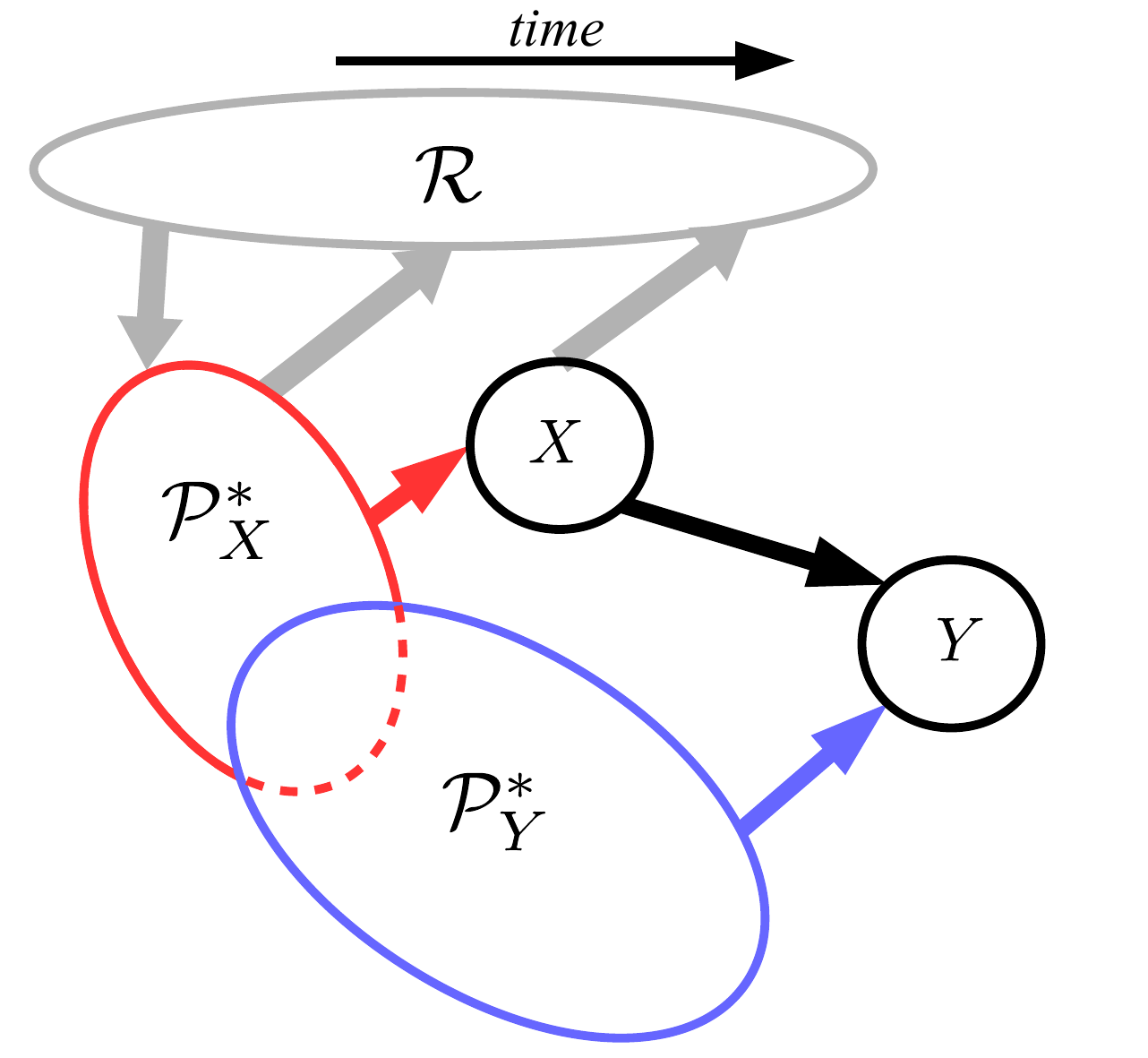}
\caption{Illustration of notation for proving Theorem~\ref{thm:pcmci_consistency} and Theorem~\ref{thm:fullci}.
}
\label{fig:proof}
\end{figure*}

\begin{mythm}{(Consistency)} \label{thm:pcmci_consistency}
Let $\mathbf{X}$ be a stochastic process with true time series graph $\mathcal{G}$ as defined in Def.~\ref{eq:def_graph} and let $\widehat{\mathcal{G}}$ be the estimated graph with PCMCI (Algorithms~\ref{algo:pcs1},\ref{algo:mit}) implemented with a consistent conditional independence test. Assuming Causal Sufficiency, Faithfulness and the Causal Markov Condition we have that 
\begin{align}
\widehat{\mathcal{G}} = \mathcal{G}\,.
\end{align}
\end{mythm}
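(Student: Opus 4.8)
The plan is to prove the two inclusions $\mathcal{G} \subseteq \widehat{\mathcal{G}}$ and $\widehat{\mathcal{G}} \subseteq \mathcal{G}$ separately, where both graphs share the same node set and a link means a directed lagged edge $X^i_{t-\tau}\to X^j_t$ for $\tau>0$. The backbone of the argument is that, in the population (infinite-sample) limit, every conditional independence test in both Algorithm~\ref{algo:pcs1} and Algorithm~\ref{algo:mit} returns the correct answer, so the only thing to check is that the \emph{logic} of the two-step procedure is sound: PC$_1$ returns a superset of the true parents, and the MCI test then keeps exactly the true links.

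\textbf{Step 1 (PC$_1$ recovers a superset of the parents).} First I would show that for every $X^j_t$, the set $\widehat{\mathcal{P}}(X^j_t)$ produced by Algorithm~\ref{algo:pcs1} satisfies $\mathcal{P}(X^j_t)\subseteq \widehat{\mathcal{P}}(X^j_t)$. The key observation is that a variable $X^i_{t-\tau}$ is removed from the candidate set only if some tested conditioning set $\mathcal{S}$ (a subset of the current $\widehat{\mathcal{P}}(X^j_t)\setminus\{X^i_{t-\tau}\}$) yields $X^i_{t-\tau}\ci X^j_t\mid\mathcal{S}$. By Faithfulness, such a conditional independence implies $X^i_{t-\tau}$ and $X^j_t$ are separated by $\mathcal{S}$ in $\mathcal{G}$; but a true parent is adjacent to $X^j_t$ and hence cannot be separated from it by any set. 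Therefore no true parent is ever removed, and since the algorithm is initialized with $\widehat{\mathcal{P}}(X^j_t)=(\mathbf{X}_{t-1},\ldots,\mathbf{X}_{t-\tau_{\max}})\supseteq \mathcal{P}(X^j_t)$, the inclusion holds. (Because $q_{\max}=1$, PC$_1$ may fail to remove some non-parents, so $\widehat{\mathcal{P}}(X^j_t)$ is in general a strict superset — that is fine and is exactly what the MCI step is designed to tolerate.) I would also note termination: each iteration either shrinks a finite candidate set or has no more subsets of the required cardinality to test, so the algorithm halts.

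\textbf{Step 2 (MCI keeps the true links).} Suppose $X^i_{t-\tau}\to X^j_t\in\mathcal{G}$. Then $X^i_{t-\tau}$ is adjacent to $X^j_t$, so it cannot be separated from $X^j_t$ by any set, in particular not by $\widehat{\mathcal{P}}(X^j_t)\setminus\{X^i_{t-\tau}\}\cup\widehat{\mathcal{P}}_{p_X}(X^i_{t-\tau})$. By Faithfulness this set does not render them conditionally independent, so the MCI null hypothesis in Eq.~\eqref{eq:mit_test_SI} is rejected and the link is placed in $\widehat{\mathcal{G}}$. This gives $\mathcal{G}\subseteq\widehat{\mathcal{G}}$.

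\textbf{Step 3 (MCI removes the non-links — the main obstacle).} Now suppose $X^i_{t-\tau}\not\to X^j_t$ in $\mathcal{G}$ with $\tau>0$. I need to show the conditioning set $\mathcal{Z}:=\big(\widehat{\mathcal{P}}(X^j_t)\setminus\{X^i_{t-\tau}\}\big)\cup\widehat{\mathcal{P}}_{p_X}(X^i_{t-\tau})$ actually separates $X^i_{t-\tau}$ from $X^j_t$, so that by the Markov condition the MCI test accepts and the link is correctly excluded. The cleanest route uses the Causal Markov Condition directly: since $\mathcal{P}(X^j_t)\subseteq\widehat{\mathcal{P}}(X^j_t)$ by Step~1, and since $X^i_{t-\tau}\notin\mathcal{P}(X^j_t)$, the set $\widehat{\mathcal{P}}(X^j_t)\setminus\{X^i_{t-\tau}\}$ still contains all of $\mathcal{P}(X^j_t)$. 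By the Markov condition, $X^j_t$ is independent of all of $\mathbf{X}^-_t\setminus\mathcal{P}(X^j_t)$ given $\mathcal{P}(X^j_t)$; I would then argue (via the standard graphoid/contraction-weak-union manipulations, or equivalently via a separation argument in the time series graph, using that every node in $\mathcal{Z}$ and the node $X^i_{t-\tau}$ lies in $\mathbf{X}^-_t$) that conditioning on the larger set $\mathcal{Z}$, which consists of past variables and does not include any descendant of $X^j_t$, preserves this independence. Hence $X^i_{t-\tau}\ci X^j_t\mid\mathcal{Z}$, the test accepts, and the link is absent from $\widehat{\mathcal{G}}$. This yields $\widehat{\mathcal{G}}\subseteq\mathcal{G}$, and combining with Step~2 gives $\widehat{\mathcal{G}}=\mathcal{G}$.

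The delicate point — and where I expect to have to be careful — is Step~3: I must verify that adding the extra conditions $\widehat{\mathcal{P}}_{p_X}(X^i_{t-\tau})$ (and the possibly-spurious extra members of $\widehat{\mathcal{P}}(X^j_t)$) to the minimal sufficient set $\mathcal{P}(X^j_t)$ does not accidentally \emph{unblock} a path, i.e. does not open a collider. This is where the assumption of \emph{no contemporaneous causal effects} and the temporal ordering do the real work: all conditioning variables and $X^i_{t-\tau}$ strictly precede $X^j_t$ in time, so none of them is a descendant of $X^j_t$, and conditioning on a non-descendant of $X^j_t$ together with $\mathcal{P}(X^j_t)$ cannot create an active path into $X^j_t$ (a collider on such a path would have to be $X^j_t$ itself or a descendant, which is excluded). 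I would spell this out either as an explicit $d$-separation (or the time-series-graph separation of ref.~\cite{Runge2015}) argument, or more economically by invoking the local Markov property plus weak union and contraction of the conditional independence relation. Finiteness/termination of both algorithms and measurability are routine and I would mention them only in passing.
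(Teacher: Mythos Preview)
Your argument is correct and takes a slightly different, in one respect more economical, route than the paper. The paper proceeds via two lemmas: first that $\mathcal{P}(X^j_t)\subseteq\widehat{\mathcal{P}}(X^j_t)$ (your Step~1), and second that under the Markov condition this inclusion is actually an \emph{equality}, $\widehat{\mathcal{P}}(X^j_t)=\mathcal{P}(X^j_t)$, in the population limit. The paper then invokes this equality to replace $\widehat{\mathcal{P}}$ by $\mathcal{P}$ in the MCI conditioning set and carries out the graphoid manipulations (decomposition, weak union, contraction) with the \emph{true} parents. You instead stop at the superset property and show directly that MCI is sound for \emph{any} superset $\mathcal{Z}\supseteq\mathcal{P}(X^j_t)$ consisting of past variables, by the same weak-union step. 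This buys you robustness: your proof would survive unchanged if PC$_1$ were run with a finite $p_{\max}$ or otherwise failed to prune all non-parents, and it isolates exactly which property of the preselection step the MCI stage actually needs. What the paper's route buys is the additional conclusion that PC$_1$ itself is consistent for the parent set---of independent interest, and incidentally it shows that your parenthetical remark ``PC$_1$ may fail to remove some non-parents'' is not in fact the case in the population limit with $p_{\max}$ unrestricted: at the final iteration each survivor is tested against its full complement in $\widehat{\mathcal{P}}(X^j_t)$, which contains all of $\mathcal{P}(X^j_t)$, so any non-parent is removed there by the Markov condition.
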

\noindent
To prove consistency, we need the following two lemmas.
\begin{mylemma} \label{lemma:superset}
Let $\widehat{\mathcal{P}}(X^j_t)$ denote the estimated condition set of Algorithm~\ref{algo:pcs1} for $X^j\in \mathbf{X}$ and let $\mathcal{P}(X^j_t)$ denote the true parents. Assuming Faithfulness and Causal Sufficiency with a consistent conditional independence test in the limit of infinite sample size we have that
\begin{align}
\mathcal{P}(X^j_t) \subseteq \widehat{\mathcal{P}}(X^j_t) ~~~\forall j 
\end{align}
that is, the estimated parents are a superset of the true parents.
\end{mylemma}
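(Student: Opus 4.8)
The plan is to show that Algorithm~\ref{algo:pcs1} can never remove a true parent from the running candidate set, so that every element of $\mathcal{P}(X^j_t)$ is still present in $\widehat{\mathcal{P}}(X^j_t)$ when PC$_1$ terminates. Fix a variable $X^j\in\mathbf{X}$ and a true parent $X^i_{t-\tau}\in\mathcal{P}(X^j_t)$, i.e.\ there is a directed link $X^i_{t-\tau}\to X^j_t$ in $\mathcal{G}$. The first and only substantial step is a graph-theoretic observation: since $X^i_{t-\tau}$ and $X^j_t$ are adjacent in $\mathcal{G}$, the single-edge path between them carries no collider and cannot be blocked, so $X^i_{t-\tau}$ and $X^j_t$ are \emph{not} separated (in the sense of separation in time series graphs, cf.\ ref.~\cite{Runge2015}) by \emph{any} node set $\mathcal{S}$ that contains neither of the two. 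By Faithfulness, a conditional independence implies the corresponding separation; contrapositively,
\begin{align}
X^i_{t-\tau} ~\cancel{\ci}~ X^j_t ~|~ \mathcal{S} \qquad \text{for every } \mathcal{S}\subseteq \mathbf{X}^-_t\setminus\{X^i_{t-\tau},\,X^j_t\}.
\end{align}

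Next I would translate this into a statement about the algorithm. Algorithm~\ref{algo:pcs1} initializes $\widehat{\mathcal{P}}(X^j_t)$ with the full past $(\mathbf{X}_{t-1},\ldots,\mathbf{X}_{t-\tau_{\max}})$ and thereafter only ever deletes elements, so at every iteration $\widehat{\mathcal{P}}(X^j_t)\setminus\{X^i_{t-\tau}\}\subseteq \mathbf{X}^-_t\setminus\{X^i_{t-\tau},X^j_t\}$. Hence every candidate conditioning set $\mathcal{S}$ that the algorithm ever tests against the pair $(X^i_{t-\tau},X^j_t)$ in the null hypothesis of Eq.~\eqref{eq:pc_test} is one of the sets covered above, and the variable $X^i_{t-\tau}$ is removed only if that null is \emph{accepted} for some such $\mathcal{S}$. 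In the population version (a consistent conditional independence test in the limit of infinite sample size, with no test errors), the null $X^i_{t-\tau}\ci X^j_t\,|\,\mathcal{S}$ is accepted if and only if the conditional independence actually holds, which by the display above it never does. Therefore $X^i_{t-\tau}$ is never removed and is present in $\widehat{\mathcal{P}}(X^j_t)$ at termination; restricting the number of examined subsets per cardinality to $q_{\max}=1$ (the PC$_1$ variant) only shrinks the collection of conditioning sets that are tested and so cannot trigger any additional removal. Since $X^i_{t-\tau}\in\mathcal{P}(X^j_t)$ and $j$ were arbitrary, $\mathcal{P}(X^j_t)\subseteq\widehat{\mathcal{P}}(X^j_t)$ for all $j$.

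The main obstacle is not in the algorithmic bookkeeping but in pinning down the separation fact cleanly for the time series setting: one must use the definition of (m-)separation in the (infinite, stationary) time series graph to argue that two adjacent nodes admit no separating set among the remaining lagged variables, and note that truncating the past at $\tau_{\max}$ does not manufacture a spurious separating set, since any such set would still be a subset of the full past. Once the separation machinery of ref.~\cite{Runge2015} is invoked together with Faithfulness, the rest is immediate. (Note that the Causal Markov Condition is not needed for this direction; it enters only when one later argues, via the MCI step, that non-parents are eventually excluded.)
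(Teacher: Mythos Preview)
Your argument is correct and follows essentially the same route as the paper: both use that adjacent nodes admit no separating set and then invoke Faithfulness (in the paper via the contrapositive, ``removed $\Rightarrow$ some CI holds $\Rightarrow$ not a parent'') to conclude that a true parent can never be deleted by the algorithm. Your version is more explicit about the algorithmic bookkeeping, the role of $q_{\max}$, and the $\tau_{\max}$ truncation, which the paper's three-line proof leaves implicit.
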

\noindent
\begin{proof}
Suppose $X^i_{t-\tau}\notin \widehat{\mathcal{P}}(X^j_t)$. Causal Sufficiency implies that $X^i_{t-\tau}$ was observed and independence can be tested. For $p_{\max}=\infty$ (default parameter value in PC$_1$) and with a consistent conditional independence test in the limit of infinite sample size PC$_1$ removes $X^i_{t-\tau}$ from $\widehat{\mathcal{P}}(X^j_t)$ if and only if $X^i_{t-\tau}\ci X^j_t | \widehat{\mathcal{P}}(X^j_t){\setminus}\{X^i_{t-\tau}\}$ in the last step of PC$_1$. Now Faithfulness implies that then $X^i_{t-\tau}~\cancel{\to}~X^j_t$ and hence $X^i_{t-\tau}\notin\mathcal{P}(X^j_t)$.
\end{proof}
Lemma~\ref{lemma:superset} holds assuming only Causal Sufficiency and Faithfulness. If we additionally assume the Causal Markov Condition, PC$_1$ estimates exactly the true parents.
\begin{mylemma} \label{lemma:exactparents}
Let $\widehat{\mathcal{P}}(X^j_t)$ denote the estimated condition set of Algorithm~\ref{algo:pcs1} for $X^j\in \mathbf{X}$. Assuming Faithfulness, the Causal Markov Condition and Causal Sufficiency in the limit of infinite sample size we have that
\begin{align}
\widehat{\mathcal{P}}(X^j_t) = \mathcal{P}(X^j_t) ~~~\forall j 
\end{align}
that is, the estimated parents are the true parents.
\end{mylemma}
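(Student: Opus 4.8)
The plan is to obtain the equality by proving the two inclusions separately. The inclusion $\mathcal{P}(X^j_t)\subseteq\widehat{\mathcal{P}}(X^j_t)$ is already supplied by Lemma~\ref{lemma:superset} (using only Faithfulness and Causal Sufficiency), so the entire task reduces to the reverse inclusion $\widehat{\mathcal{P}}(X^j_t)\subseteq\mathcal{P}(X^j_t)$ for every $j$, i.e.\ to showing that PC$_1$ retains no spurious condition in the population limit. This is the step where the Causal Markov Condition enters, and it is the part I would actually write out.

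First I would isolate the structural fact about PC$_1$ already used in the proof of Lemma~\ref{lemma:superset}: the algorithm converges for a surviving link $X^i_{t-\tau}\to X^j_t$ only once the conditioning set has grown to the full remaining estimate, so that a kept condition must have passed the final test,
\begin{align}
X^i_{t-\tau}~\cancel{\ci}~X^j_t~\big|~\widehat{\mathcal{P}}(X^j_t)\setminus\{X^i_{t-\tau}\}. \label{eq:pc1survive}
\end{align}
Because the fast variant fixes $q_{\max}=1$ and therefore does \emph{not} exhaust all conditioning subsets, one cannot invoke the usual ``some separating set is found'' argument that establishes correctness of the full PC skeleton phase; the ``last-step'' characterisation~\eqref{eq:pc1survive} is the only available handle, and reducing the problem to it is the main subtlety here (it is, however, granted by Lemma~\ref{lemma:superset}, whose proof already relies on it).

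Second comes a short graphoid argument. Assume for contradiction that some $X^i_{t-\tau}\in\widehat{\mathcal{P}}(X^j_t)$ has $X^i_{t-\tau}\notin\mathcal{P}(X^j_t)$, and write $\mathcal{S}=\widehat{\mathcal{P}}(X^j_t)\setminus\{X^i_{t-\tau}\}$. Since $\widehat{\mathcal{P}}(X^j_t)$ is a subset of $\mathbf{X}^-_t$ throughout Algorithm~\ref{algo:pcs1} and, by Lemma~\ref{lemma:superset}, $\mathcal{P}(X^j_t)\subseteq\widehat{\mathcal{P}}(X^j_t)$, we get $\mathcal{P}(X^j_t)\subseteq\mathcal{S}$, and both $X^i_{t-\tau}$ and every element of $\mathcal{S}\setminus\mathcal{P}(X^j_t)$ lie in $\mathbf{X}^-_t\setminus\mathcal{P}(X^j_t)$ (all observed, by Causal Sufficiency). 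The Causal Markov Condition gives $X^j_t\ci\big(\mathbf{X}^-_t\setminus\mathcal{P}(X^j_t)\big)~|~\mathcal{P}(X^j_t)$; the decomposition property of conditional independence restricts this to $X^j_t\ci\big(\{X^i_{t-\tau}\}\cup(\mathcal{S}\setminus\mathcal{P}(X^j_t))\big)~|~\mathcal{P}(X^j_t)$, and the weak-union property then moves $\mathcal{S}\setminus\mathcal{P}(X^j_t)$ into the conditioning set to yield $X^i_{t-\tau}\ci X^j_t~|~\mathcal{S}$, contradicting~\eqref{eq:pc1survive}. Hence $\widehat{\mathcal{P}}(X^j_t)\subseteq\mathcal{P}(X^j_t)$ for all $j$, and combining with Lemma~\ref{lemma:superset} gives the claimed equality. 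I would also remark that, once Lemma~\ref{lemma:superset} is granted, this second inclusion uses only the Markov condition, Causal Sufficiency, and the semi-graphoid axioms (valid for any probability distribution), so Faithfulness enters the statement solely through Lemma~\ref{lemma:superset}.
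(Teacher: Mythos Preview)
Your proposal is correct and follows essentially the same route as the paper's own proof: use Lemma~\ref{lemma:superset} for $\mathcal{P}\subseteq\widehat{\mathcal{P}}$, then argue by contradiction using the last-step survival characterisation of PC$_1$ together with the Causal Markov Condition and weak union to obtain $X^i_{t-\tau}\ci X^j_t\,|\,\widehat{\mathcal{P}}(X^j_t)\setminus\{X^i_{t-\tau}\}$. Your write-up is in fact slightly more explicit than the paper's (you separate the decomposition and weak-union steps, and you correctly attribute~\eqref{eq:pc1survive} to the algorithm rather than to ``contraposition of Faithfulness'' as the paper phrases it), and your closing remark that Faithfulness enters only through Lemma~\ref{lemma:superset} is a nice observation not stated in the paper.
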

\noindent
\begin{proof}
From Lemma~\ref{lemma:superset} we know that  $\widehat{\mathcal{P}}(X^j_t)$ is a superset of $\mathcal{P}(X^j_t)$, so we only need to check whether \emph{all} parents in $\widehat{\mathcal{P}}(X^j_t)$ are also in $\mathcal{P}(X^j_t)$.
Assume the contrary that $X^i_{t-\tau}\in \widehat{\mathcal{P}}(X^j_t)$, but $X^i_{t-\tau}\notin \mathcal{P}(X^j_t)$. The contraposition of Faithfulness $X^i_{t-\tau}\in \widehat{\mathcal{P}}(X^j_t)$ implies that  $X^i_{t-\tau} ~\cancel{\ci}~ X^j_t | \widehat{\mathcal{P}}(X^j_t){\setminus}\{X^i_{t-\tau}\}$. Define $W=\widehat{\mathcal{P}}(X^j_t)\setminus\{\mathcal{P}(X^j_t),X^i_{t-\tau}\}$. The Causal Markov Condition reads $W \cup X^i_{t-\tau} ~\ci~ X^j_t | \mathcal{P}(X^j_t)$. From the weak union property of conditional independence it follows that $X^i_{t-\tau} ~\ci~ X^j_t | \mathcal{P}(X^j_t) \cup W$ which is equivalent to $X^i_{t-\tau} ~\ci~ X^j_t | \widehat{\mathcal{P}}(X^j_t){\setminus}\{X^i_{t-\tau}\}$, contrary to the assumption. Hence $\widehat{\mathcal{P}}(X^j_t)=\mathcal{P}(X^j_t)$.
\end{proof}
With these two Lemmas we can proof Theorem~\ref{thm:pcmci_consistency}.
\begin{proof}{(Theorem~\ref{thm:pcmci_consistency})}
From Lemma~\ref{lemma:exactparents} under the assumptions of Causal Sufficiency, Faithfulness, Causal Markov Condition, and with a consistent conditional independence test in the limit of infinite sample size, the first step of PCMCI estimates the true set of parents, that is $\widehat{\mathcal{P}}(X^j_t)=\mathcal{P}(X^j_t)$. The MCI test (Def.~\ref{eq:mit_test_SI}, Algorithm~\ref{algo:mit}) establishes the absence of a link, that is, $X^i_{t-\tau}\to X^j_{t} \notin \widehat{\mathcal{G}}$ if and only if
\begin{align} 
X^i_{t-\tau} ~&\ci~ X^j_{t} ~|~ \widehat{\mathcal{P}}(X^j_t)\setminus \{X^i_{t-\tau}\},\,\widehat{\mathcal{P}_{p_X}}(X^i_{t-\tau})\\
\stackrel{Lemma~\ref{lemma:exactparents}}{\Longleftrightarrow}
X^i_{t-\tau} ~&\ci~ X^j_{t} ~|~ \mathcal{P}(X^j_t)\setminus \{X^i_{t-\tau}\},\,\mathcal{P}_{p_X}(X^i_{t-\tau}).
\end{align}
We need to proof
\begin{align}
\text{1)}~~~X^i_{t-\tau}&\to X^j_{t} \notin \mathcal{G} ~~~\implies~~~X^i_{t-\tau}\to X^j_{t} \notin \widehat{\mathcal{G}} \\
\text{2)}~~~X^i_{t-\tau}&\to X^j_{t} \in \mathcal{G} ~~~\implies~~~X^i_{t-\tau}\to X^j_{t} \in \widehat{\mathcal{G}}\,.
\end{align}
Let $X=X^i_{t-\tau}$, $Y=X^j_t$, $\mathcal{P}^*_X=\mathcal{P}^*_{p_X}(X^i_{t-\tau})=\mathcal{P}_{p_X}(X^i_{t-\tau})\setminus\mathcal{P}(X^j_{t})$, $\mathcal{P}_Y=\mathcal{P}(X^j_t)$, $\mathcal{P}^*_Y=\mathcal{P}(X^j_{t})\setminus \{X^i_{t-\tau}\}$, and $\mathcal{R}=\mathbf{X}^-_t\setminus \{X_{t-\tau}, \mathcal{P}(X^j_{t}), \mathcal{P}_{p_X}(X^i_{t-\tau})\}$ for notational simplicity (see Fig.~\ref{fig:proof}). In addition to the standard assumptions of causal discovery, we will make use of the basic properties of conditional independence: Decomposition, weak union, and contraction, as well as their contrapositions\cite{Cover2006}.

Ad 1)
\begin{align}
&X\to Y \notin \mathcal{G} ~~\text{and}~~ (\mathcal{P}^*_X, \mathcal{R}) \cap \mathcal{P}_Y = \emptyset
~~\stackrel{\text{Markov}}{\Longrightarrow}~~ X, \mathcal{P}^*_X, \mathcal{R} ~\ci~ Y ~|~ \mathcal{P}_Y 
~~\stackrel{\text{Decomposition}}{\Longrightarrow}~~ X \mathcal{P}^*_X ~\ci~ Y ~|~ \mathcal{P}_Y  \\
&\stackrel{\text{Weak union}}{\Longrightarrow}~~ X ~\ci~ Y ~|~ \mathcal{P}_Y, \mathcal{P}^*_X
\end{align}
From Lemma 2 it now follows that 
\begin{align}
& X ~\ci~ Y ~|~ \mathcal{P}_Y, \mathcal{P}^*_X ~~\stackrel{\text{Lemma 2}}{\Longrightarrow}~~ X ~\ci~ Y ~|~ \widehat{\mathcal{P}}_Y, \widehat{\mathcal{P}}^*_X ~~\stackrel{\text{Def.~\ref{eq:mit_test_SI}}}{\Longrightarrow}~~X^i_{t-\tau}\to X^j_{t} \notin \widehat{\mathcal{G}}\,,
\end{align}
which proves the first part. 

Ad 2)
\begin{align}
&X\to Y \in \mathcal{G} 
~~\stackrel{\text{Def.~\ref{eq:def_graph}}}{\Longrightarrow}~~ X ~\cancel{\ci}~ Y ~|~ \mathcal{P}^*_Y, \mathcal{P}^*_X, \mathcal{R} 
~~\stackrel{\text{Contraposition of weak union}}{\Longrightarrow}~~ X \mathcal{R} ~\cancel{\ci}~ Y ~|~ \mathcal{P}^*_Y, \mathcal{P}^*_X \,.
\end{align}
Now the contraposition of contraction implies that 
\begin{align}
&  X, \mathcal{R} ~\cancel{\ci}~ Y ~|~ \mathcal{P}^*_Y, \mathcal{P}^*_X ~~\Longrightarrow~~ \text{either}~~~X ~\cancel{\ci}~ Y ~|~ \mathcal{P}^*_Y, \mathcal{P}^*_X ~~~\text{or}~~~ \mathcal{R} ~\cancel{\ci}~ Y ~|~ \mathcal{P}^*_Y, \mathcal{P}^*_X, X\,.
\end{align}
But the latter of these independence relations cannot hold since we assume the Causal Markov Condition which implies
\begin{align}
&X, \mathcal{P}^*_X, \mathcal{R} ~\ci~ Y ~|~ \mathcal{P}_Y 
~~\stackrel{\text{weak union}}{\Longrightarrow}~~ \mathcal{R}  ~\ci~ Y ~|~ \mathcal{P}^*_Y, \mathcal{P}^*_X, X\,.
\end{align}
Hence
\begin{align}
& X ~\cancel{\ci}~ Y ~|~ \mathcal{P}^*_Y, \mathcal{P}^*_X ~~\stackrel{\text{Lemma 2}}{\Longrightarrow}~~ X ~\cancel{\ci}~ Y ~|~ \widehat{\mathcal{P}}^*_Y, \widehat{\mathcal{P}}^*_X ~~\stackrel{\text{Def.~\ref{eq:mit_test_SI}}}{\Longrightarrow}~~X^i_{t-\tau}\to X^j_{t} \in \widehat{\mathcal{G}}\,,
\end{align}
which proves the second part.
\end{proof}

Note that the consistency of the population-version of PCMCI is a weaker statement than, for example, \emph{uniform consistency} which bounds the error probability as a function of the sample size $n$ giving a \emph{rate of convergence}. Robins \textit{et al.}\cite{Robins2003} showed that no uniformly consistent causal discovery technique from the class of independence-based methods \cite{Spirtes2000} exists since the convergence can always be made arbitrarily slow by a distribution that is \emph{almost unfaithful} with some dependencies made arbitrarily small. Uniform consistency can only be achieved under further assumptions that exclude these almost unfaithful dependencies\cite{Kalisch2008}. 
The causal assumptions are discussed further in ref.~\cite{Runge2018b}. 


\subsection{False positive control, effect size, and causal strength} \label{sec:mci_props}
The consistency proof does not require that the MCI test conditions on the parents $\mathcal{P}(X^i_{t-\tau})$, conditioning on $\mathcal{P}(X^j_t)$ suffices. We condition on the parents of the lagged variable for two reasons: (1)~For finite sample sizes, this approach helps to account for autocorrelation leading to correctly controlled false positive rates and (2)~the MCI test statistic value can be interpreted as a notion of \emph{causal strength} which allows to rank causal links in large-scale studies in a meaningful way. In the following we provide a mathematical intuition behind these two properties.

\paragraph{Autocorrelation and false positive control}
Conditional independence testing requires access to the \emph{null distribution}  of the test statistic under the null hypothesis of conditional independence. As described in Tab.~\ref{tab:CI_overview}, for the conditional independence tests considered in this paper the null distribution is either analytically given (ParCorr), pre-computed in advance (GPDC), or generated via a local permutation test (CMI). All three methods assume that the data for a particular test $X \ci Y ~|~ \mathbf{Z}$ is independent and identically distributed (\emph{iid}). 
Consider the simple two-variable model
\begin{align}
X_t &= a X_{t-1} + \eta^X_t \nonumber\\
Y_t &= b Y_{t-1} + c X_{t-1} + \eta^Y_t 
\end{align}
where $\eta^{X,Y}$ are \emph{iid}. For $c=0$ we have (unconditional) independence between $X$ and $Y$. But the Pearson correlation test statistic $\widehat{\rho}(X,Y)$ for this case is not distributed according to a $t$-distribution with $n-2$ degrees of freedom (Tab.~\ref{tab:CI_overview}). In fact, due to the autocorrelation between samples for $a,b>0$, the unknown true distribution has fewer degrees of freedom and will be typically wider than the assumed null distribution, leading to more false positives. The same holds for the pre-computed distribution for the distance correlation or the permutation-based distribution for mutual information.

An alternative approach is to consider a causality measure called transfer entropy (TE)\cite{Schreiber2000b}, which excludes information of the past of $Y$, defined as
\begin{align}
TE_{X\to Y}=I(X_{t-\tau};Y_t|Y_{t-1})\,,
\end{align}
if we truncate TE at lag one. For the above model for $c=0$ the TE can be simplified to $I(X_{t-\tau};Y_t|Y_{t-1})=I(X_{t-\tau};\eta^Y_t|Y_{t-1})$\cite{Runge2015}. $\eta^Y$ is \emph{iid}, but $X_{t-\tau}$ is not for $a>0$ and a permutation-based approach would still lead to false positives as analyzed further in ref.~\cite{Runge2018b}. The conditioning of the standalone PC algorithm also is based only on the parents of $Y$ and, hence, does not control false positives correctly for large autocorrelation in $X$ (see Fig.~\ref{fig:highdim_parcorr}C).

Typical remedies to account for autocorrelation are to adjust the degrees of freedom in some way, using pre-whitening, or by block-shuffling. While these approaches help to some extent for the simple bivariate case, they fail in the multivariate case that is relevant for causal discovery\cite{Runge2018b}.

Now consider the MCI test for this example which, in the ParCorr implementation, can be simplified as
\begin{align}
\rho_{X\to Y}^{\rm MCI}\left(\tau\right) &= \rho\left(X_{t-\tau};Y_{t}|\mathcal{P}\left(X_{t-\tau}\right),\mathcal{P}\left(Y_{t}\right)\backslash\left\{ X_{t-\tau}\right\} \right)\\
 &=  \rho\left(a X_{t-\tau-1} + \eta^X_{t-\tau};b Y_{t-1} + \eta^Y_t~|~Y_{t-1},X_{t-\tau-1}\right)\\
 &=  \rho\left( \eta^X_{t-\tau};\eta^Y_t~|~Y_{t-1},X_{t-\tau-1}\right)\\
 &=  \rho\left( \eta^X_{t-\tau};\eta^Y_t\right)\,.
\end{align}
Thus, the final Pearson correlation test on the residuals after regressing out the conditions only depends on the noise terms which are \emph{iid}. Therefore, the analytical null distribution for $n-2-2$ degrees of freedom is appropriate here and yields expected false positive rates. A similar reasoning holds for GPDC where also nonlinear auto-dependencies are regressed out. 

This case can be generalized to nonlinear additive models as discussed in Refs.~\cite{Runge2012b,Runge2015}, here we briefly summarize this result. 

\begin{mythm}{(MCI iid-ness)}
Assume a model with no link between $X_{t-\tau}$ and $Y_t$,
\begin{align} \label{eq:model}
X_{t-\tau} & = g_{X}\left(\mathcal{P}\left(X_{t-\tau}\right)\right)+\eta_{t-\tau}^{X} \nonumber\\
Y_{t} & = g_{Y}\left(\mathcal{P}\left(Y_{t}\right)\right)+\eta_{t}^{Y}, 
\end{align}
where $g_{X}$ and $g_{Y}$ are arbitrarily linear or nonlinear deterministic functions and the noise terms $\eta_{t-\tau}^{X},\eta^Y_t$ are \emph{iid} and we assume
\begin{align} \label{eq:nosidepaths}
\eta_{t}^{Y},\eta_{t-\tau}^{X} &~\ci~ \mathcal{P}\left(X_{t-\tau}\right),\mathcal{P}\left(Y_{t}\right)\,.
\end{align} 
Then
\begin{align}
I_{X\to Y}^{\rm MCI}\left(\tau\right) & = I\left(\eta_{t-\tau}^{X};\eta_{t}^{Y}\right)\\
 & =  0\,.
\end{align}
\end{mythm}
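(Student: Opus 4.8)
The plan is to collapse the conditional mutual information that defines the MCI statistic down to a plain mutual information between the two noise innovations, and then to make that vanish using the independence properties of the noise. First I would fix notation: since there is no link $X_{t-\tau}\to Y_t$, Definition~\ref{eq:def_graph} gives $X_{t-\tau}\notin\mathcal{P}(Y_t)$, so the conditioning set of the MCI test in Eq.~\eqref{eq:mit_test_SI} is simply $\mathbf{Z}:=\mathcal{P}(Y_t)\cup\mathcal{P}(X_{t-\tau})$, and moreover $X_{t-\tau}\notin\mathbf{Z}$ (a variable is not among its own parents). By the model~\eqref{eq:model} we may write $X_{t-\tau}=\eta^X_{t-\tau}+g_X(\mathcal{P}(X_{t-\tau}))$ and $Y_t=\eta^Y_t+g_Y(\mathcal{P}(Y_t))$, and the key structural observation is that both shifts $g_X(\mathcal{P}(X_{t-\tau}))$ and $g_Y(\mathcal{P}(Y_t))$ are measurable deterministic functions of sub-vectors of $\mathbf{Z}$.

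Next I would invoke the elementary fact that conditional mutual information is invariant under adding to each of its arguments a measurable function of the conditioning variable: for measurable $f,g$ one has $I\big(A+f(C);B+g(C)\mid C\big)=I(A;B\mid C)$, because conditionally on $C=c$ the maps $a\mapsto a+f(c)$ and $b\mapsto b+g(c)$ are bijections and mutual information is invariant under such coordinate changes. Applying this with $A=\eta^X_{t-\tau}$, $B=\eta^Y_t$, $C=\mathbf{Z}$ yields
\begin{align}
I^{\rm MCI}_{X\to Y}(\tau)=I\big(X_{t-\tau};Y_t\mid\mathbf{Z}\big)=I\big(\eta^X_{t-\tau};\eta^Y_t\mid\mathbf{Z}\big).
\end{align}
Assumption~\eqref{eq:nosidepaths} states that the pair $(\eta^X_{t-\tau},\eta^Y_t)$ is jointly independent of $\mathbf{Z}$, and for a pair independent of the conditioning set the conditional and unconditional mutual information coincide, so $I(\eta^X_{t-\tau};\eta^Y_t\mid\mathbf{Z})=I(\eta^X_{t-\tau};\eta^Y_t)$. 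Finally, iid-ness of the noise terms (mutual independence across variables and time) gives $\eta^X_{t-\tau}\ci\eta^Y_t$, hence $I(\eta^X_{t-\tau};\eta^Y_t)=0$, which closes the argument. The very same chain of manipulations, with $I$ replaced by the partial-correlation functional, reproduces line by line the explicit ParCorr computation displayed just above the theorem.

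I expect the only delicate point to be the bookkeeping in the first two steps: one must check that the \emph{entire} parent sets $\mathcal{P}(X_{t-\tau})$ and $\mathcal{P}(Y_t)$ sit inside the conditioning set $\mathbf{Z}$ — this is precisely why the statement conditions on the full $\mathcal{P}(X_{t-\tau})$ and not merely on the $p_X$ strongest parents $\widehat{\mathcal{P}}_{p_X}$, and it is the reason the earlier remark notes that too small a $p_X$ does not guarantee a well-calibrated test — and that the functions $g_X,g_Y$ are measurable so the shift-invariance lemma genuinely applies. Once that is in place, the two independence reductions are immediate consequences of assumption~\eqref{eq:nosidepaths} and the iid assumption, so no further obstacle remains; for the linear case one need not even invoke measurable-function invariance, as the displayed residual calculation is fully explicit.
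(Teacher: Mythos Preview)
Your proof is correct and follows essentially the same route as the paper: substitute the structural equations into the MCI definition, use translational invariance of CMI with respect to functions of the conditioning set to reduce to $I(\eta^X_{t-\tau};\eta^Y_t\mid\mathbf{Z})$, then drop the conditioning via assumption~\eqref{eq:nosidepaths} and conclude zero from the iid assumption. Your write-up is in fact slightly more careful than the paper's, which lumps the last two steps together under a single reference to~\eqref{eq:nosidepaths}; you correctly separate the role of~\eqref{eq:nosidepaths} (dropping the conditioning) from the role of the iid hypothesis (yielding $\eta^X_{t-\tau}\ci\eta^Y_t$), and your remark about needing the full $\mathcal{P}(X_{t-\tau})$ rather than a truncated $\widehat{\mathcal{P}}_{p_X}$ is an apt gloss on the paper's own caveat.
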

\begin{proof}
\begin{align}
I_{X\to Y}^{\rm MCI}\left(\tau\right) & =  I\left(X_{t-\tau};Y_{t}|\mathcal{P}\left(X_{t-\tau}\right),\mathcal{P}\left(Y_{t}\right) \right)\\
 & = I\left(g_{X}\left(\mathcal{P}\left(X_{t-\tau}\right)\right)+\eta_{t-\tau}^{X};g_{Y}\left(\mathcal{P}\left(Y_{t}\right) \right)+\eta_{t}^{Y}|\mathcal{P}\left(X_{t-\tau}\right),\mathcal{P}\left(Y_{t}\right) \right)\\
 & = I\left(\eta_{t-\tau}^{X};\eta_{t}^{Y}|\mathcal{P}\left(X_{t-\tau}\right),\mathcal{P}\left(Y_{t}\right) \right) \label{eq:mci_trans}\\
 & = I\left(\eta_{t-\tau}^{X};\eta_{t}^{Y}\right)=0 \label{eq:a}
\end{align}
where Eq.~\ref{eq:mci_trans} follows from translational invariance of CMI\cite{Cover2006} and Eq.~\eqref{eq:a} from the independence of the noise terms Eq.~\eqref{eq:nosidepaths}.
\end{proof}
Importantly, the innovation terms $\eta_{t-\tau}^{X},\eta_{t}^{Y}$ are \emph{iid}. Then the dependence of MCI only on these innovation terms implies that statistical tests on $I_{X\to Y}^{\rm MCI}\left(\tau\right)=0$ can be conducted under the \emph{iid-assumption} and the null distribution assumptions discussed above are appropriate yielding well-calibrated tests. Assumption~\eqref{eq:nosidepaths} is further discussed in ref.~\cite{Runge2015}.

Note, however, that this result is derived here for the population version of MCI and its application to empirical estimators should be considered with some caution and would rely on consistency and unbiasedness of these estimators, e.g., linear regression in ParCorr and GP in GPDC. The consistency properties of GP regression for specific classes of functions have been studied in ref.~\cite{Choi2007}. A full analysis of GPDC would require considering those learning theoretic guarantees on regression functions and how they impact the properties of the subsequent distance correlation independence test, which is beyond the scope of this work. For CMI no finite sample consistency results are available\cite{Runge2018a}.

Our numerical experiments show that the MCI test largely has the expected rate of false positives even for strongly autocorrelated and nonlinear dependencies. This  approach to avoiding time-dependence in the sample we found to outperform other remedies such as pre-whitening or block-shuffling\cite{Runge2018b}.

Also the FullCI test is essentially performed on independent samples since the condition on $\mathbf{X}^-_t\backslash\left\{ X_{t-\tau}\right\}$ removes any dependence with the past. However, in the GPDC implementation, we found inflated FPRs (Fig.~\ref{fig:highdim_nonlin}A), which is likely due to high dimensionality where the autocorrelations are not properly regressed out.

\paragraph{Causal effect size and FullCI} 
Now we turn to the dependent case where there \emph{is} a causal link between $X$ and $Y$. 
Next to the lower dimensionality of the MCI test compared to FullCI, one can prove that the MCI test statistic generally has a larger or equal effect size compared to FullCI. Let $I$ denote conditional mutual information as a general measure of dependence.
\begin{mythm}{(MCI is larger or equal than FullCI)} \label{thm:fullci}
With FullCI defined in Eq.~\ref{def:fullci} it holds that 
\begin{align} \label{eq:gc_mit}
I^{\rm FullCI}_{X\to Y}(\tau) & \leq I^{\rm MCI}_{X\to Y}(\tau)\,.
\end{align}
\end{mythm}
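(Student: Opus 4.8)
The plan is to reduce the inequality to a statement about conditioning on an "irrelevant" block of past variables and then apply the chain rule of conditional mutual information twice. First I would adopt the notation of the consistency proof (cf.\ Fig.~\ref{fig:proof}): write $X=X^i_{t-\tau}$, $Y=X^j_t$, and partition the FullCI conditioning set as
\begin{align}
\mathbf{X}^-_t\setminus\{X\} \;=\; \mathcal{P}^*_Y \,\cup\, \mathcal{P}^*_X \,\cup\, \mathcal{R},
\end{align}
with $\mathcal{P}^*_Y=\mathcal{P}(X^j_t)\setminus\{X\}$, $\mathcal{P}^*_X=\mathcal{P}_{p_X}(X^i_{t-\tau})\setminus\mathcal{P}(X^j_t)$, and the remainder $\mathcal{R}=\mathbf{X}^-_t\setminus(\{X\}\cup\mathcal{P}(X^j_t)\cup\mathcal{P}_{p_X}(X^i_{t-\tau}))$. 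Using the population identity $\widehat{\mathcal{P}}=\mathcal{P}$ (Lemma~\ref{lemma:exactparents}), Eqs.~\eqref{def:fullci} and \eqref{eq:mit_test_SI} then read $I^{\rm FullCI}_{X\to Y}(\tau)=I(X;Y\,|\,\mathcal{P}^*_Y,\mathcal{P}^*_X,\mathcal{R})$ and $I^{\rm MCI}_{X\to Y}(\tau)=I(X;Y\,|\,\mathcal{P}^*_Y,\mathcal{P}^*_X)$, so that the only difference between the two conditioning sets is the block $\mathcal{R}$, and the theorem becomes the assertion that adding $\mathcal{R}$ to the conditioning set cannot increase the conditional mutual information.

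Second, I would isolate the following elementary lemma and prove it by the chain rule: writing $\mathbf{Z}_1=(\mathcal{P}^*_Y,\mathcal{P}^*_X)$ and $\mathbf{Z}_2=\mathcal{R}$, if $Y\ci\mathbf{Z}_2\,|\,X,\mathbf{Z}_1$ then $I(X;Y\,|\,\mathbf{Z}_1)\ge I(X;Y\,|\,\mathbf{Z}_1,\mathbf{Z}_2)$. Expanding $I(X;Y,\mathbf{Z}_2\,|\,\mathbf{Z}_1)$ by the chain rule in the two possible orders gives
\begin{align}
I(X;Y\,|\,\mathbf{Z}_1)-I(X;Y\,|\,\mathbf{Z}_1,\mathbf{Z}_2) \;=\; I(X;\mathbf{Z}_2\,|\,\mathbf{Z}_1)-I(X;\mathbf{Z}_2\,|\,Y,\mathbf{Z}_1),
\end{align}
and expanding $I(X,Y;\mathbf{Z}_2\,|\,\mathbf{Z}_1)$ in the two orders, together with the hypothesis $I(Y;\mathbf{Z}_2\,|\,X,\mathbf{Z}_1)=0$, gives $I(X;\mathbf{Z}_2\,|\,\mathbf{Z}_1)=I(Y;\mathbf{Z}_2\,|\,\mathbf{Z}_1)+I(X;\mathbf{Z}_2\,|\,Y,\mathbf{Z}_1)\ge I(X;\mathbf{Z}_2\,|\,Y,\mathbf{Z}_1)$, using non-negativity of conditional mutual information; combining the two displays yields the lemma. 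The hypothesis of the lemma is supplied by the Causal Markov Condition: in the dependent case $X\in\mathcal{P}(X^j_t)$, so $\mathcal{P}(X^j_t)=\{X\}\cup\mathcal{P}^*_Y$ and the rest of the past, $\mathcal{P}^*_X\cup\mathcal{R}$, satisfies $Y\ci(\mathcal{P}^*_X,\mathcal{R})\,|\,X,\mathcal{P}^*_Y$, whence weak union gives $Y\ci\mathcal{R}\,|\,X,\mathcal{P}^*_Y,\mathcal{P}^*_X$ as required. (Note the argument uses the Markov property only on the $Y$-side, so it holds for any choice of $p_X$.)

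The main obstacle here is bookkeeping rather than substance: one must be careful that the MCI conditioning set is the \emph{union} of $\widehat{\mathcal{P}}(X^j_t)\setminus\{X\}$ and $\widehat{\mathcal{P}}_{p_X}(X^i_{t-\tau})$ (these may overlap), that $X$ itself is excluded on both sides, and that $\mathcal{P}^*_Y,\mathcal{P}^*_X,\mathcal{R}$ genuinely partition $\mathbf{X}^-_t\setminus\{X\}$ so that FullCI really does condition on exactly $\mathbf{Z}_1\cup\mathbf{Z}_2$. A secondary caveat worth stating, in parallel with the MCI iid-ness result, is that the inequality is at the level of population quantities — $I$ denotes true conditional mutual information, and its non-negativity, which is what powers the final step, need not transfer verbatim to finite-sample estimators of effect size.
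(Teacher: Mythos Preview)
Your proof is correct and essentially the same as the paper's: both set up the partition $\mathcal{P}^*_Y,\mathcal{P}^*_X,\mathcal{R}$ of $\mathbf{X}^-_t\setminus\{X\}$, invoke the Causal Markov Condition to obtain $I(\mathcal{R};Y\mid X,\mathcal{P}^*_Y,\mathcal{P}^*_X)=0$, and conclude via the chain rule together with non-negativity of conditional mutual information. The paper's execution is marginally more direct---it decomposes the single quantity $I(X,\mathcal{R};Y\mid \mathcal{P}^*_Y,\mathcal{P}^*_X)$ in the two orders of the chain rule to get $I^{\rm MCI}=I(\mathcal{R};Y\mid \mathcal{P}^*_Y,\mathcal{P}^*_X)+I^{\rm FullCI}\ge I^{\rm FullCI}$ in one stroke---whereas you arrive at the identical gap term $I(\mathcal{R};Y\mid \mathcal{P}^*_Y,\mathcal{P}^*_X)$ via two separate chain-rule expansions; the substance is the same.
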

\begin{proof}
To simplify notation (see Fig.~\ref{fig:proof}), denote $X=X_{t-\tau}$, $Y=Y_{t}$, $\mathcal{P}_X=\mathcal{P}_{p_X}(X^i_{t-\tau})$, $\mathcal{P}^*_Y=\mathcal{P}(X^j_{t})\setminus \{X^i_{t-\tau}\}$, and $\mathcal{R}=\mathbf{X}_t^-\setminus \mathcal{P}^*_Y,\mathcal{P}_X$. 
Thus, $\mathcal{R}$ denotes the additional conditions of FullCI compared to MCI. Note that these are independent of $Y$ given $(\mathcal{P}^*_Y,\mathcal{P}_X,X)$, because $\{\mathcal{P}^*_Y,\mathcal{P}_X\} \cup X=\left\{\mathcal{P}\left(Y_{t}\right) \cup \mathcal{P}\left(X_{t-\tau}\right)\right\}\backslash\left\{ X_{t-\tau}\right\} \cup X_{t-\tau}=\mathcal{P}\left(Y_{t}\right) \cup \mathcal{P}\left(X_{t-\tau}\right)$ contains all of $Y$'s parents and by the Markov assumption $I\left(\mathcal{R};Y|\mathcal{P}^*_Y,\mathcal{P}_X,X\right)=0$. Now consider the following two different possibilities for decomposing a multivariate mutual information using the chain rule:
\begin{align}
I\left((X,\mathcal{R});Y|\mathcal{P}^*_Y,\mathcal{P}_X\right) &=  \underbrace{I\left(X;Y|\mathcal{P}^*_Y,\mathcal{P}_X\right)}_{\rm MCI} + \underbrace{I\left(\mathcal{R};Y|\mathcal{P}^*_Y,\mathcal{P}_X,X\right)}_{=0} \\
&=  \underbrace{I\left(\mathcal{R};Y|\mathcal{P}^*_Y,\mathcal{P}_X\right)}_{\geq 0} +  \underbrace{I\left(X;Y|W,\mathcal{R}\right)}_{\rm FullCI}\\
\implies& I^{\rm MCI}_{X\to Y}(\tau)=I\left(X;Y|\mathcal{P}^*_Y,\mathcal{P}_X\right) \geq I\left(X;Y|\mathcal{P}^*_Y,\mathcal{P}_X,\mathcal{R}\right)=I^{\rm FullCI}_{X\to Y}(\tau)
\end{align}
\end{proof}
FullCI and MCI are equal if the additional conditioning variables $Z$ are independent of $Y$ given $W$. Both the lower dimensionality and higher effect size are responsible for the empirically found higher power of the MCI test compared to FullCI.

\paragraph{Causal strength}
MCI's effect size is not only always larger or equal to FullCI, but also can be interpreted as a measure of causal strength.
Consider model~\eqref{eq:model} with an added dependency term of $Y$ on $X$:
\begin{align} \label{model:theorem_dep}
X_{t-\tau} & = g_{X}\left(\mathcal{P}\left(X_{t-\tau}\right)\right)+\eta_{t-\tau}^{X} \nonumber\\
Y_{t} & = g_{Y}\left(\mathcal{P}\left(Y_{t}\right)\backslash\left\{ X_{t-\tau}\right\} \right)+f\left(X_{t-\tau}\right)+\eta_{t}^{Y}\,.
\end{align}
We now investigate an information-theoretic definition of causal strength based on conditional mutual information:
\begin{align}
I\left(\eta^X_{t-\tau}; f\left(X_{t-\tau}\right)+\eta_{t}^{Y}~|~\mathcal{P}\left(X_{t-\tau}\right)\right)
\end{align}
If we had experimental access for intervening in $\eta_{t-\tau}^{X}$ at a particular time $t-\tau$, then causal strength information-theoretically quantifies how much of this momentary perturbation can be detected in $Y_t$, excluding information contained in the past. This measure directly corresponds to ``momentary'' dependence in $Y_{t}$ on $X_{t-\tau}$ that does not come through the parents of $X_{t-\tau}$.  There are several proposals for measures of causal strength, see, for example, ref.~\cite{Janzing2013}. Our definition of causal strength is based on the fundamental concept of \emph{source entropy} as further discussed in ref.~\cite{Runge2015}.

MCI for this model is an estimator of causal strength since, similar to the above proof,
\begin{align}
I_{X\to Y}^{\rm MCI}\left(\tau\right) & =  I\left(X_{t-\tau};Y_{t}|\mathcal{P}\left(X_{t-\tau}\right),\mathcal{P}\left(Y_{t}\right)\backslash\left\{ X_{t-\tau}\right\} \right)\\
 & =  I\left(g_{X}\left(\mathcal{P}\left(X_{t-\tau}\right)\right)+\eta_{t-\tau}^{X};g_{Y}\left(\mathcal{P}\left(Y_{t}\right)\backslash\left\{ X_{t-\tau}\right\} \right)+f\left(X_{t-\tau}\right)+\eta_{t}^{Y}|\mathcal{P}\left(X_{t-\tau}\right),\mathcal{P}\left(Y_{t}\right)\backslash\left\{ X_{t-\tau}\right\} \right)\\
 & =  I\left(\eta_{t-\tau}^{X};f\left(X_{t-\tau}\right)+\eta_{t}^{Y}|\mathcal{P}\left(X_{t-\tau}\right),\mathcal{P}\left(Y_{t}\right)\backslash\left\{ X_{t-\tau}\right\} \right) \\
 & = I\left(\eta_{t-\tau}^{X};f\left(X_{t-\tau}\right)+\eta_{t}^{Y}|\mathcal{P}\left(X_{t-\tau}\right)\right)\,.
\end{align}
For a linear dependence $f\left(X_{t-\tau}\right)=c X_{t-\tau}$, MCI can be further simplified:
\begin{align}
I_{X\to Y}^{\rm MCI}\left(\tau\right) &= I\left(\eta_{t-\tau}^{X};c\eta_{t-\tau}^{X} + \eta_{t}^{Y}\right) \label{eq:c}\,
\end{align}
which for partial correlation in the Gaussian case becomes
\begin{align}
\rho^{\rm MCI}_{X\to Y} &= \frac{c \sigma_X}{\sqrt{\sigma^2_Y + c^2 \sigma^2_X}}\,,
\end{align}
where $\sigma^2_{\cdot}$ now denotes the variances of the noise terms $\eta$.
Thus, for a linear additive dependency, where causal strength can be attributed to a single coefficient $c$, MCI depends only on this coefficient and on the noise terms, but not on $g_{Y},g_{X}$. MCI is then independent of dependencies due to the parents $\mathcal{P}\left(X_{t-\tau}\right)$ and $\mathcal{P}\left(Y_{t}\right)$, which could include autodependencies. A causal signal can, thus, be better detected against noise coming from confounding drivers or autocorrelation. This theoretical result is confirmed in the numerical experiments in Fig.~\ref{fig:algo_results_powerscaling}.

Pure correlation, for example, has a power that scales with the correlation coefficient as an effect size. But correlation can be very different from the causal effect, that is, from the link coefficient in a linear model. Take the following example:
\begin{align}
Z_t &= \eta^Z_t \nonumber\\
X_t &= a Z_{t-1} + \eta^X_t \nonumber\\
Y_t &= b Z_{t-2} + c X_{t-1} + \eta^X_t 
\end{align}
Here the correlation for the link $X_{t-1}\to Y_t$ is
\begin{align}
\rho(X_{t-1},Y_{t}) &= \frac{c \Gamma_X + ab \Gamma_Z}{\sqrt{\Gamma_X}\sqrt{\Gamma_Y}}\,,
\end{align}
where $\Gamma_{\cdot}$ denotes the variances. The correlation, thus, depends not only on $c$ and may even become zero depending on $a$ and $b$. The MCI partial correlation, on the other hand, estimates the causal strength given by $\frac{c \sigma_X}{\sqrt{\sigma^2_Y + c^2 \sigma^2_X}}$ as derived above. Hence, MCI depends only on the coefficient $c$ and the noise variances. This explains that PCMCI closely follows the actual causal strength seen in Fig.~\ref{fig:algo_results_powerscaling}.
On the other hand, for the nonlinear cases, there can still be various dependencies because the function $f$ mixes $\eta_{t-\tau}^{X}$ with $\mathcal{P}\left(X_{t-\tau}\right)$. As for the consistency proof given above, the results here are only derived for the population version of MCI.



\section{Numerical experiments} \label{sec:algo_model_description}

\subsection{Model setup}
To evaluate and compare different causal discovery methods, we use a model that mimics the properties of real data, but where the true underlying relationships are known. Here we model four of the major challenges of time series from complex systems such as the Earth: High-dimensionality, nonlinearity, strong autocorrelation, and time lagged causal dependencies.
Consider the following model from which we generate 20 ensemble members per number of variables $N$, number of links $L$, and coupling strength $c$:
For $i,j \in \{1,... , N\}$ we randomly choose $L$ links $i\to j$ with $i\neq j$ and generate time series according to
\begin{align} \label{eq:numericalmodel}
X_t^j &= a_j X^j_{t-1} + c\sum_i f_{i}(X^i_{t-\tau_i}) + \eta^j_t
\end{align}
for $j=1,... , N$ and where
\begin{itemize}
\item $a_j$ are uniformly randomly drawn from $\{0, 0.2, 0.4,0.6, 0.8, 0.9 \}$ for one half of the ensemble and from $\{0.6, 0.8, 0.9, 0.95\}$ for another, more autocorrelated, half of the 20 network ensemble members, except for the high-density experiments where $a_j$ are only drawn from $\{0, 0.2, 0.4,0.6, 0.8, 0.9 \}$.
\item \emph{iid} Gaussian noise $\eta^j \sim \mathcal{N}(0,1)$
\item for ParCorr experiments: $f_i(x)=f^{(1)}(x)=x$; for nonlinear model experiments of the $L$ links in each network 50\% are linear functions $f^{(1)}(x)=x$, 25\% are nonlinear $f^{(2)}(x)=(1-4 e^{-x^2/2})x$ and 25\% are nonlinear $f^{(3)}(x)=(1-4 x^3 e^{-x^2/2})x$
\item $\tau_i$ uniformly randomly drawn from $\{1,\,2\}$
\item $c$ is constant for all links in a model and its absolute value differs among the experiments (see descriptions in Tab.~\ref{tab:experiments}); the sign of $c$ is positive or negative with equal probability
\end{itemize}
To guarantee stationarity, the functions $f_i(x)$ are all linear in the limit of large $x$ and we dismiss all models for which the corresponding vector autoregressive model with nonlinear functions $f_i$ replaced by linear ones is nonstationary according to a unit root test\cite{Li2009a}. Fig.~\ref{fig:highdim_parcorr}B gives an example realization. For each number of variables $N$ and coefficient $c$ we generated 20 (if not noted otherwise) randomly drawn network topologies with $L=N$ links (except for some experiments with $L=2N$ and the bivariate case, $N=2$, with $L=1$). With $L=N$ links in each model, we have an average cross-in-degree of $1$ for all network sizes (plus an autodependency). The cross-link density, on the other hand, decays with $N$ as $\frac{N}{N(N-1)\tau_{\max}}=\frac{1}{(N-1)\tau_{\max}}$.

\subsection{Performance evaluation}
To assess false positives (FPR) and true positives (TPR) for the individual links in each model, $100$ time series realizations were generated for each model. Note that the error in the estimate of a FPR of $0.05$ (or a TPR of $0.95$) is roughly $\sqrt{0.05(1-0.05)/100}\approx 0.02$.

The bottom rows in most figures show boxplots of the distribution of FPRs and the upper row(s) of the TPR for linear (and nonlinear) dependencies. Only cross-links were considered here. As illustrated in Fig.~\ref{fig:highdim_parcorr}A, the left and right boxplots in the figures depict the distributions for all weakly autocorrelated pairs with mean autocorrelation $(\rho(X_{t-1},X_t)+\rho(Y_{t-1},Y_t))/2<0.7$ among the two variables $X$ and $Y$ of a link, and for strongly autocorrelated pairs ($(\rho(X_{t-1},X_t)+\rho(Y_{t-1},Y_t))/2\geq 0.7$), respectively.  
The boxes show the 25-75\% and whiskers the 1-99\% percentile range, the median is marked by a bar and the mean with `x'. Note the logarithmic y-axis in the bottom panel for FPR $> 0.1$.

The tick labels on the top of the figures note the average runtime and its standard deviation across the different model setups. 
The runtime estimates were evaluated on Intel Xeon E5-2667 v3 8C processors with 3.2GHz. These runtimes will depend on implementation.


In Tab.~\ref{tab:experiments} we list the model setups for the numerical experiments. Table~\ref{tab:methods} gives details on the compared methods. The experiments were evaluated on a high-performance cluster.
\section{Tigramite software package} \label{sec:tigramite}

PCMCI is implemented in the \verb|Tigramite| software package (current version 3.0).  \verb|Tigramite| is a time series analysis python module for linear and nonlinear causal inference available from \verb|https://github.com/jakobrunge/tigramite|. 
\verb|Tigramite| contains classes for PCMCI and the different conditional independence tests, as well as a module that contains several plotting functions to generate high-quality plots of time series, lag functions, and causal graphs as depicted in Fig.~\ref{fig:highdim_parcorr}A. 
Documentation can be found on the repository site.

\onecolumn

\clearpage
\section{Algorithms}

\begin{algorithm}[H]
\caption{
Pseudo-code for condition-selection algorithm to estimate parents of $X^j_t$; we use this algorithm as a pre-selection step in PCMCI with $p_{\max}=N \tau_{\max}$ (i.e., no restriction on the maximum number of parents) and $q_{\max}=1$; for the standalone PC-stable algorithm, we set $q_{\max}$ to a large value of $10$.}
\begin{algorithmic}[1]
\Require Time series dataset $\mathbf{X}=(X^1,\,X^2,\ldots,X^N)$, selected variable $X^j$, maximum time lag $\tau_{\max}$, significance threshold $\alpha$, maximum condition dimension $p_{\max}$ (default $p_{\max}=N \tau_{\max}$), maximum number of combinations $q_{\max}$, conditional independence test function
\Function{CI}{$X,\,Y,\,\mathbf{Z}$} 
    \State Test $X ~\ci~ Y ~|~ \mathbf{Z}$ using test statistic measure $I$
    \State \Return $p$-value, test statistic value $I$
\EndFunction
\State Initialize preliminary set of parents $\widehat{\mathcal{P}}(X^j_t)=\{ X^i_{t-\tau}: i =1,\ldots,N,~~\tau=1,\ldots,\tau_{\max} \}$
\State Initialize dictionary of test statistic values  $I^{\min}(X^i_{t-\tau}\to X^j_t)=\infty ~~\forall~ X^i_{t-\tau} \in \widehat{\mathcal{P}}(X^j_t)$
\For {$p=0,\ldots,p_{\max}$ }
    \If{$|\widehat{\mathcal{P}}(X^j_t)|-1<p$}
        \State Break for-loop
    \EndIf
    \ForAll{$X^i_{t-\tau}$ in $\widehat{\mathcal{P}}(X^j_t)$} 
        \State $q=-1$
        \ForAll{lexicographically chosen $\mathcal{S}\subseteq\widehat{\mathcal{P}}(X^j_t)\setminus \{X^i_{t-\tau}\}$ with $|\mathcal{S}|=p$}
            \State $q=q+1$
            \If{$q\geq q_{\max}$}
                \State Break from inner for-loop
            \EndIf
            \State Run CI test to obtain $(\text{$p$-value},\,I) \gets$ \Call{CI}{$X^i_{t-\tau},\,X^j_{t},\,\mathcal{S}$}

            \If{$|I| < I^{\min}(X^i_{t-\tau}\to X^j_t)$}   \Comment Store minimum $I$ of parent among all tests until now
                \State $I^{\min}(X^i_{t-\tau}\to X^j_t) = |I|$
            \EndIf
            \If{$p$-value $> \alpha$} \Comment Removed only after all $X^i_{t-\tau}$ have been tested
                \State Mark $X^i_{t-\tau}$ for removal from $\widehat{\mathcal{P}}(X^j_t)$ 
                \State Break from inner for-loop
            \EndIf
            
        \EndFor
    \EndFor
    \State Remove non-significant parents from $\widehat{\mathcal{P}}(X^j_t)$
    \State Sort parents in $\widehat{\mathcal{P}}(X^j_t)$ by $I^{\min}(X^i_{t-\tau}\to X^j_t)$ from largest to smallest
\EndFor
\State \Return $\widehat{\mathcal{P}}(X^j_t)$
\end{algorithmic}
 \label{algo:pcs1}
\end{algorithm}

\begin{algorithm}[H]
\caption{
Pseudo-code for MCI causal discovery step. Here we state the algorithm for $\tau\geq0$, then causal links for $\tau=0$ correspond to contemporaneous links, which are left undirected here.}
\begin{algorithmic}[1]
\Require Time series dataset $\mathbf{X}=(X^1,\,X^2,\ldots,X^N)$, significance level $\alpha$, sorted parents $\widehat{\mathcal{P}}(X^j_t)$ for all variables $X^j$ estimated with Algorithm~\ref{algo:pcs1}, maximum time lag $\tau_{\max}$, maximum number $p_{X}$ of parents of variable $X^i$
\ForAll{$(X^i_{t-\tau}, X^j_t)$ with $i =1,\ldots,N$ and  $\tau=0,\ldots,\tau_{\max}$, excluding $(X^j_{t}, X^j_t)$ }
        \State Remove $X^i_{t-\tau}$ from $\widehat{\mathcal{P}}(X^j_t)$  if necessary
        \State Define $\widehat{\mathcal{P}}_{p_X}(X^i_{t-\tau})$ as the first $p_X$ parents from $\widehat{\mathcal{P}}(X^i_t)$, shifted by $\tau$
        \State Run MCI test to obtain $(\text{$p$-value},\,I) \gets$ \Call{CI}{$X^i_{t-\tau},\,X^j_{t},\,\mathbf{Z}=\{\widehat{\mathcal{P}}(X^j_t),\, \widehat{\mathcal{P}}_{p_X}(X^i_{t-\tau})\}$} 
\EndFor
\State Optionally adjust $p$-values of all links by False Discovery Rate-approach with significance level $\alpha$ 
\State \Return $p$-values or $q$-values (for FDR-adjusted tests) and MCI test statistic values
\end{algorithmic}
 \label{algo:mit}
\end{algorithm}

\clearpage
\begin{algorithm}[b]
\caption{Adaptive lasso regression for inference of non-zero coefficients and p-values for model $Y=\mathbf{X}\beta$. Lasso  implemented in \texttt{sklearn} package \texttt{LassoCV} with default parameters (except \texttt{fit\_intercept=False}) and  with $\lambda_n$ chosen by cross-validation using \texttt{TimeSeriesSplit(n\_splits=5)}. In the numerical experiments we used $k_{\max}=5$. The first part is adapted from \texttt{gist.github.com/agramfort/1610922}.}
\begin{algorithmic}[1]
\Require Data $\mathbf{x}\in \mathbb{R}^{n\times d}$, $y\in \mathbb{R}^{n}$, maximum number $k_{\max}$ of iterations
\State Standardize $\mathbf{x}$ and $y$
\State Initialize weights $w_j = 1$ for $j=1,\ldots,d$
\For {$k=1,\ldots,k_{\max}$ }
    \State Scale features $\mathbf{x}_j^* = \mathbf{x}_j/w_j$ for $j=1,\ldots,d$
    \State Solve Lasso problem with $\lambda_n$ chosen by time series based cross-validation
    \begin{align}
        \mathbf{\beta}^* &= \argmin_{\mathbf{\beta}} \left\Vert y-\sum_{j=1}^d \mathbf{x}_j^* \beta_j  \right\Vert^2 + \lambda_n  \sum_{j=1}^d |\beta_j|
    \end{align}
    \State Re-weight coefficients $\beta^{**}_j = \beta^*_j/w_j$ for $j=1,\ldots,d$
    \State Compute new weights $w_j = 1/(2|\beta^{**}_j|^{\frac{1}{2}}+\epsilon)$ for $j=1,\ldots,d$, where $\epsilon$ is the machine limit for floats
\EndFor
\State Define active set as  $\mathcal{A}=\{j: \beta^{**}_j\neq 0\}$
\State Solve OLS regression on active set
\begin{align}
    \tilde{\mathbf{\beta}} &= \argmin_{\mathbf{\beta}} \left\Vert y-\sum_{j\in \mathcal{A}} \mathbf{x}_j \beta_j  \right\Vert^2
\end{align}
and record corresponding p-values $\tilde{p}_j$ for $j\in \mathcal{A}$
\State Define p-values
\begin{align}
    p_j &=  \begin{cases}
               \tilde{p}_j        & \text{if } j\in \mathcal{A} \\
               1        & \text{otherwise}
  \end{cases}
\end{align}
\Return $p_j$ for $j=1,\ldots,d$
\end{algorithmic}
 \label{algo:stabs_lasso}
\end{algorithm}

\clearpage

\section{~Supplementary Tables}
\begin{table}[tbh]
\centering
\caption{Overview over conditional independence tests for $X \ci Y ~|~ \mathbf{Z}$ considered in this paper. The tests are discussed in Sect.~\ref{sec:ci_tests}. All tests assume continuously-valued data. In the implementations, all data are standardized. The Gaussian process (GP) was fitted with \texttt{sklearn}`s \texttt{GaussianProcessRegressor} with \texttt{kernel=RBF()+WhiteKernel()} and \texttt{alpha=0}. The bandwidth of the Kernel in \texttt{sklearn} is estimated by maximizing marginal likelihood (ML-II). $D_Z$ is the cardinality of $\mathbf{Z}$.}
\begin{tabular}{c|ccc}
              &  ParCorr      &  GPDC         & CMI \\
\midrule
\midrule
Assumed model & \makecell{$\begin{aligned}[t] 
X &= \mathbf{Z} \beta_X + \epsilon^X \\
Y &= \mathbf{Z}  \beta_Y + \epsilon^Y \\
&\epsilon^{\cdot} \sim \mathcal{N}(0,\sigma^2_{\cdot})
\end{aligned}$} &\makecell{  
$\begin{aligned}[t] 
X &= h_X(\mathbf{Z}) + \epsilon^X \\
Y &= h_Y(\mathbf{Z}) + \epsilon^Y\\
&\epsilon^{\cdot} \sim \mathcal{N}(0,\sigma^2_{\cdot})
\end{aligned}$} &  \makecell{No parametric assumptions,\\ direct estimation of CMI \\$I(X;Y  | \mathbf{Z})$} \\
\midrule
Estimation & \makecell{Get residuals from OLS fit\\ 
$\begin{aligned}[t] 
\widehat{r_X} &= X - \mathbf{Z} \widehat{\beta_X}\\
\widehat{r_Y} &= Y - \mathbf{Z} \widehat{\beta_Y}\\
\end{aligned}$ \\ 
Estimate correlation $\widehat{\rho}(\widehat{r_X},\widehat{r_Y})$
} &
\makecell{Fit $\widehat{h_X}$, $\widehat{h_Y}$ with GP, get residuals\\ 
$\begin{aligned}[t] 
\widehat{r_X} &= X - \widehat{h_X}(\mathbf{Z}) \\
\widehat{r_Y} &= Y - \widehat{h_Y}(\mathbf{Z}) \\
\end{aligned}$ \\
Transform $\widehat{r_X},\widehat{r_Y}$ to\\ uniform marginals (copula), estimate \\distance correlation $\widehat{\operatorname{dCor}}(\widehat{r_X},\widehat{r_Y})$
  }
 & \makecell{Nearest-neighbor\\ estimator\cite{FrenzelPompe2007}} \\
 \midrule
 Parameter(s)  & --  &   \makecell{MLE estimation with \\ Radial Basis Function+White kernel,\\no parameters for dCor} &  Nearest neighbors $k_{\rm CMI}=50$ \\
 \midrule
Null distribution & \makecell{Analytically known, \\$t=\widehat{\rho}(\widehat{r_X},\widehat{r_Y})\sqrt{\frac{n-2-D_Z}{1-\widehat{\rho}(\widehat{r_X},\widehat{r_Y})^2}}$ \\follows $t$-distribution with \\$n-2-D_Z$ degrees of freedom} &
\makecell{Pre-computed for each sample size $n$\\from $\widehat{\operatorname{dCor}}(u,v)$ with $u,v\sim U(0,1)$\\(valid due to copula transform)} 
& \makecell{Local permutation test\cite{Runge2018a}\\ with $k_{\rm perm}=5$} \\
\bottomrule
\bottomrule
\end{tabular}
\label{tab:CI_overview}
\end{table}

\clearpage

\begin{table}[tbhp]
\centering
\caption{Model configurations for different experiments. The model is given in Eq.~\eqref{eq:numericalmodel}. For each configuration, $100$ time series realizations were generated to evaluate false and true positives. The coupling functions are $f^{(1)}(x)=x$, $f^{(2)}(x)=(1-4 e^{-x^2/2})x$, $f^{(3)}(x)=(1-4 x^3 e^{-x^2/2})x$.}
\begin{tabular}{c|ccccc}
Experiment &  \makecell{Variables $N$\\and links $L$}  & Functions $f_i$ & Sample size $T$   &  Coefficient $c$   &  \makecell{Number of\\random networks} \\
\midrule
\midrule
\makecell{High-dimensionality\\ ParCorr \\ Figs.~\ref{fig:highdim_parcorr}C,\ref{fig:algo_par_corr_allvsmit_SI}}  &  \makecell{$N=2,5,10,20,40,$\\ $60,80,100$\\ $L=N$} & $f^{(1)}$  & $150$  & $0.287$  &  $20$ per $N$ \\ 
\midrule
\makecell{High-dimensionality\\ ParCorr \\ Fig.~\ref{fig:algo_par_corr_allvsmit_SI_300}}  &  \makecell{$N=2,5,10,20,40,$\\ $60,80,100$\\$L=N$} & $f^{(1)}$  & $300$  & $0.2$  &  $20$ per $N$ \\ 
\midrule
\makecell{High-density\\ ParCorr\\Fig.~\ref{fig:algo_results_degree_SI}}  & \makecell{$N=10,20,40,60$\\ $L=2N$} & $f^{(1)}$  & $150$   &  $0.287$   &  $10$ \\ 
\midrule
\makecell{High-density\\ ParCorr\\Fig.~\ref{fig:algo_results_degree_300_SI}}  & \makecell{$N=10,20,40,60$\\ $L=2N$} & $f^{(1)}$  & $300$   &  $0.2$   &  $10$ \\ 
\midrule
\makecell{Sample size\\ ParCorr\\Fig.~\ref{fig:algo_results_samplesize_SI}}  & $N=20$, $L=N$ & $f^{(1)}$  & $150,\,300,\,600$   &  $0.200$ &  $20$ \\ 
\midrule
\makecell{Causal strength\\ ParCorr\\Fig.~\ref{fig:algo_results_powerscaling}}  & $N=20$, $L=N$& $f^{(1)}$  & $150$   & \makecell{ $0.200,\,0.247,$\\ $0.287,\,0.324,$\\ $0.414$} &  $20$ per $c$ \\ 
\midrule
\makecell{Observational noise\\ ParCorr\\Fig.~\ref{fig:algo_results_noise_SI}}  & \makecell{$N=20$, $L=N$\\ Noise $\mathcal{N}\left(0, \sigma^2\right)$ with \\ $\sigma=0,\,0.1,\,0.25,\,0.5,\,1,\,2$}  & $f^{(1)}$  & $150$   &  $0.287$   &  $20$ \\ 
\midrule
\makecell{Parameter PC$_1$ $\alpha$\\ ParCorr\\ Fig.~\ref{fig:algo_results_pcthres_SI}A}  & $N=20$, $L=N$ & $f^{(1)}$  & $150$  &   $0.287$   &  $20$ \\ 
\midrule
\makecell{Parameter PC$_1$ $\alpha$\\ GPDC\\ Fig.~\ref{fig:algo_results_pcthres_SI}B}  &  $N=10$, $L=N$ & \makecell{50\% $f^{(1)}$\\ 25\% $f^{(2)}$ \\ 25\% $f^{(3)}$}   & $250$   &   $0.287$ &  $20$ \\ 
\midrule
\makecell{Parameter PC$_1$ $\alpha$\\ CMI\\ Fig.~\ref{fig:algo_results_pcthres_SI}C}  &  $N=5$, $L=N$ & \makecell{50\% $f^{(1)}$\\ 25\% $f^{(2)}$ \\ 25\% $f^{(3)}$}   &  $500$   &   $0.324$   &  $20$ \\ 
\midrule
\makecell{High-dimensionality\\ GPDC \\Figs.~\ref{fig:highdim_nonlin}A,\ref{fig:algo_gpdc_allvsmit_SI}}  & \makecell{$N=2,5,10,20,40$\\$L=N$} & \makecell{50\% $f^{(1)}$\\25\% $f^{(2)}$\\ 25\% $f^{(3)}$ }  & $250$   &  $0.287$   &  $20$ per $N$\\ 
\midrule
\makecell{Sample size\\ GPDC\\Fig.~\ref{fig:algo_results_samplesize_SI_gpdc}}  & $N=10$, $L=N$ & \makecell{50\% $f^{(1)}$\\25\% $f^{(2)}$\\ 25\% $f^{(3)}$ }  & $250,\,500$   &  $0.200$ &  $20$ \\ 
\midrule
\makecell{High-dimensionality\\ CMI\\Fig.~\ref{fig:highdim_nonlin}B,\ref{fig:algo_cmiknn_allvsmit_SI}}  &  \makecell{$N=2,5,10$\\ $L=N$} & \makecell{50\% $f^{(1)}$\\ 25\% $f^{(2)}$\\ 25\% $f^{(3)}$ }  & $500$ &  $0.324$ &  $20$ per $N$\\ 
\midrule
\makecell{Sample size\\ CMI\\Fig.~\ref{fig:algo_results_samplesize_SI_cmi}}  & $N=5$, $L=N$ & \makecell{50\% $f^{(1)}$\\25\% $f^{(2)}$\\ 25\% $f^{(3)}$ }  & $500,\,1000$   &  $0.324$ &  $20$ \\ 
\bottomrule
\end{tabular}
\label{tab:experiments}
\end{table}

\begin{table}[tbhp]
\centering
\caption{Overview over methods compared in numerical experiments. All methods are run with $\tau_{\max}=5$.}
\begin{tabular}{c|ccccc}
Acronym & Method & Details  \\
\midrule
\midrule
\makecell{Corr / dCor / MI} & Pairwise unconditional independence tests  & see Sect.~\ref{sec:corr} \\\midrule
\makecell{BivCI\\ ParCorr} & \makecell{Bivariate conditional independence tests\\equivalent to bivariate transfer entropy}  & \makecell{condition only on past of response variable,\\ see Sect.~\ref{sec:biv}} \\
\midrule
\makecell{FullCI\\ ParCorr} & \makecell{Vector-autoregressive model\\(in Fig.~\ref{fig:algo_results_powerscaling} with partial correlation)}    & \makecell{fit with OLS in \texttt{statsmodels},\\ see Sect.~\ref{sec:fullci}} \\
\midrule
\makecell{FullCI \\ GPDC / CMI} & \makecell{Independence test conditioning\\ on full past}   & see Sect.~\ref{sec:fullci}   \\
\midrule
Lasso & Adaptive Lasso regression   &  see Sect.~\ref{sec:lasso} and Algorithm~\ref{algo:stabs_lasso}  \\  
\midrule
\makecell{PC\\ ParCorr / GPDC / CMI} & Standalone PC algorithm    & \makecell{see Sect.~\ref{sec:pcalgo} and Algorithm~\ref{algo:pcs1} \\ $\alpha=0.2$, $q_{\max}=10$}\\
\midrule
\makecell{PC$_1$\\ ParCorr} & \makecell{Condition-selection step\\ as standalone}    &\makecell{ see Sect.~\ref{sec:pcmci} and Algorithm~\ref{algo:pcs1} \\ $\alpha=\{0.1,\, 0.2,\, 0.3,\, 0.4\}$ via AIC, $q_{\max}=1$} \\  
\midrule
\makecell{PC$_1$+MCI$_{all}$ \\ParCorr} & PCMCI with $\alpha$-optimization   & \makecell{see Sect.~\ref{sec:pcmci} and Algorithms~\ref{algo:pcs1},\ref{algo:mit}\\ $\alpha=\{0.1,\, 0.2,\, 0.3,\, 0.4\}$  via AIC, \\$q_{\max}=1$, $p_X$ unrestricted} \\
\midrule
\makecell{PC$^\alpha_1$+MCI$_{all}$ \\ ParCorr / GPDC / CMI} & PCMCI without $\alpha$-optimization   & \makecell{see Sect.~\ref{sec:pcmci} and Algorithms~\ref{algo:pcs1},\ref{algo:mit}\\$\alpha=0.2$ (or given), $q_{\max}=1$, $p_X$ unrestricted }\\ 
\midrule
\makecell{PC$_1$+MCI$_3$\\ ParCorr} & \makecell{PCMCI with $\alpha$-optimization\\ and truncated $\mathcal{P}_X$}    & \makecell{see Sect.~\ref{sec:pcmci} and Algorithms~\ref{algo:pcs1},\ref{algo:mit}\\$\alpha=\{0.1,\, 0.2,\, 0.3,\, 0.4\}$ via AIC,\\ $q_{\max}=1$, $p_X=3$} \\ 
\midrule
\makecell{PC$^\alpha_1$+MCI$_3$\\ GPDC / CMI} & \makecell{PCMCI without $\alpha$-optimization\\ and truncated $\mathcal{P}_X$ }   & \makecell{see Sect.~\ref{sec:pcmci} and Algorithms~\ref{algo:pcs1},\ref{algo:mit}\\$\alpha=0.2$, $q_{\max}=1$, $p_X=3$} \\ 
\midrule
\makecell{PC$_1$+MCI$_{0}$\\ ParCorr }& \makecell{PCMCI with $\alpha$-optimization\\ and no condition on $\mathcal{P}_X$}  & \makecell{see Sect.~\ref{sec:ity} and  Algorithms~\ref{algo:pcs1},\ref{algo:mit}\\$\alpha=\{0.1,\, 0.2,\, 0.3,\, 0.4\}$ via AIC,\\ $q_{\max}=1$, $p_X=0$} \\
\midrule
\makecell{PC$_1$+MCI$_{0}$pw\\ ParCorr} & \makecell{PCMCI with $\alpha$-optimization\\ and no condition on $\mathcal{P}_X$\\ and pre-whitening}    & \makecell{see Sect.~\ref{sec:ity} and  Algorithms~\ref{algo:pcs1},\ref{algo:mit}\\$\alpha=\{0.1,\, 0.2,\, 0.3,\, 0.4\}$ via AIC, \\ $q_{\max}=1$, $p_X=0$} \\
\bottomrule
\end{tabular}
\label{tab:methods}
\end{table}


\clearpage
\section{~Supplementary Figures}
\begin{figure*}[tbhp]
\centering
\includegraphics[width=.48\linewidth]{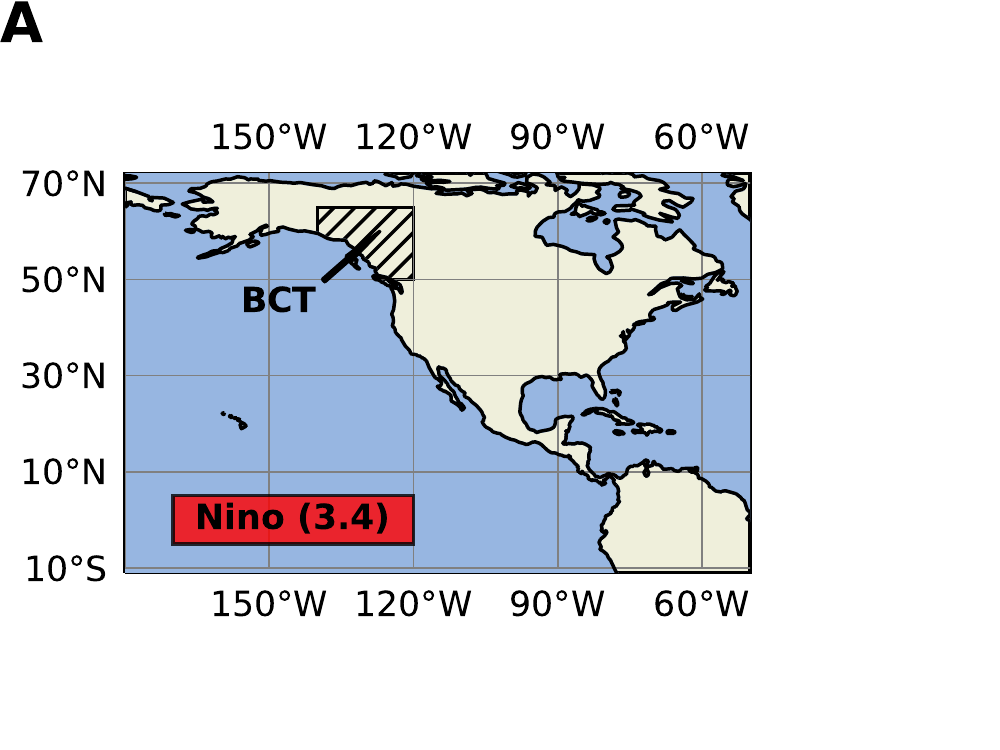}%
\includegraphics[width=.48\linewidth]{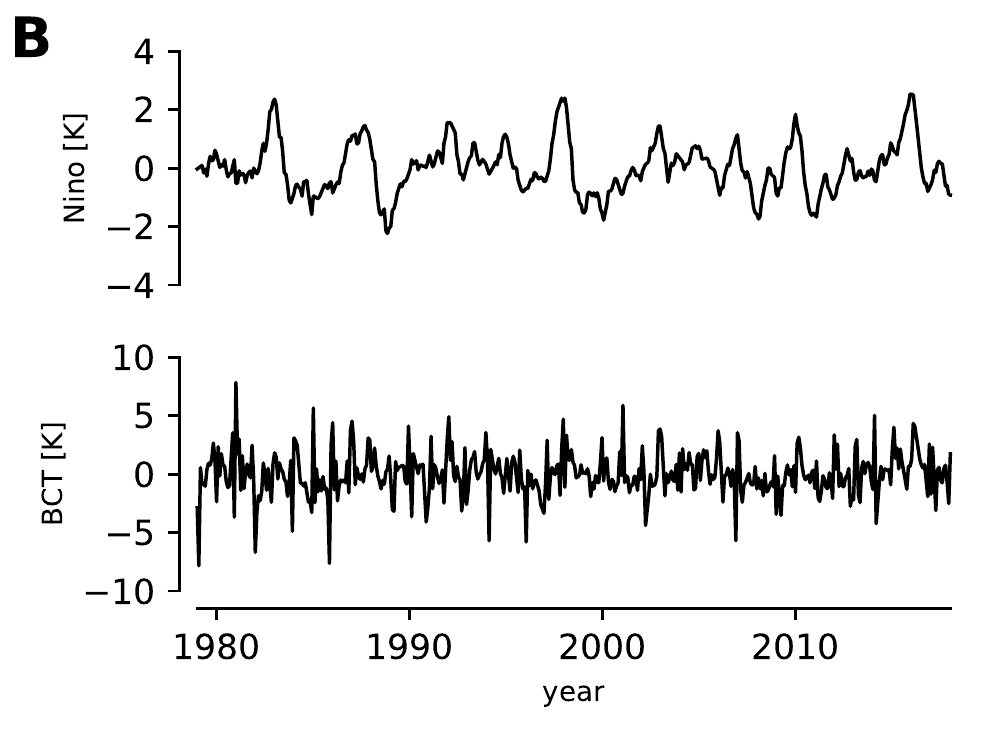}
\includegraphics[width=.48\linewidth]{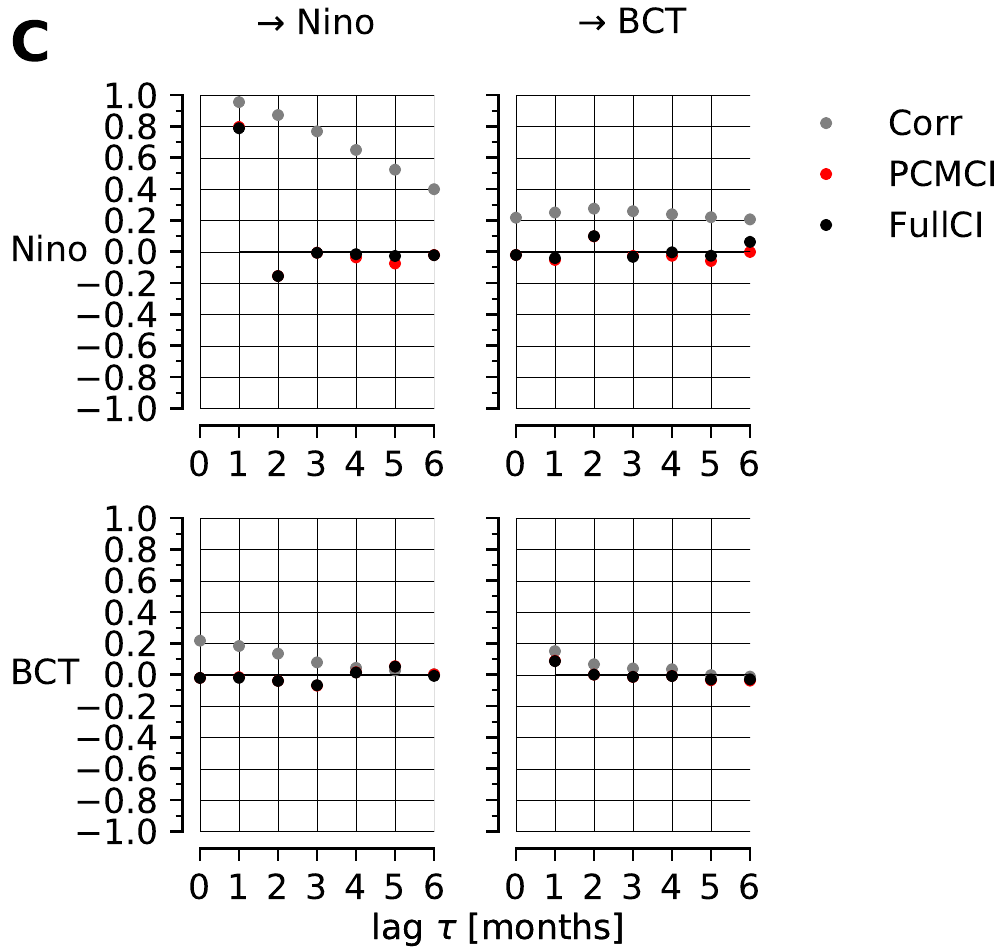}%
\includegraphics[width=.48\linewidth]{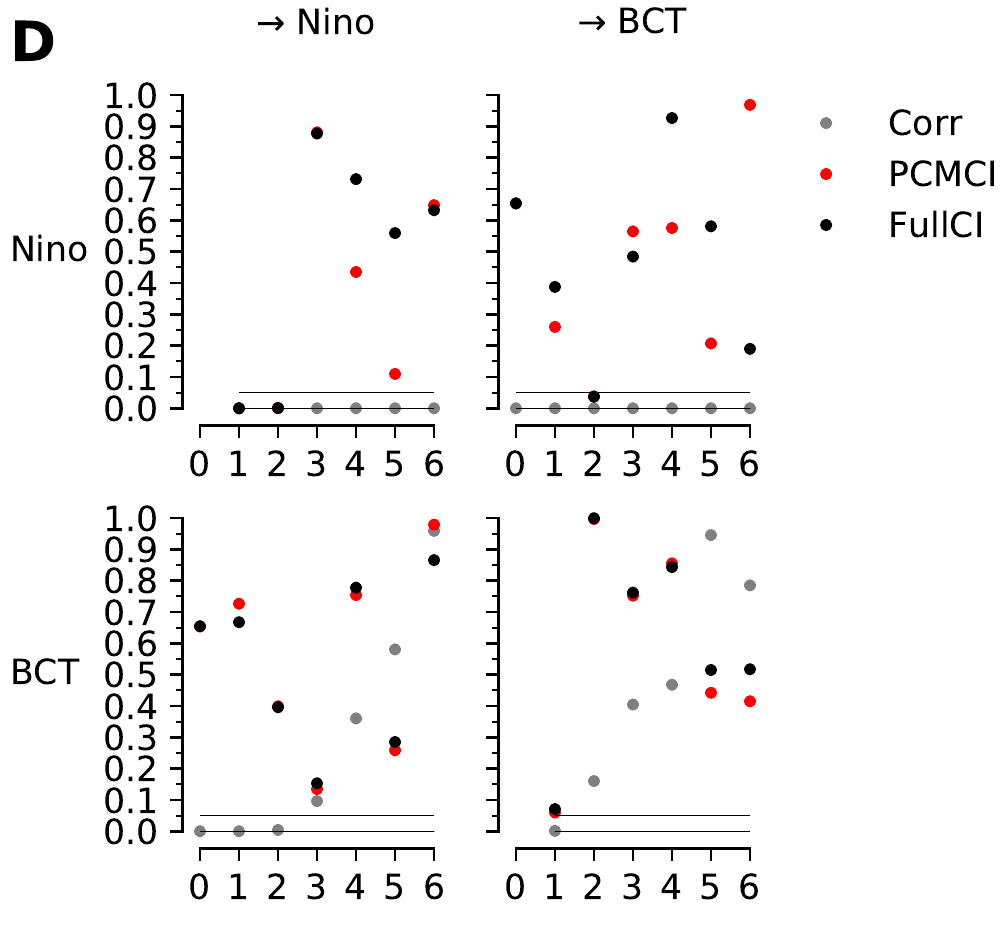}
\caption{
Motivational climate example.
(\textbf{A}) We investigate the relationship between the monthly climate index Nino and land temperature anomalies over Northwestern Canada, mostly British Columbia (BCT, hatched region). Nino is defined as the average sea-surface temperature anomaly (HadISST dataset\cite{Rayner2003}) over the red Nino3.4 region (5$^\circ$North-5$^\circ$South and 170-120$^\circ$West). BCT is defined as the area-weighted land surface temperature (CRUTEM4 dataset\cite{Jones2012}) over British Columbia and parts of Yukon and the Northwestern Territories, Canada (50-65$^\circ$North and 120-140$^\circ$West). The grid location 62.5$^\circ$North, 132.5$^\circ$West was excluded since more than 1\% of the samples where missing. Anomalized time series have the seasonal cycle removed. We constrain our analysis to the period with reliable satellite data (1979--2017) with a length of $T=468$ months. To remove any long-term temperature trend, a Gaussian kernel smoothing mean with a bandwidth of $\sigma=120$ months was removed from the raw time series.
(\textbf{B}) Time series of Nino and BCT. 
(\textbf{C}) Matrix of lag functions between Nino and BCT for Correlation (Corr), PCMCI, and FullCI (conditional on the whole past of both time series up to $\tau_{\max}=6$). Note, that for autocorrelations (on the diagonal) the zero-lag is not drawn. 
(\textbf{D}) Matrix of p-values. The black line denotes the 5\% significance level. Note that the sample size here is $n=T-2 \tau_{\max} = 456$.
}
\label{fig:example_map}
\end{figure*}

\clearpage
\begin{figure*}[tbhp]
\centering
\includegraphics[width=1.\linewidth]{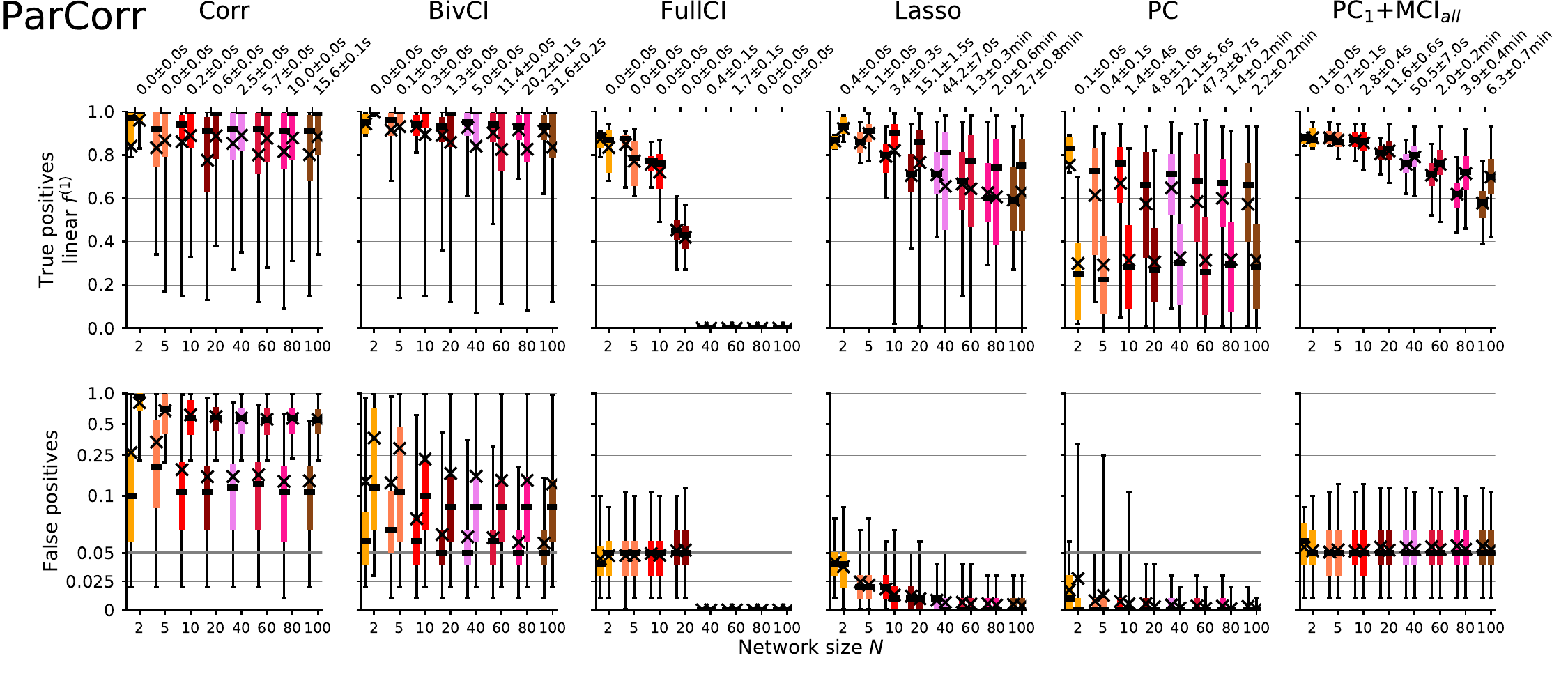}
\includegraphics[width=1.\linewidth]{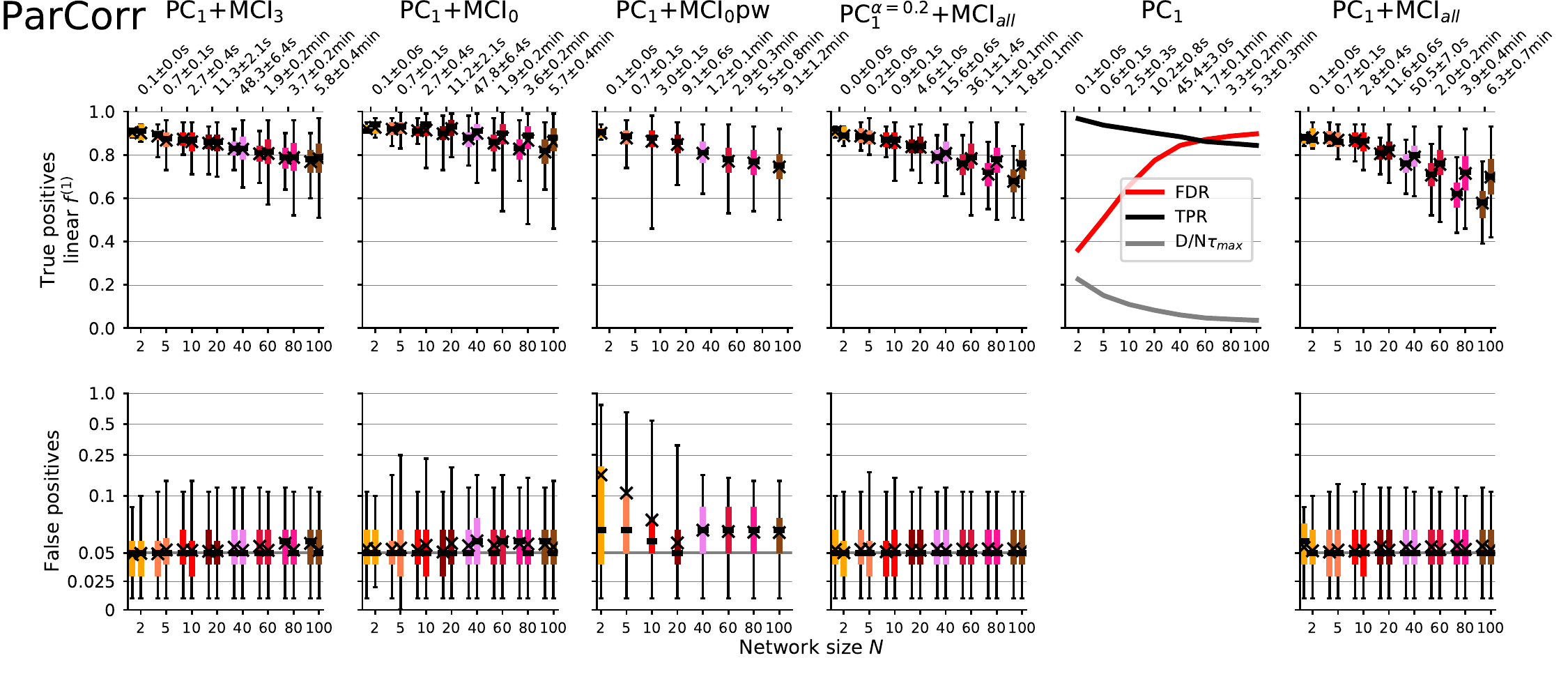}
\caption{Numerical experiments for linear models with different numbers of variables $N$, number of links $L=N$, and time series length $T=150$. The detailed setup is listed in Tab.~\ref{tab:experiments} and Tab.~\ref{tab:methods} provides details on the evaluated methods.
%
}
\label{fig:algo_par_corr_allvsmit_SI}
\end{figure*}

\clearpage
\begin{figure*}[tbhp]
\centering
\includegraphics[width=1.\linewidth]{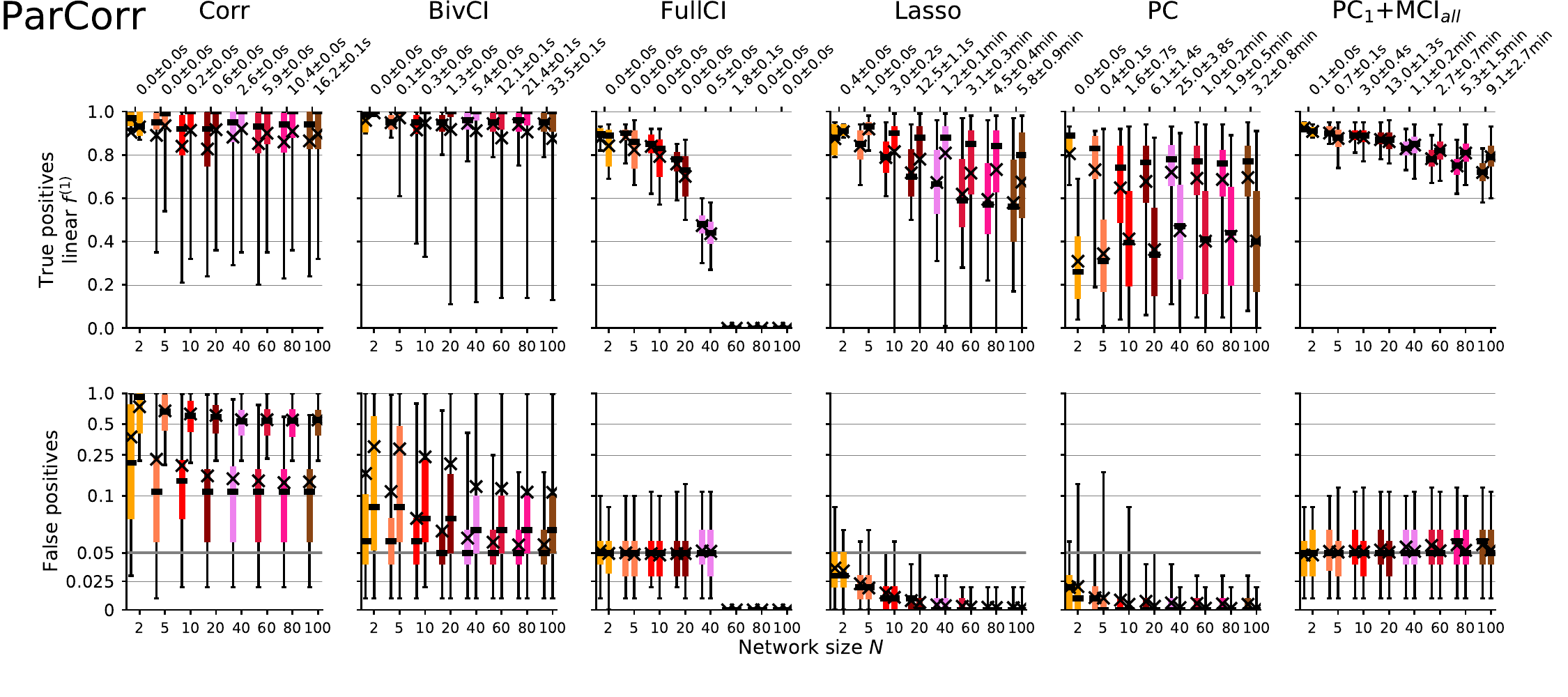}
\includegraphics[width=1.\linewidth]{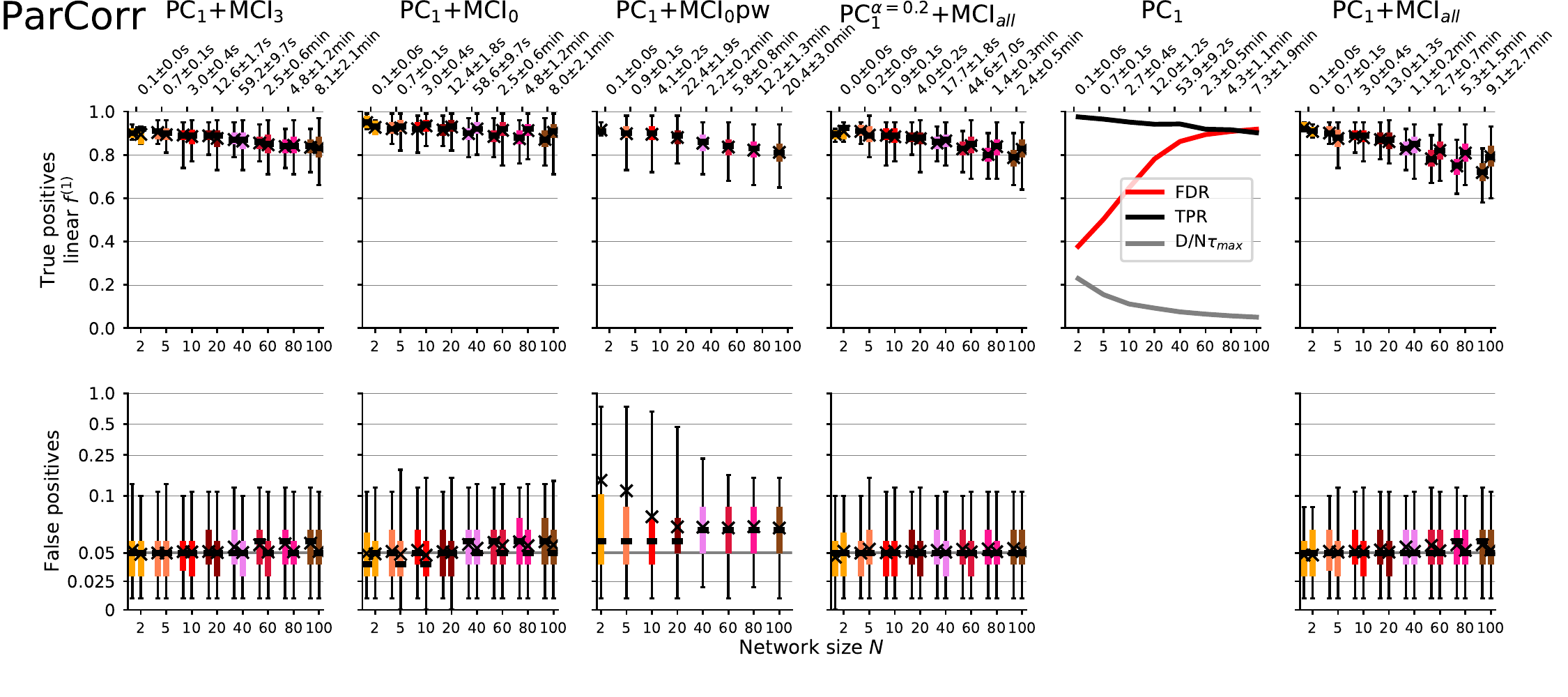}
\caption{
Numerical experiments for linear models with different numbers of variables $N$, number of links $L=N$, and time series length $T=300$. The detailed setup is listed in Tab.~\ref{tab:experiments} and Tab.~\ref{tab:methods} provides details on the evaluated methods.
}
\label{fig:algo_par_corr_allvsmit_SI_300}
\end{figure*}

\clearpage
\begin{figure*}[tbhp]
\centering
\includegraphics[width=1.\linewidth]{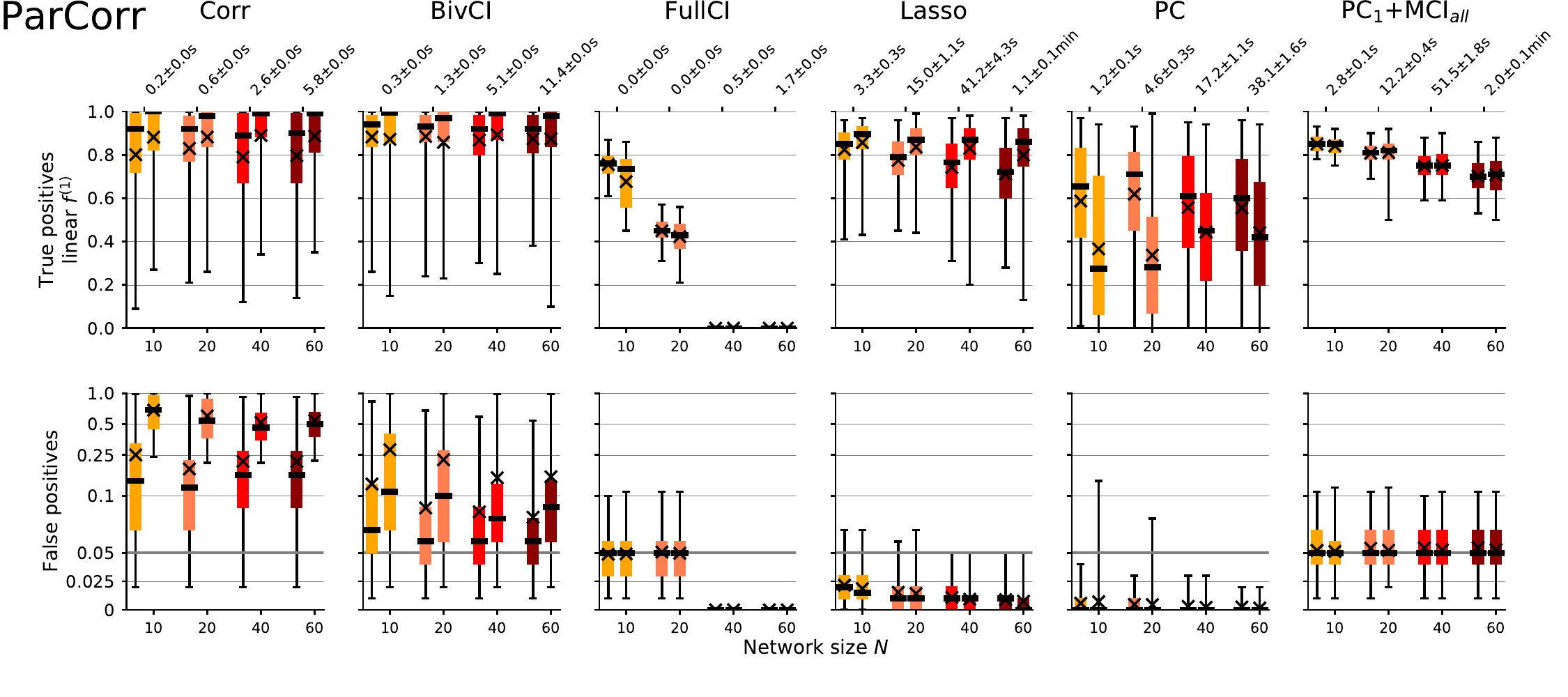}
\includegraphics[width=1.\linewidth]{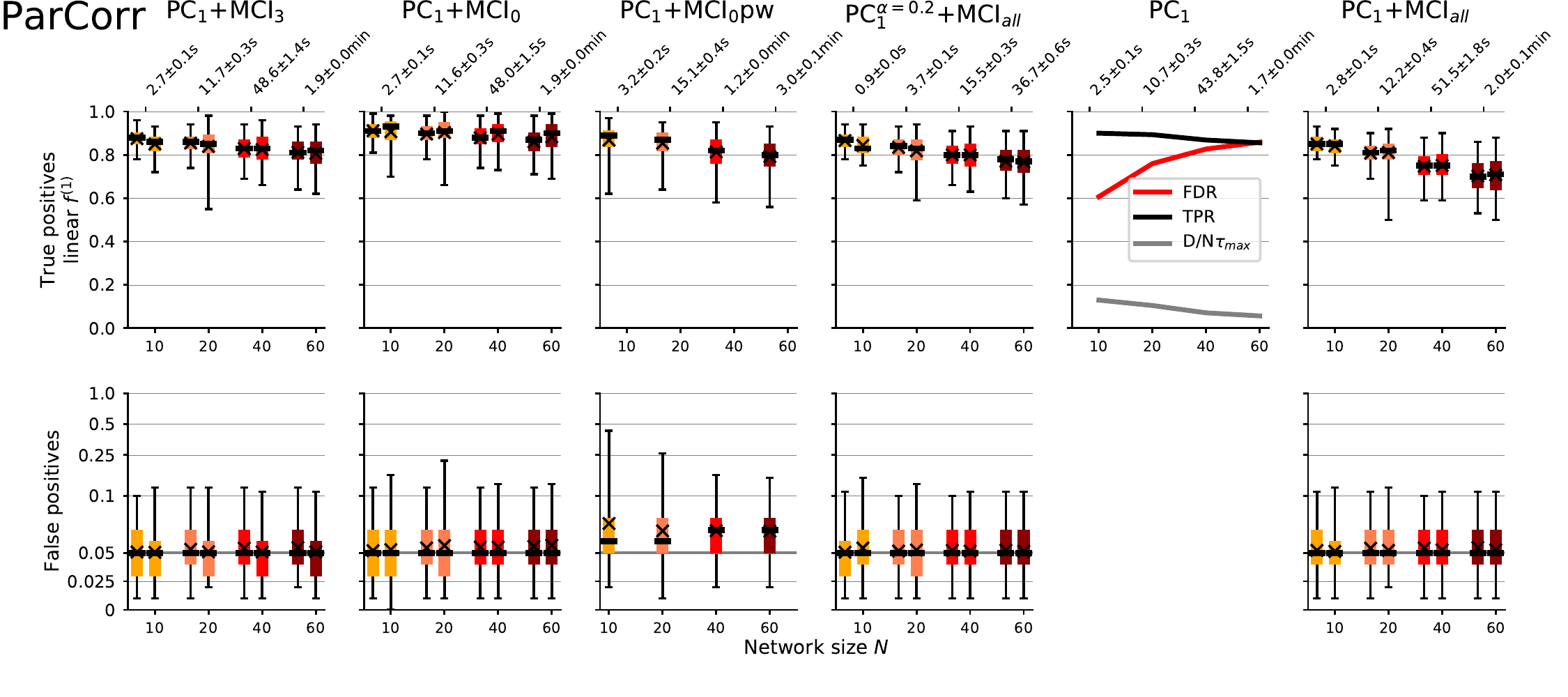}
\caption{
Numerical experiments for linear models with different numbers of variables $N$, number of links $L=2N$, and time series length $T=150$. The detailed setup is listed in Tab.~\ref{tab:experiments} and Tab.~\ref{tab:methods} provides details on the evaluated methods.
}
\label{fig:algo_results_degree_SI}
\end{figure*}

\clearpage
\begin{figure*}[tbhp]
\centering
\includegraphics[width=1.\linewidth]{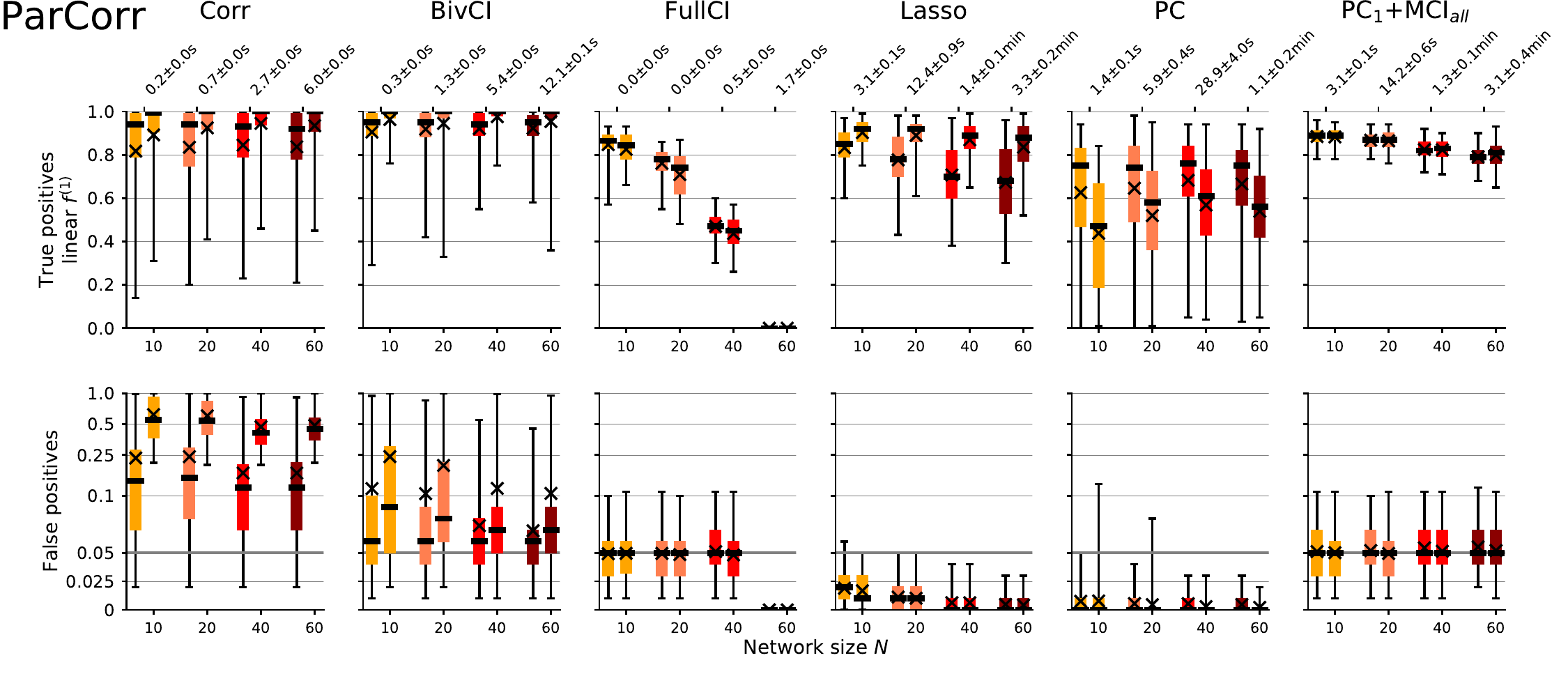}
\includegraphics[width=1.\linewidth]{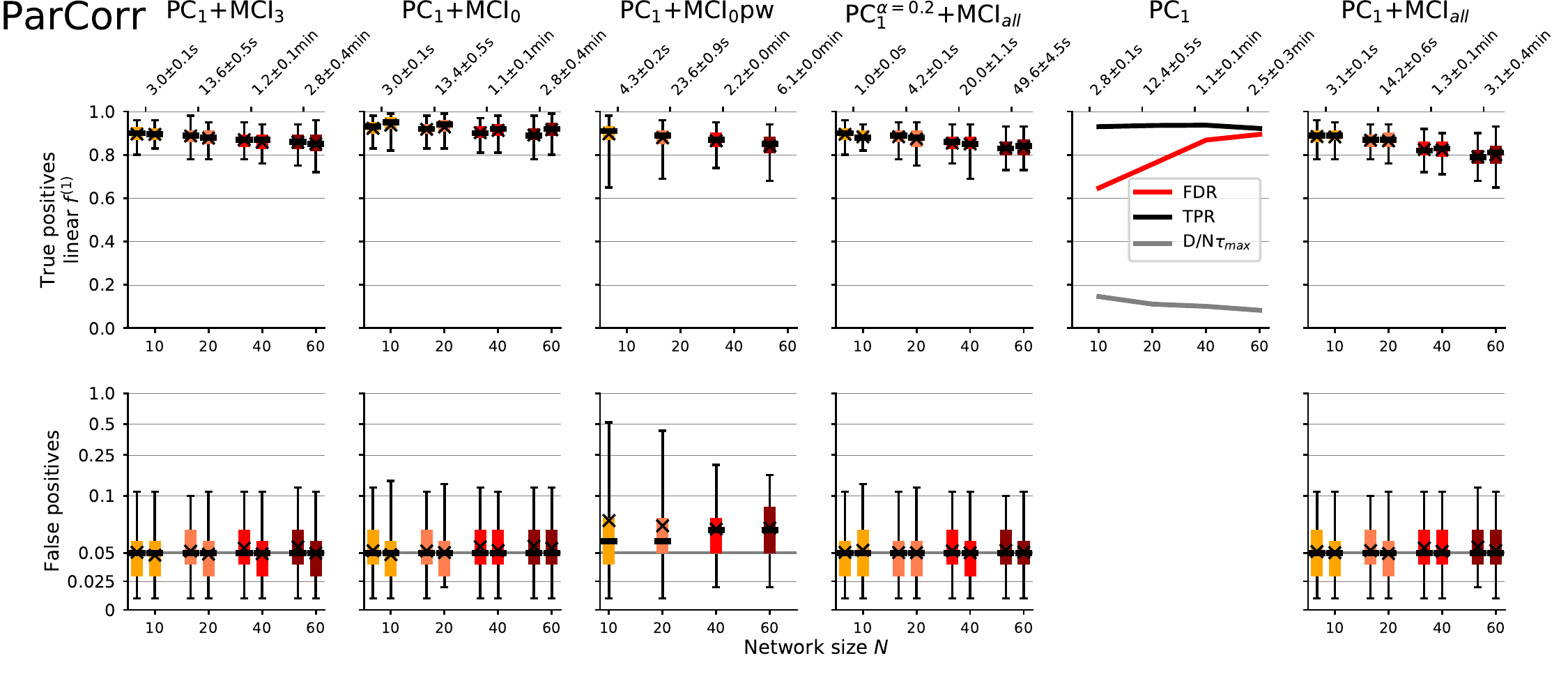}
\caption{
Numerical experiments for linear models with different numbers of variables $N$, number of links $L=2N$, and time series length $T=300$. The detailed setup is listed in Tab.~\ref{tab:experiments} and Tab.~\ref{tab:methods} provides details on the evaluated methods.
}
\label{fig:algo_results_degree_300_SI}
\end{figure*}

\clearpage
\begin{figure*}[tbhp]
\centering
\includegraphics[width=.32\linewidth]{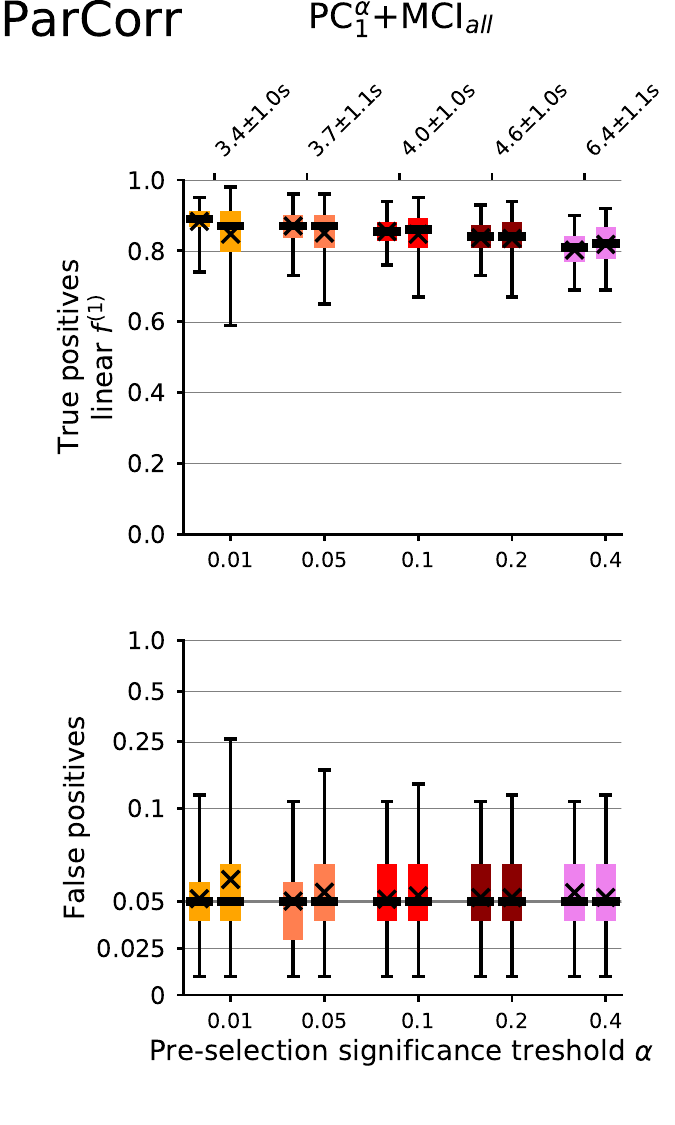}
\includegraphics[width=.32\linewidth]{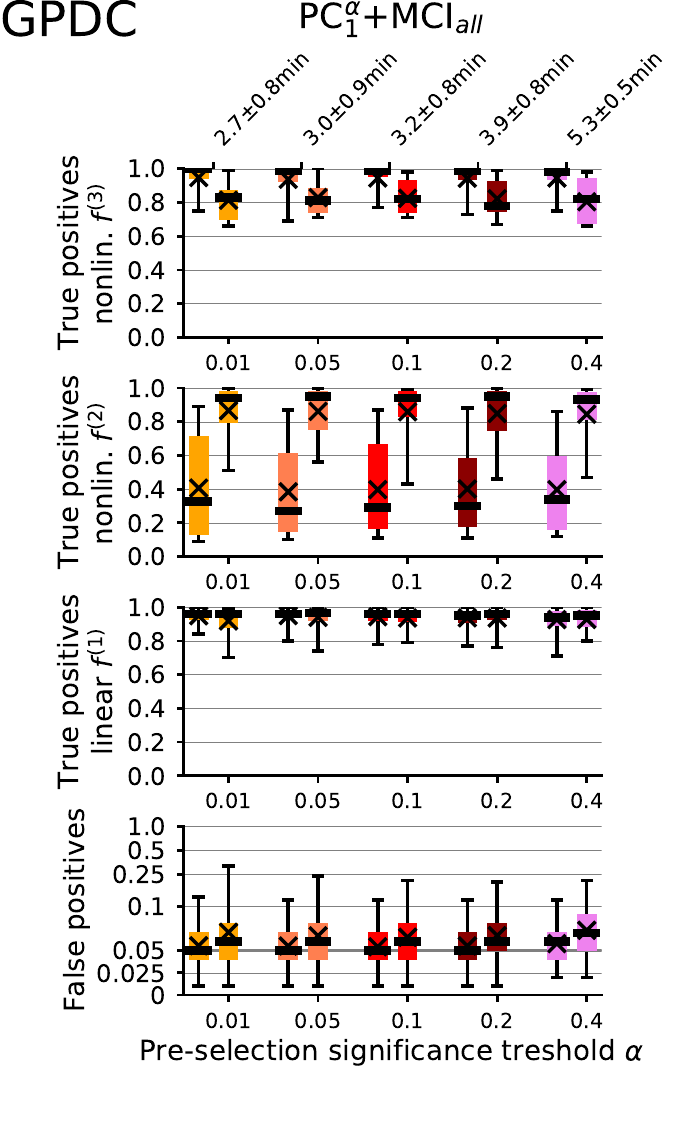}
\includegraphics[width=.32\linewidth]{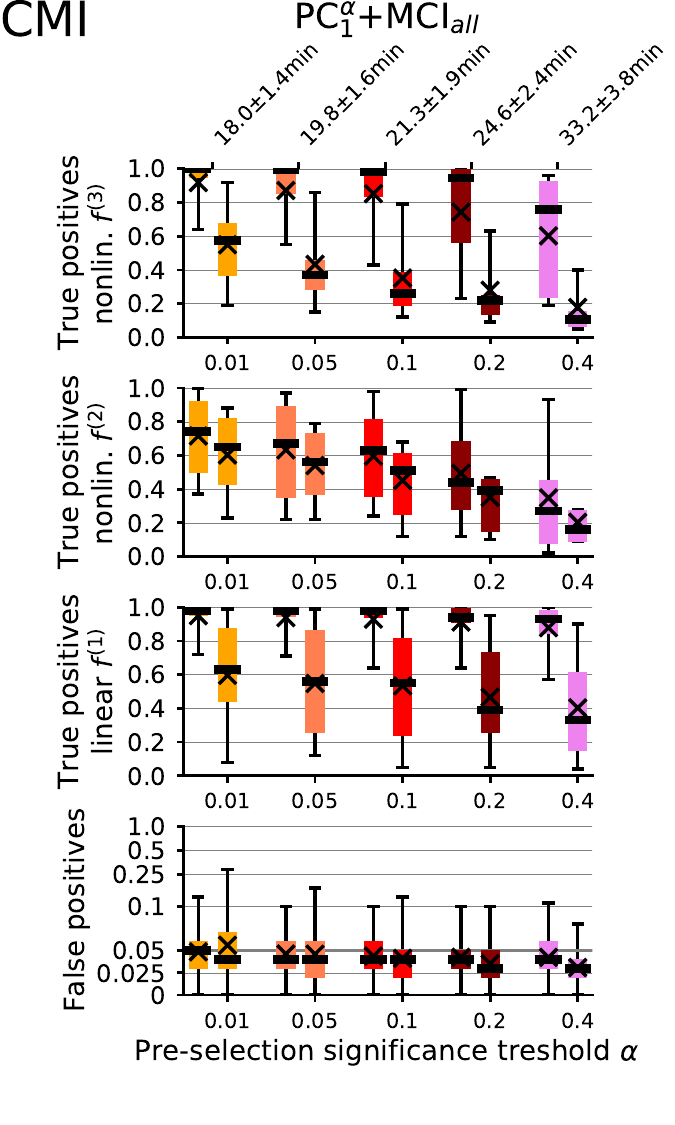}
\caption{ 
Influence of different PC thresholds $\alpha$ in the condition-selection Algorithm~\ref{algo:pcs1} for ParCorr (left), GPDC (center), and CMI (right). The full model setup is described in Sect.~\ref{sec:algo_model_description} and Tab.~\ref{tab:experiments}.
Note that for most numerical experiments for ParCorr we use an AIC-based optimization scheme to choose $\alpha$. See discussion in Sect.~\ref{sec:causal_discovery_SI}.
}
\label{fig:algo_results_pcthres_SI}
\end{figure*}

\clearpage
\begin{figure*}[tbhp]
\centering
\includegraphics[width=1.\linewidth]{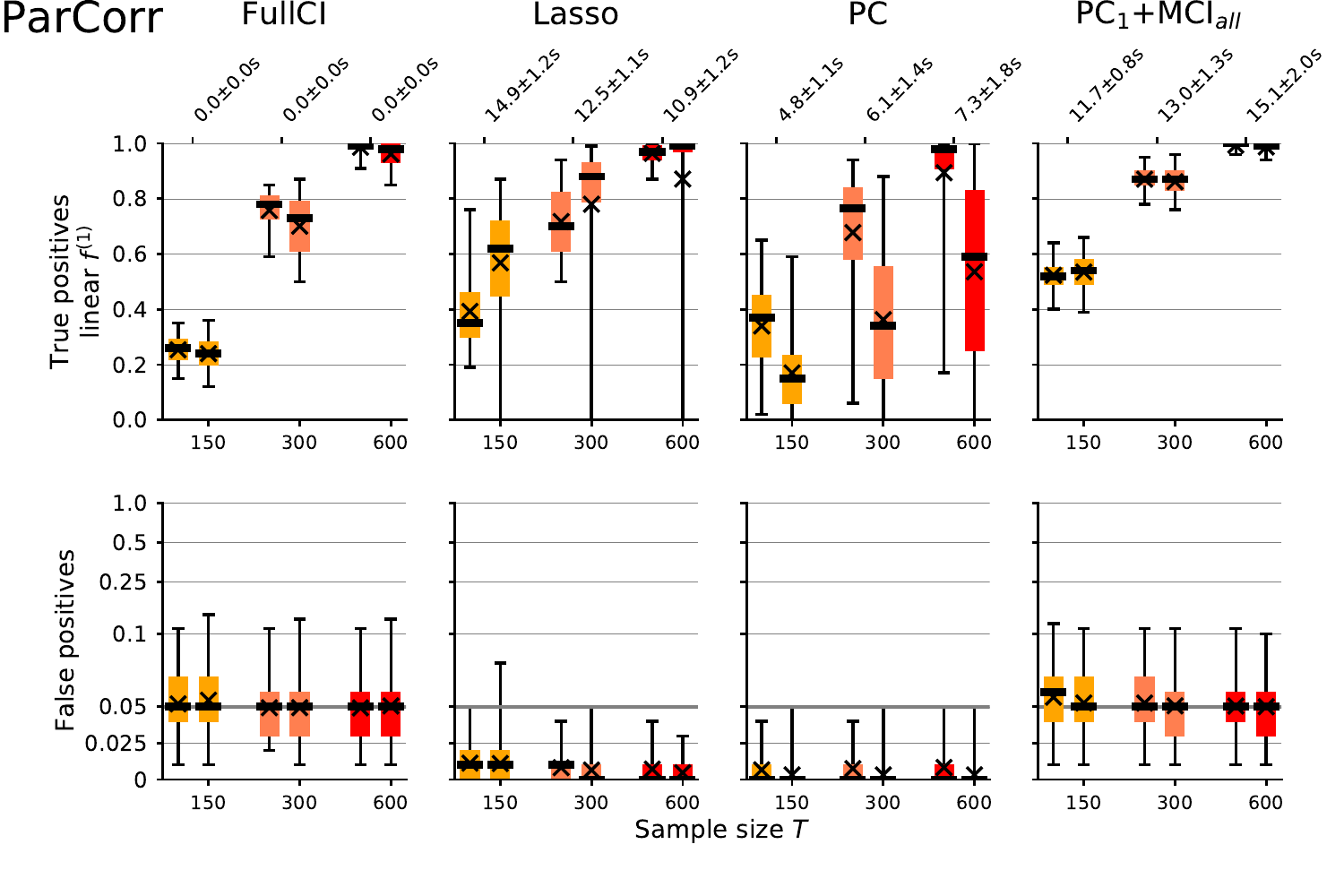}
\caption{
Numerical experiments for linear models with different time series length $T$, fixed numbers of variables $N=20$ and number of links $L=N$. The detailed setup is listed in Tab.~\ref{tab:experiments} and Tab.~\ref{tab:methods} provides details on the evaluated methods.
For ParCorr, power levels go up with larger samples as expected for all methods. Still, PCMCI always outperforms FullCI, Lasso, and PC. Even for the largest sample size Lasso still cannot detect some links.
The runtime does not linearly increase with sample size like the individual partial correlation tests because larger samples also lead to a faster convergence for PCMCI, and also for Lasso.
}
\label{fig:algo_results_samplesize_SI}
\end{figure*}

\clearpage
\begin{figure*}[tbhp]
\centering
\includegraphics[width=.9\linewidth]{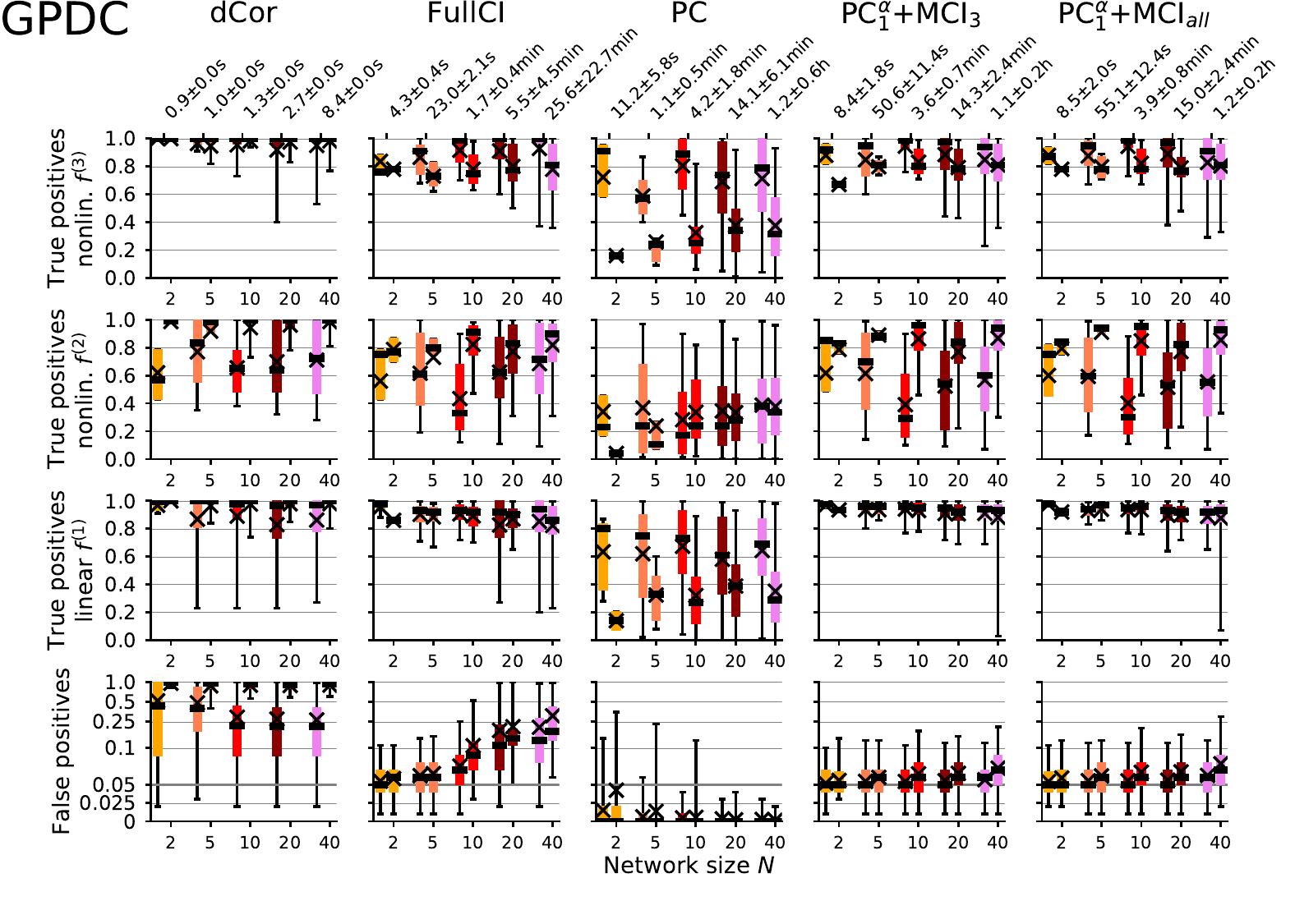}
\caption{Numerical experiments for nonlinear models with different numbers of variables $N$, number of links $L=N$, and time series length $T=250$. The detailed setup is listed in Tab.~\ref{tab:experiments} and Tab.~\ref{tab:methods} provides details on the evaluated methods.
}
\label{fig:algo_gpdc_allvsmit_SI}
\end{figure*}

\clearpage
\begin{figure*}[tbhp]
\centering
\includegraphics[width=.7\linewidth]{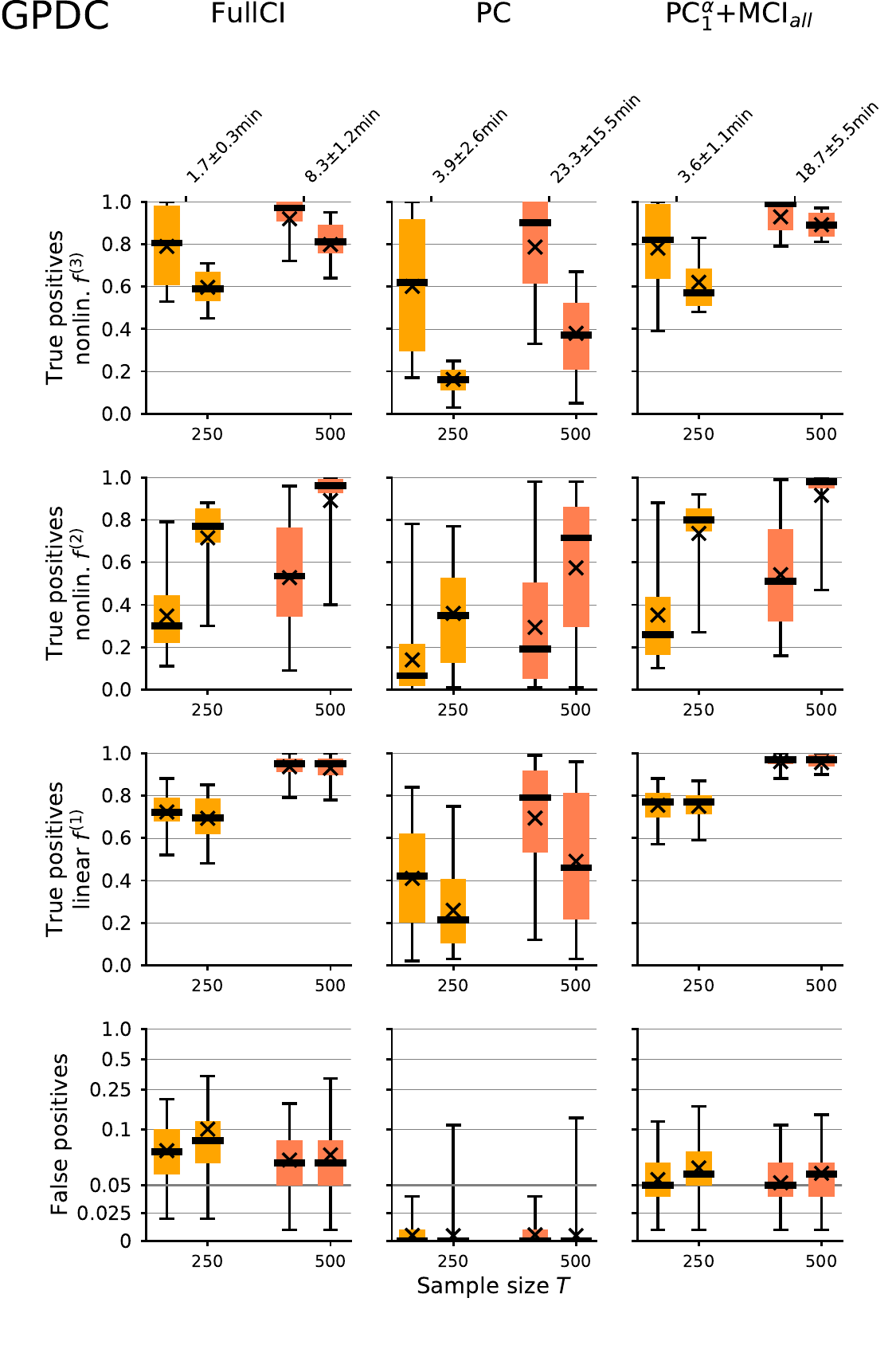}
\caption{Numerical experiments for nonlinear models with different time series length $T$, fixed numbers of variables $N=10$ and number of links $L=N$. The detailed setup is listed in Tab.~\ref{tab:experiments} and Tab.~\ref{tab:methods} provides details on the evaluated methods.
Similarly, for GPDC power goes up as expected. Additionally, FullCI better controls false positives for larger sample size. Gaussian process regression's runtime scales as $\sim T^3$ with sample size making GPDC not very suitable for large sample sizes. However, there are efficient approximation methods of GP that can help to speed up estimation\cite{Rasmussen2006}.
}
\label{fig:algo_results_samplesize_SI_gpdc}
\end{figure*}

\clearpage
\begin{figure*}[tbhp]
\centering
\includegraphics[width=.9\linewidth]{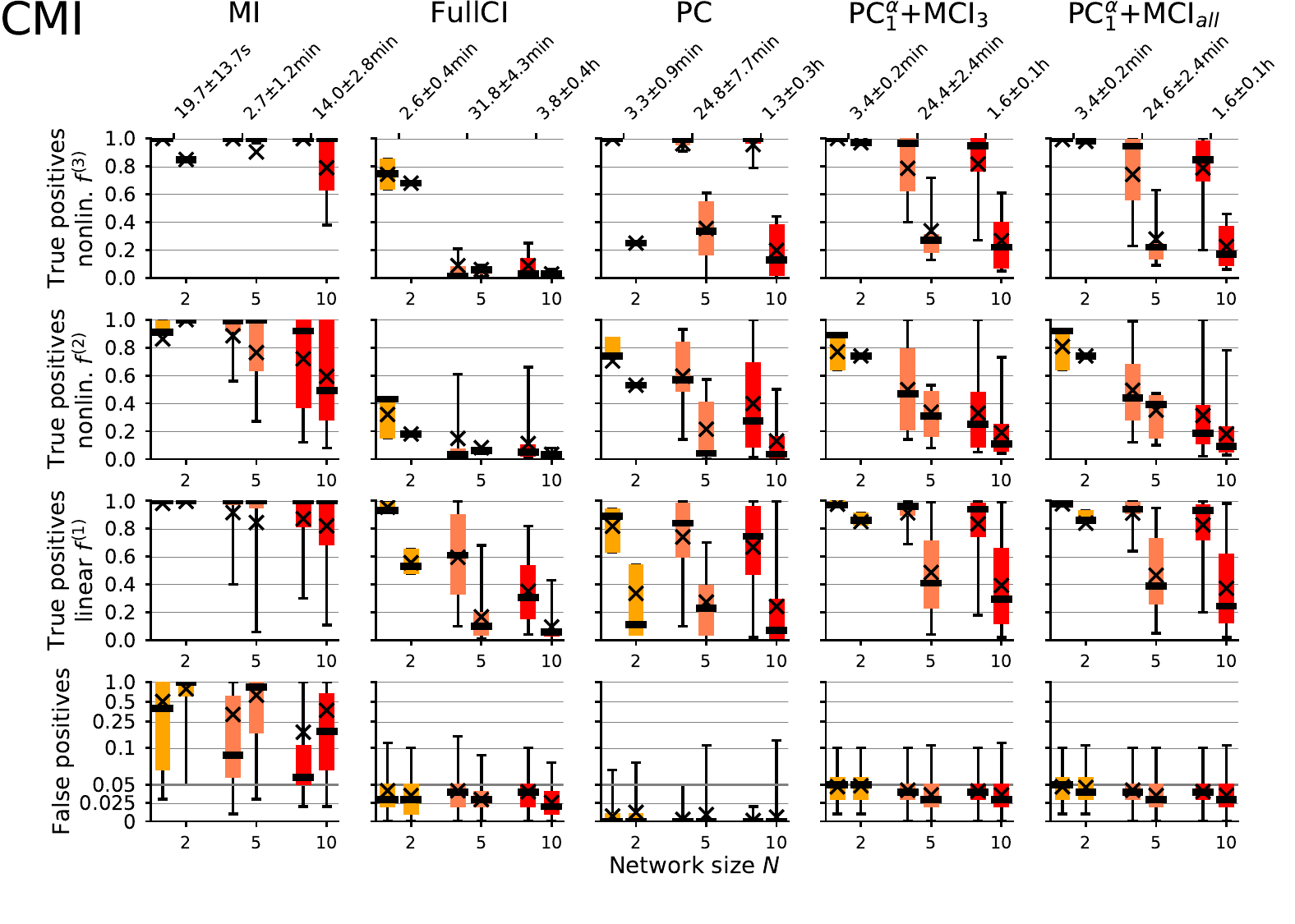}
\caption{Numerical experiments for nonlinear models with different numbers of variables $N$, number of links $L=N$, and time series length $T=500$. The detailed setup is listed in Tab.~\ref{tab:experiments} and Tab.~\ref{tab:methods} provides details on the evaluated methods
}
\label{fig:algo_cmiknn_allvsmit_SI}
\end{figure*}

\clearpage
\begin{figure*}[tbhp]
\centering
\includegraphics[width=.7\linewidth]{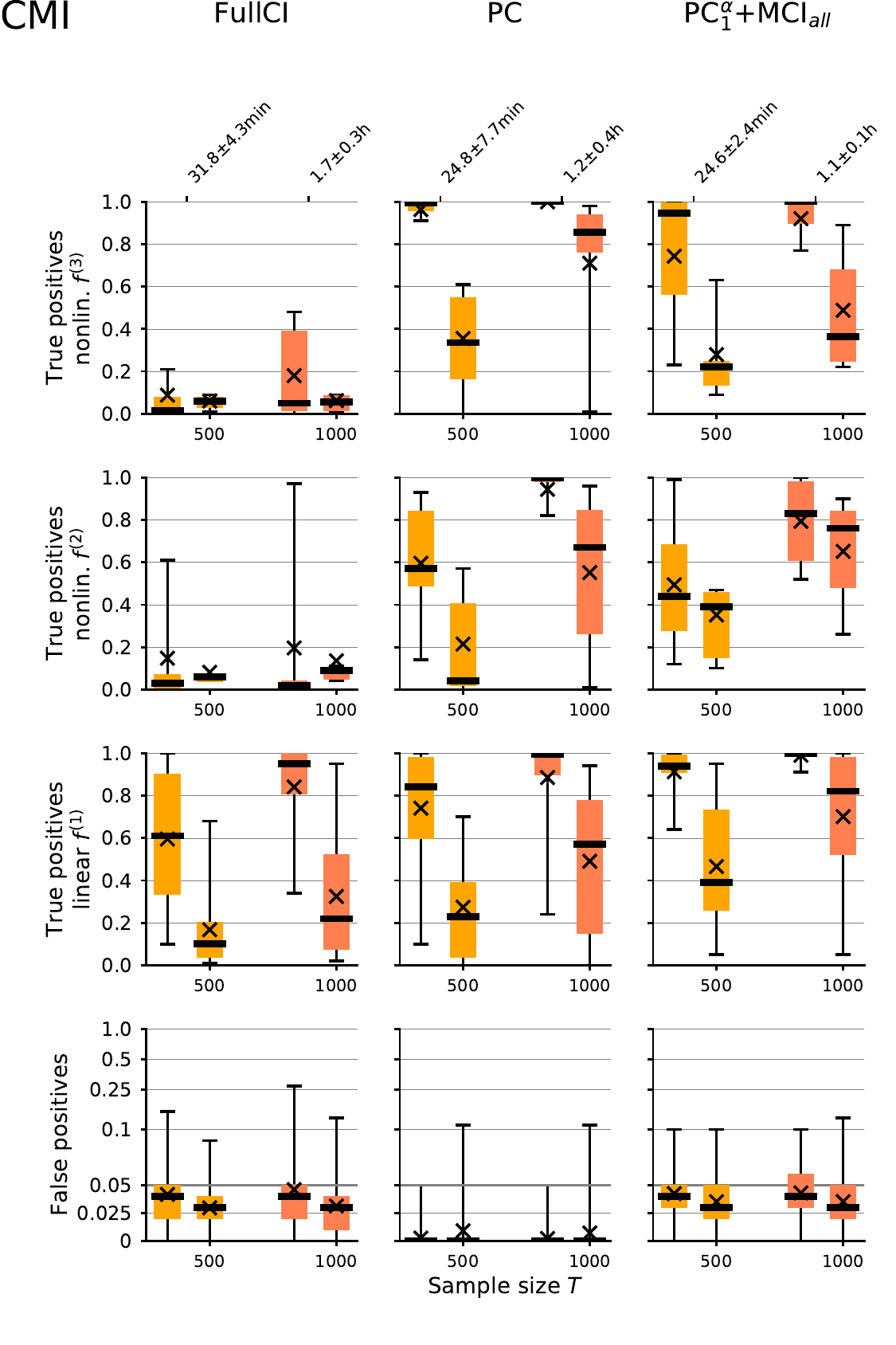}
\caption{Numerical experiments for nonlinear models with different time series lengths $T$, fixed numbers of variables $N=5$ and number of links $L=N$. The detailed setup is listed in Tab.~\ref{tab:experiments} and Tab.~\ref{tab:methods} provides details on the evaluated methods.
CMI has increasing power as expected. Here runtime does not increase quadratically like the individual CMI tests.
}
\label{fig:algo_results_samplesize_SI_cmi}
\end{figure*}


\clearpage
\begin{figure*}[tbhp]
\centering
\includegraphics[width=1.\linewidth]{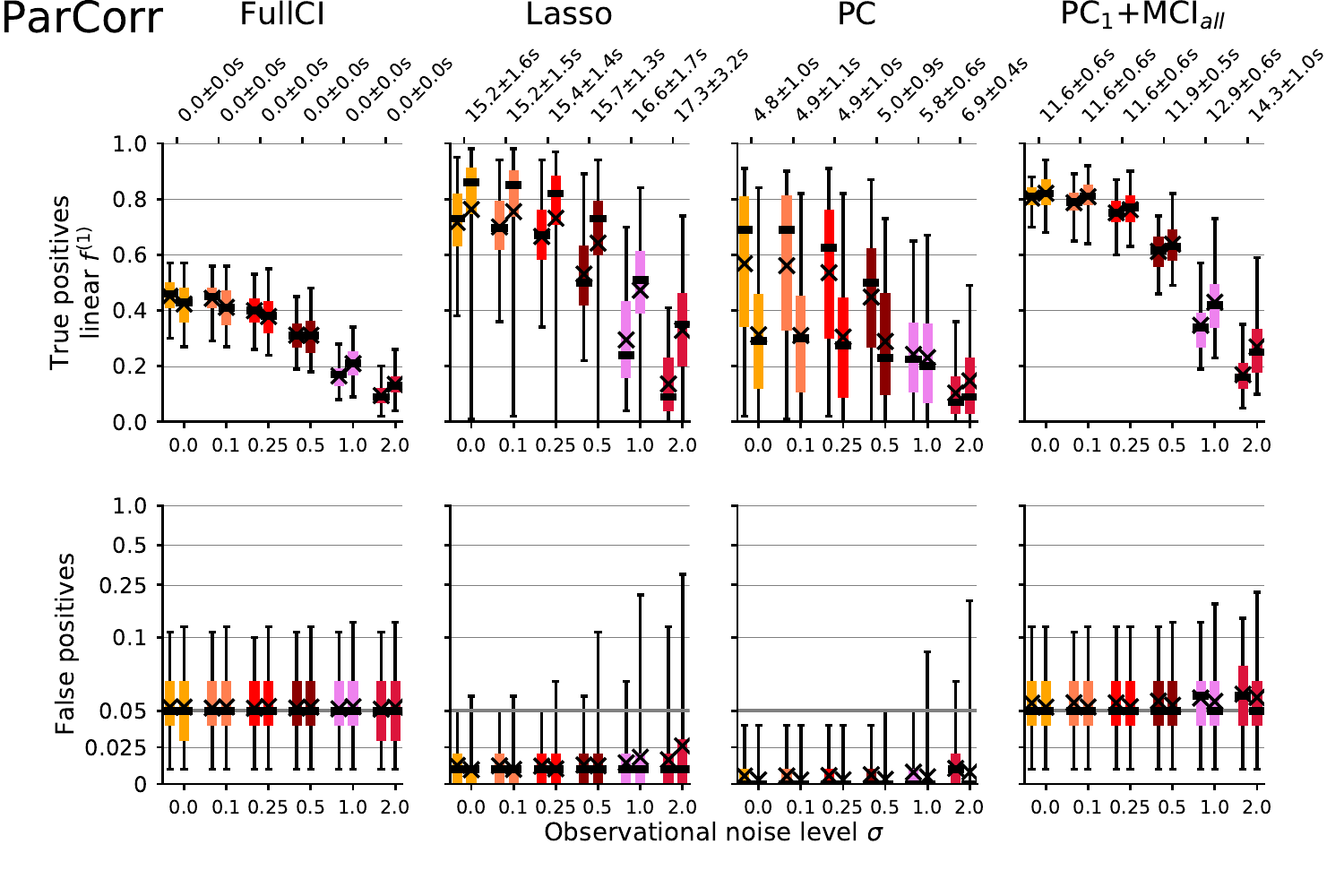}
\caption{
Numerical experiments for linear models with added observational noise, fixed time series length $T=150$, numbers of variables $N=20$ and number of links $L=N$. The detailed setup is listed in Tab.~\ref{tab:experiments} and Tab.~\ref{tab:methods} provides details on the evaluated methods.
Gaussian noise $\mathcal{N}\left(0, \sigma^2\right)$ with different standard deviations $\sigma$ was added to the data. Note that the original time series were generated with dynamical noise with standard deviation of one.
All methods display a similar sensitivity to observational noise with levels up to 25\% of the dynamical noise standard deviation having only minor effects. For levels of the same order as the dynamical noise we observe a stronger degradation with also the false positives not being well-controlled anymore since common drivers are essentially not well detected any longer. See ref.~\cite{Runge2018b} for a discussion on observational error.
}
\label{fig:algo_results_noise_SI}
\end{figure*}

\clearpage

{\small
\bibliographystyle{naturemag_noURL}

}


\end{document}